\renewcommand{\c}{\textcolor{black}}
\DeclareFontFamily{U}{mathb}{\hyphenchar\font45}
\DeclareFontShape{U}{mathb}{m}{n}{
<-6> mathb5 <6-7> mathb6 <7-8> mathb7
<8-9> mathb8 <9-10> mathb9
<10-12> mathb10 <12-> mathb12
}{}
\DeclareSymbolFont{mathb}{U}{mathb}{m}{n}
\DeclareMathSymbol{\llcurly}{\mathrel}{mathb}{"CE}
\DeclareMathSymbol{\ggcurly}{\mathrel}{mathb}{"CF}
\date{}
\begin{document}

\newtheorem{definition}{Definition}[section]
\newtheorem{prop}{Proposition}[section]
\newtheorem{thm}[prop]{Theorem}
\newtheorem{lem}[prop]{Lemma}
\newtheorem{cor}[prop]{Corollary}
\newtheorem{rem}{Remark}[section]
\newtheorem{ex}{Example}[section]
\newtheorem{proof}[prop]{Proof}
\renewcommand{\c}{\textcolor{blue}}

\title{\textbf{An Addendum to the Problem of Zero-Sum LQ Stochastic Mean-Field Dynamic Games\\ (Extended version)}}

\author{Samir Aberkane$^{1,2}$ and Vasile Dragan$^{3,4}$
\\ [1.mm]
\small\it $^1$ Université de Lorraine, CRAN, UMR 7039, Campus Sciences, BP 70239\\[-1mm]
\small\it VandÏuvre-les-Nancy Cedex, 54506, France\\[-1mm]
\small\it $^2$ CNRS, CRAN, UMR 7039, France\\[-1mm]
\small {\tt samir.aberkane@univ-lorraine.fr}\\
\small\it $^3$ Institute of Mathematics "Simion Stoilow"\\[-1mm]
\small\it of the Romanian Academy \\[-1.mm]
\small\it P.O.Box 1-764, RO-014700, Bucharest, Romania   \\ [-1.mm]
\small {\tt Vasile.Dragan@imar.ro}\\
\small\it $^4$ The Academy of the Romanian Scientists, Str. Ilfov, 3, Bucharest, Romania}

\date{}
\maketitle

\begin{abstract}
In this paper, we first address a linear quadratic mean-field game problem with a leader-follower structure. By adopting a Riccati-type approach, we show how one can obtain a state-feedback representation of the pairs of strategies which achieve an open-loop Stackelberg equilibrium in terms of the global solutions of a system of coupled matrix differential Riccati-type equations.
In the second part of this paper, we obtain necessary and sufficient conditions for the solvability of the involved coupled generalized Riccati equations.
\end{abstract}

\noindent\textbf{Keywords:} Stochastic Riccati equations - Mean field - Zero-sum stochastic dynamic games - Stackelberg equilibrium.

\section{Introduction}
Mean-field stochastic differential equations (SDEs), also called McKean--Vlasov SDEs, can be used to effectively characterize dynamical systems of large populations subject to a mean-field interaction (one can refer to \cite{Kac} and \cite{McKean} for pioneering works about the subject). Recently, the mean-field/McKean--Vlasov SDEs has been wildly used in mean-field game theory. This theory attracted a huge interest from the scientific community these last few years since the pioneering works \cite{Huang:06,Lasry:06,Lasry:06-1,Lasry:06-2}. One can refer for example to the textbooks \cite{Bensoussan,Carmona,Caines,Gueant} and the references therein. A particularly attractive subclass of mean-field/McKean--Vlasov type differential games is stochastic linear quadratic (LQ) mean field games. This is due to its tractable analytical structure \cite{Barreiro,Bensoussan1,Graber,Huang2,Li,Moon:20,Sun:21bis,Tian}.\\
In this paper we will focus particularly on LQ zero-sum differential games driven by SDEs with McKean--Vlasov type. More specifically, we are studying in the present work the problem of the existence of an open-loop Stackelberg equilibrium for a two persons zero-sum LQ McKean--Vlasov type differential game. It is known that the concept of Stackelberg equilibrium is defined for a differential game with a hierarchical structure in decision making, often named \textit{leader-follower structure} of the considered game. We show how one can obtain a state-feedback representation of the pairs of strategies which achieve an open-loop Stackelberg equilibrium in terms of the global solutions of a system of coupled generalized coupled matrix differential Riccati-type equations with specific sign constraints on the quadratic parts of such nonlinear equations. Such constraints lead to a sign indefiniteness of the quadratic parts of the involved Riccati equations. We mention here that such a system of coupled Riccati equations has been first introduced by \cite{Yong:13} in order to solve an LQ mean-field control problem. The solvability of these two Riccati equations is established in \cite{Yong:13} under certain positivity conditions and \cite{Sun:17} showed that the solvability is equivalent to the uniform convexity of the cost functional. Note however that in the case of zero-sum LQ McKean--Vlasov type differential games, there are few results in the literature regarding the solvability of the associated Riccati equations with indefinite sign of there quadratic parts. We cite here the efforts made by \cite{Moon:20,Tian} in order to tackle this problem. A review of the solvability conditions proposed in \cite{Moon:20,Tian} showed that they are obtained under constraints on the system's parameters. Also, the proposed existence conditions are only sufficients. Most importantly, the given conditions do not take into account the sign conditions constraints imposed on the quadratic part of the Riccati equations involved in the solution of the LQ dynamic game. Knowing that these sign conditions play a key role in the solution process, this represents in our opinion a strong limitation for such results. Very recently, \cite{Sun:21bis} addressed such a challenging problem under the general framework of \textit{uniform convexity/concavity}. The solvability conditions proposed by  \cite{Sun:21bis} take explicitly into account the sign constraints on the quadratic terms of the Riccati equations. However the author succeeded only to give \textit{sufficient solvability conditions}. One of the main contributions of this paper is to propose \textit{necessary and sufficient solvability conditions}. We also adopt in this paper the uniform convexity/concavity framework making the obtained results as general as the ones obtained in \cite{Sun:21bis}. By adequately defining some auxiliary performances criteria, we show that the uniform convexity/concavity of such functionals are equivalent with the solvability of the considered system of matrix Riccati-differential equations.\\
To summarize, we list below the main contributions of the paper:
\begin{itemize}
\item [i)] We have addressed the problem of the existence of an open-loop Stackelberg equilibrium for a two persons zero-sum LQ McKean--Vlasov type differential game. We show how one can obtain a state-feedback representation of the pairs of strategies which achieve an open-loop Stackelberg equilibrium in terms of the global solutions of a system of coupled generalized coupled matrix differential Riccati-type equations (see \textbf{Theorem 2.2}).
\item [ii)] In the main result of Section 3 (\textbf{Theorem 3.15}), we provided necessary and sufficient conditions for the global existence of the solution of the considered system of coupled Riccati equations verifying specific sign conditions. These conditions are formulated under the general uniform convexity/concavity framework. 
\item [iii)] We have shown that the class of Riccati equations considered in \cite{Sun:21bis} can be viewed as a particular case of the problem addressed in this paper. As a matter of fact, by specializing the result given in Theorem 3.15 we have obtained necessary and sufficient conditions counterpart (see \textbf{Theorem 3.17}) of the results given by Theorem 4.2 and 4.4 from \cite{Sun:21bis} where only sufficient conditions have been proposed.
\end{itemize}        
\noindent This paper is organized as follows: In Section 2 we formulate and solve the game problem. In Section 3 the existence conditions of the solution of the involved Riccati equations are given. Section 4 concludes this paper.\\
\noindent\textbf{Notations}. $A^\top$ stands for the transpose of the matrix $A$. In block matrices, $\star$ indicates symmetric terms: $\left(%
\begin{array}{cc}
 A & B \\
  B^T & C \\
\end{array}%
\right)=\left(%
\begin{array}{cc}
  A & \star \\
  B^T & C \\
\end{array}%
\right)=\left(%
\begin{array}{cc}
  A & B \\
  \star & C \\
\end{array}%
\right)$.
The expression $MN\star$ is equivalent to $MNM^T$ while $M\star$ is equivalent to $MM^T$. 
\section{The game}
Consider the following controlled linear mean-field type stochastic differential equation (SDE):
\begin{equation}\label{e.1}
\begin{cases}
dx(t)=\left\{A_0(t)x(t)+\bar{A}_0(t)\mathbb{E}[x(t)]+B_0(t)u(t)+\bar{B}_0(t)\mathbb{E}[u(t)]\right\}dt\\
\quad+\sum_{j=1}^r\left\{A_j(t)x(t)+\bar{A}_j(t)\mathbb{E}[x(t)]+B_j(t)u(t)+\bar{B}_j(t)\mathbb{E}[u(t)]\right\}dw_j(t)\\
x(s)=x_s\in \mathbb{R}^n
\end{cases}
\end{equation}
$t\in[s,\; T]$.
In ({\ref{e.1}}), $\{w(t)\}_{t\geq 0}\;(w(t)=(w_1(t),...,w_r(t))^{\top})$ is an $r$-dimensional standard Wiener process defined on a given probability space $(\Omega,{\cal{F}},{\cal{P}})$.
Let $\mathcal{F}_t$, $t \geq 0$ denotes the family of $\sigma$-algebras $\mathcal{F}_t =\sigma(w(s),0 \leq s \leq t)$. We denote by $\mathbb{E} [\cdot] $ the mathematical expectation.
We assume that: $t \rightarrow A_k(t):[0,\; T] \rightarrow{\mathbb{R}}^{n\times n}$, $t \rightarrow \bar{A}_k(t):[0,\; T] \rightarrow{\mathbb{R}}^{n\times n}$, $t\rightarrow B_k(t):[0,\; T]\rightarrow{\mathbb{R}}^{n\times m}$, $t\rightarrow \bar{B}_k(t):[0,\; T]\rightarrow{\mathbb{R}}^{n\times m}$, $0\leq k\leq r$, are continuous matrix valued  functions.
We define the class ${\cal{U}}_{ad}$ of the admissible controls consisting of all stochastic processes, $u=\{u(t)\}_{t\geq s}\in L_{w}^2([s,T];{\mathbb{R}}^m)$.
In this work,  the space $L_{w}^2({\cal{I}};{\mathbb{R}}^d)$ stands  for the vector space  of stochastic processes $v:{\cal I}\times \Omega\to {\mathbb R}^d$ which are non-anticipative with respect to the filtration $\{{\cal F}_t\}_{t\geq 0}$ and satisfy
$$\int_{\cal I} {\mathbb E}[|v(t)|^2]dt < \infty.$$ For more details regarding the properties of such  stochastic processes, we refer to section 1.9 from \cite{carte2013}.
Invoking  Proposition 2.1 in \cite{yong-2017}, one deduces that  ({\ref{e.1}}) admits a unique solution $x(\cdot)\in L_{w}^2([s,T];{\mathbb{R}}^n)$.\\
\noindent To (\ref{e.1}) we associate the following quadratic cost functional on finite horizon time:
\begin{align}\label{e.3}
\mathcal{J}(s,T,x_s;u)&=\frac{1}{2}\mathbb{E}\left\{\langle G_Tx(T), x(T)\rangle+\langle\bar{G}_T\mathbb{E}[x(T)],\;\mathbb{E}[x(T)]\rangle\right.\notag\\
&\left.+\int_s^T\left(\begin{array}{c}x_u(t) \\u(t) \\ \mathbb{E}[x_u(t)]\\\mathbb{E}[u(t)]\end{array}\right)^{\top}\left(\begin{array}{cccc}M(t) & L(t) & \mathbf{0} & \mathbf{0} \\L^{\top}(t) & R(t) & \mathbf{0} & \mathbf{0} \\\mathbf{0} & \mathbf{0} & \bar{M}(t) & \bar{L}(t) \\\mathbf{0} & \mathbf{0} & \bar{L}^{\top}(t) & \bar{R}(t)\end{array}\right)\star dt\right\}
\end{align}
where $x_u(t)$, $t\geq s\geq 0$ is the solution of the initial problem (\ref{e.1}) determined by the input $u(t)$, $G_T$, $\bar{G}_T$ $\in \mathcal{S}_n$ and $t\rightarrow L(t):[0,\; T]\rightarrow{\mathbb{R}}^{n\times m}$, $t\rightarrow \bar{L}(t):[0,\; T]\rightarrow{\mathbb{R}}^{n\times m}$, $t\rightarrow M(t):[0,\; T]\rightarrow{\cal{S}}_n$, $t\rightarrow \bar{M}(t):[0,\; T]\rightarrow{\cal{S}}_n$, $t\rightarrow R(t):[0,\; T]\rightarrow{\cal{S}}_m$, $t\rightarrow \bar{R}(t):[0,\; T]\rightarrow{\cal{S}}_m$ are continuous matrix valued  functions. Here and in the sequel, ${\cal S}_q\subset {\mathbb R}^{q\times q}$ stands for the vector space of symmetric matrices of dimension $q\times q$. Obviously, for any $u=\{u(t)\}_{t\geq s}\in L_{w}^2([s,T];{\mathbb{R}}^m)$, $\mathcal{J}(s,T,x_s;u)$ is well defined.\\
For any $t\in [0,\; T]$ we set:
\begin{align}\label{e.4}
&B_k(t)=\left(
                   \begin{array}{cc}
                     B_{k1}(t) & B_{k2}(t) \\
                   \end{array}
                 \right),
B_{kj}(t)\in{\mathbb{R}}^{n\times m_j},\notag\\
&\bar{B}_k(t)=\left(
                   \begin{array}{cc}
                     \bar{B}_{k1}(t) & \bar{B}_{k2}(t) \\
                   \end{array}
                 \right),
\bar{B}_{kj}(t)\in{\mathbb{R}}^{n\times m_j}, 0\leq k\leq r,
L(t)=\left( L_1(t)\quad L_2(t)\right), \nonumber\\ &L_j(t)\in{\mathbb R}^{n\times m_j}, \bar{L}(t)=\left( \bar{L}_1(t)\quad \bar{L}_2(t)\right), \bar{L}_j(t)\in{\mathbb R}^{n\times m_j} , j=1,2,\notag\\
&R(t)=\left(
                                                                                \begin{array}{cc}
                                                                           R_{11}(t) & R_{12}(t) \\
                                                                                  R_{12}^{\top}(t) & R_{22}(t) \\
                                                                                \end{array}
                                                                              \right),
R_{lj}(t)\in{\mathbb{R}}^{m_l\times m_j}, \bar{R}(t)=\left(
                                                                                \begin{array}{cc}
                                                                                  \bar{R}_{11}(t) & \bar{R}_{12}(t) \\
                                                                                  \bar{R}_{12}^{\top}(t) & \bar{R}_{22}(t) \\
                                                                                \end{array}
                                                                              \right),\notag\\
                                                                              &\bar{R}_{lj}(t)\in{\mathbb{R}}^{m_l\times m_j},  l,j=1,2.
\end{align}
In order to ease the description of the game, we rewrite (\ref{e.1}) and (\ref{e.3}) according to the partition $u=\left(\begin{array}{cc}u_1^{\top} & u_2^{\top}\end{array}\right)^{\top}$ of the input $u$ and the partition above of the coefficients. Thus we obtain:
\begin{equation}\label{e.5}
\begin{cases}
dx(t)=\left\{A_0(t)x(t)+\bar{A}_0(t)\mathbb{E}[x(t)]+B_{01}(t)u_1(t)+\bar{B}_{01}(t)\mathbb{E}[u_1(t)]\right.\\
\quad\left.+B_{02}(t)u_2(t)+\bar{B}_{02}(t)\mathbb{E}[u_2(t)]\right\}dt\\
\quad+\sum_{j=1}^r\left\{A_j(t)x(t)+\bar{A}_j(t)\mathbb{E}[x(t)]+B_{j1}(t)u_1(t)+\bar{B}_{j1}(t)\mathbb{E}[u_1(t)]\right.\\
\quad\left.+B_{j2}(t)u_2(t)+\bar{B}_{j2}(t)\mathbb{E}[u_2(t)]\right\}dw_j(t)\\
x(s)=x_s\in \mathbb{R}^n
\end{cases}
\end{equation}
\begin{align}\label{e.6}
\mathcal{J}(s,T,x_s;u_1(\cdot),u_2(\cdot))&=\frac{1}{2}\mathbb{E}\left\{\langle G_Tx(T), x(T)\rangle+\langle\bar{G}_T\mathbb{E}[x(T)],\;\mathbb{E}[x(T)]\rangle\right.\notag\\
&\left.+\int_s^T\left\{\left(\begin{array}{c}x_u(t) \\u_1(t) \\u_2(t)\end{array}\right)^{\top}\left(\begin{array}{ccc}M(t) & L_1(t) & L_2(t) \\\star & R_{11}(t) & R_{12}(t) \\\star & \star & R_{22}(t)\end{array}\right)\star\right.\right.\notag\\
&\left.\left.+\left(\begin{array}{c}\mathbb{E}[x_u(t)] \\\mathbb{E}[u_1(t)] \\\mathbb{E}[u_2(t)]\end{array}\right)^{\top}\left(\begin{array}{ccc}\bar{M}(t) & \bar{L}_1(t) & \bar{L}_2(t) \\\star & \bar{R}_{11}(t) & \bar{R}_{12}(t) \\\star & \star & \bar{R}_{22}(t)\end{array}\right)\star\right\} dt\right\}
\end{align}
As usual, the inputs $u_k:[s,\; T]\rightarrow \mathbb{R}^{m_k}$ will be called strategies available for player $k$, $k=1,\;2$. These are functions with the property that $\left(\begin{array}{cc}u_1^{\top}(\cdot) & u_2^{\top}(\cdot)\end{array}\right)^{\top}$ are in a subset of $\mathcal{U}_{\text{ad}}$. Such a subset of $\mathcal{U}_{\text{ad}}$ will be called the set of admissible strategies. In the description of a set of admissible strategies one takes into account the type of information available for player $k$ in order to compute the value of $u_k(t)$.\\
\noindent In a two player zero-sum differential game, the two players have opposite aims. In the present work, we assume that the aim of the Player $1$ is to choose the strategies $u_1(\cdot)$ from the set of its admissible strategies in order to maximize the value of the objective function, while the Player $2$ chooses its strategies $u_2(\cdot)$ from the set of its admissible strategies in order to minimize the value of the objective function. Roughly speaking, the solution of a problem described by a zero-sum differential game with two players is a pair of strategies $(\tilde{u}_1(\cdot),\;\tilde{u}_2(\cdot))$ named a pair of equilibrium strategies. Such a pair of strategies, if it exists, is the best choice for both players according to the adopted definition. Among the most frequently studied types of equilibria are Nash equilibrium strategy and Stackelberg equilibrium strategy.\\
In the present work we are studying the problem of the existence of an open-loop Stackelberg equilibrium for a two persons zero-sum linear quadratic mean-field differential game described by the controlled system (\ref{e.5}) and the objective function (\ref{e.6}). It is known that the concept of Stackelberg equilibrium is defined for a differential game with a hierarchical structure in decision making, often named \textit{leader-follower structure} of the considered game.\\
In this work, we consider the case when Player 1 is the leader and the Player 2 is the follower. this means that Player 1 is the first to announce its strategy $u_1(\cdot)\in \mathcal{U}_1\triangleq L_w^2\left\{[s,T];\; \mathbb{R}^{m_1}\right\}$. Next, Player 2 will choose a strategy $\tilde{u}_2(\cdot ;u_1(\cdot))$ (depending upon the announced strategy $u_1(\cdot)$ of the leader) which minimizes the mapping $u_2(\cdot)\rightarrow \mathcal{J}(s,T,x_s;u_1(\cdot),u_2(\cdot)):\; L_w^2\left\{[s,T];\; \mathbb{R}^{m_2}\right\}\rightarrow \mathbb{R}$.\\
Knowing the strategy adopted by the follower, the leader will choose a strategy $\tilde{u}_1(\cdot)$ in order to maximize the value of the mapping $u_1(\cdot)\rightarrow \mathcal{J}(s,T,x_s;u_1(\cdot),\tilde{u}_2(\cdot;u_1(\cdot))):\; L_w^2\left\{[s,T];\; \mathbb{R}^{m_1}\right\}\rightarrow \mathbb{R}$.\\
Setting $\tilde{u}_2(\cdot)\triangleq \tilde{u}_2(\cdot;\tilde{u}_1(\cdot))$ one obtains a pair $\left(\tilde{u}_1(\cdot),\tilde{u}_2(\cdot)\right)$ which forms an open-loop Stackelberg equilibrium for the considered zero-sum differential game.\\
The procedure briefly described above suggests the following definition for an open-loop Stackelberg equilibrium:
\begin{definition}\label{D2.1}
A pair of admissible strategies $\left(\tilde{u}_1(\cdot),\tilde{u}_2(\cdot)\right)$ achieves an open-loop Stackelberg equilibrium for the two players zero-sum mean-field differential game described by the controlled system (\ref{e.5}), the performance criterion (\ref{e.6})
and the set of admissible strategies $\mathcal{U}_{\text{ad}}=\mathcal{U}_1\times \mathcal{U}_2$ where $\mathcal{U}_k=L_w^2\left\{[s,T];\; \mathbb{R}^{m_k}\right\}, k=1,2$, if
\begin{equation}\label{e.7}
\underset{u_1(\cdot)\in \mathcal{U}_1}{\sup}\underset{u_2(\cdot)\in \mathcal{U}_2}{\inf}\mathcal{J}(s,T,x_s;u_1(\cdot),u_2(\cdot))=\mathcal{J}(s,T,x_s;\tilde{u}_1(\cdot),\tilde{u}_2(\cdot))=\underset{u_2(\cdot)\in \mathcal{U}_2}{\inf}\mathcal{J}(s,T,x_s;\tilde{u}_1(\cdot),u_2(\cdot))
\end{equation}
\end{definition}
\noindent Since the pioneering Stackelberg's work \cite{Stackelberg:34} the theory of Stackelberg games was developed in connection with applications in economy, finance, engineering and so on.
A historical perspective on this topic my be found in \cite{Sun:21}. In the present work, our aim  is to show how one can obtain a state-feedback representation of the pairs of strategies which achieve an open-loop Stackelberg equilibrium.
To this end, we consider the following system of terminal value problems (TVPs) associated to coupled matrix differential Riccati-type equations:
\begin{equation}\label{e.9}
\begin{cases}
-{\dot{X}}(t)\!=\!A_0^{\top}(t)X(t)\!+\!X(t)A_0(t)\!+\!\sum\limits_{k=1}^rA_k^{\top}(t)X(t)A_k(t)\!+M(t)\\
-[X(t)B_0(t)\!+\!\!\sum\limits_{k=1}^r\!A_k^{\top}(t)X(t)B_k(t)+L(t)][R(t)\\
+\sum\limits_{k=1}^rB_k^{\top}(t)X(t)B_k(t)]^{-1}[B_0^{\top}(t)X(t)\!+\!\!\sum\limits_{k=1}^rB_k^{\top}(t)X(t)A_k(t)+L^{\top}(t)]\\
X(T)=G_T
\end{cases}
\end{equation}
and
\begin{equation}\label{e.10}
\begin{cases}
-{\dot{\hat{X}}}(t)\!=\!\hat{A}_0^{\top}(t)\hat{X}(t)\!+\!\hat{X}(t)\hat{A}_0(t)\!+\!\sum\limits_{k=1}^r\hat{A}_k^{\top}(t)X(t)\hat{A}_k(t)\!+\hat{M}(t)\\
-[\hat{X}(t)\hat{B}_0(t)\!+\!\!\sum\limits_{k=1}^r\!\hat{A}_k^{\top}(t)X(t)\hat{B}_k(t)+\hat{L}(t)][\hat{R}(t)\\
+\sum\limits_{k=1}^r\hat{B}_k^{\top}(t)X(t)\hat{B}_k(t)]^{-1}[\hat{B}_0^{\top}(t)\hat{X}(t)\!+\!\!\sum\limits_{k=1}^r\hat{B}_k^{\top}(t)X(t)\hat{A}_k(t)
+\hat{L}^{\top}(t)]\\
\hat{X}(T)=\hat{G}_T
\end{cases}
\end{equation}
$t\in[0,\; T]$ with the unknown functions $t\rightarrow {X}(t)$, $X(t)=X^{\top}(t)$, and $t\rightarrow {\hat{X}}(t)$, $\hat{X}(t)=\hat{X}^{\top}(t)$, where $\hat{A}_k(\cdot)=A_k(\cdot)+\bar{A}_k(\cdot)$ and $\hat{B}_k(\cdot)=B_k(\cdot)+\bar{B}_k(\cdot)$, $0\leq k\leq r$, $\hat L(\cdot)=L(\cdot)+\bar L(\cdot), \hat R(\cdot)= R(\cdot)+\bar R(\cdot), \hat M(\cdot)= M(\cdot)+\bar M(\cdot)$.\\
Let $\mathcal{I}_X\subset [0,\; T]$ ($\mathcal{I}_{\hat X}\subset [0,\; T]$) be the maximal interval where the solution to the TVP (\ref{e.9}) (the solution to the TVP (\ref{e.10})) is defined.\\
We denote:
\begin{subequations}\label{e.106}
\begin{align}
\label{e.106a}
&\mathbb{R}(t,X(t))=\left(\begin{array}{cc}\mathbb{R}_{11}(t,X(t)) & \mathbb{R}_{12}(t,X(t)) \\\mathbb{R}_{12}^\top(t,X(t)) & \mathbb{R}_{22}(t,X(t))\end{array}\right)\\
\label{e.106b}
&\mathbb{R}_{ik}(t,X(t))\triangleq R_{ik}(t)+\sum_{j=1}^r B_{ji}^{\top}(t)X(t)B_{jk}(t),\; i,k=1,2
\end{align}
\end{subequations}
\begin{subequations}\label{e.107}
\begin{align}
\label{e.107a}
&\hat{\mathbb{R}}(t,X(t))=\left(\begin{array}{cc}\hat{\mathbb{R}}_{11}(t,X(t)) & \hat{\mathbb{R}}_{12}(t,X(t)) \\ \hat{\mathbb{R}}_{12}^\top(t,X(t)) & \hat{\mathbb{R}}_{22}(t,X(t))\end{array}\right)\\
\label{e.107b}
&\hat{\mathbb{R}}_{ik}(t,X(t))\triangleq \hat{R}_{ik}(t)+\sum_{j=1}^r \hat{B}_{ji}^{\top}(t)X(t)\hat{B}_{jk}(t),\; i,k=1,2
\end{align}
\end{subequations}
It is known that the solution of a matrix Riccati-type differential equation (RDE) is involved in the construction of the solution of various linear quadratic optimal control problems. The applicability of the solutions of the RDEs (\ref{e.9}) and (\ref{e.10}) to a certain linear quadratic optimal control problem is closely related to the sign of the the matrices introduced via (\ref{e.106})-(\ref{e.107}). Thus, in \cite{Sun:17}, it was shown that if the solutions $X(\cdot)$, $\hat{X}(\cdot)$ to the RDEs (\ref{e.9}) and (\ref{e.10}), respectively, are defined on the whole interval $[0,\; T]$ and additionally, the following inequalities hold:
\begin{subequations}\label{e.108}
\begin{align}
\label{e.108a}
&\mathbb{R}(t,X(t))\geq \gamma I_m\\
\label{e.108b}
&\hat{\mathbb{R}}(t,X(t))\geq \hat{\gamma} I_m
\end{align}
\end{subequations}
for all $t\in [0,\; T]$, $\gamma$, $\hat{\gamma}$ being positive constants, then these solutions are involved in the computation of the gain matrices of the optimal control in a mean-field LQ optimal control problem. Also, \cite{Sun:17} proposed conditions which guarantee the global existence of the solutions of the RDEs of type (\ref{e.9}) and (\ref{e.10}) satisfying sign conditions of type (\ref{e.108}). In \cite{Sun:21bis} the authors provided sufficient conditions which guarantee the global existence of the whole interval $[0,\; T]$ of the solutions $X(\cdot)$, $\hat X(\cdot)$ of the RDEs (\ref{e.9}) and (\ref{e.10}) satisfying sign conditions of the form:
\begin{subequations}\label{e.109}
\begin{align}
\label{e.109a}
&\mathbb{R}_{11}(t,X(t))\geq \mu_1 I_{m_1}\\
\label{e.109b}
&{\mathbb{R}}_{22}(t,X(t))\leq -\mu_2 I_{m_2};\quad \text{for all $t\in [0,\; T]$}
\end{align}
\end{subequations}
and
\begin{subequations}\label{e.110}
\begin{align}
\label{e.110a}
&{\hat{\mathbb{R}}}_{11}(t,X(t))\geq {\hat\mu}_1 I_{m_1}\\
\label{e.110b}
&{\hat{\mathbb{R}}}_{22}(t,X(t))\leq -\hat{\mu}_2 I_{m_2};\quad \text{for all $t\in [0,\; T]$}
\end{align}
\end{subequations}
$\mu_k$, $\hat\mu_k$, $k=1,2$ being positive constants.\\
One shows that in this case the solutions $X(\cdot)$, $\hat X(\cdot)$ of (\ref{e.9}) and (\ref{e.10}), respectively, are involved in the construction of an open-loop saddle point for a mean field zero-sum LQ differential game.\/
Our aim in the present work is to provide a set of necessary and sufficient conditions which guarantee that the solutions $X(\cdot)$, $\hat X(\cdot)$ to RDEs (\ref{e.9}) and (\ref{e.10}) are defined on the whole interval $[0,\; T]$ and satisfy the conditions:
\begin{subequations}\label{e.111}
\begin{align}
\label{e.111a}
&\mathbb{R}_{22}(t,X(t))\geq \nu_2 I_{m_2}\\
\label{e.111b}
&{\mathbb{R}}_{22}^\sharp(t,X(t))\triangleq {\mathbb{R}}_{11}(t,X(t))-{\mathbb{R}}_{12}(t,X(t)){\mathbb{R}}_{22}^{-1}(t,X(t)){\mathbb{R}}_{12}^\top(t,X(t)) \leq -\nu_1 I_{m_1};\quad t\in [0,\; T]
\end{align}
\end{subequations}
and
\begin{subequations}\label{e.112}
\begin{align}
\label{e.112a}
&\hat{\mathbb{R}}_{22}(t,X(t))\geq \hat{\nu}_2 I_{m_2}\\
\label{e.112b}
&\hat{{\mathbb{R}}}_{22}^\sharp(t,X(t))\triangleq {\hat{\mathbb{R}}}_{11}(t,X(t))-{\hat{\mathbb{R}}}_{12}(t,X(t)){\hat{\mathbb{R}}}_{22}^{-1}(t,X(t)){\hat{\mathbb{R}}}_{12}^\top(t,X(t)) \leq -\hat{\nu}_1 I_{m_1};\quad t\in [0,\; T]
\end{align}
\end{subequations}
$\nu_k$, $\hat \nu_k$, $k=1,2$ being positive constants.\\
We shall show that in this case $X(\cdot)$, $\hat{X}(\cdot)$ allow us to construct a state-feedback representation of an open-loop Stackelberg equilibrium for a two person zero-sum mean-field LQ differential game. Also, we shall provide a set of necessary and sufficient conditions which guarantee the global existence of the solutions $X(\cdot)$, $\hat{X}(\cdot)$ of the RDEs (\ref{e.9}) and (\ref{e.10}) which satisfy the sign conditions (\ref{e.111}) and (\ref{e.112}). We will also provide a set of necessary and sufficient conditions which guarantee the global existence of the solutions $X(\cdot)$, $\hat{X}(\cdot)$ of the RDEs (\ref{e.9}) and (\ref{e.10}) which satisfy the sign conditions (\ref{e.109}) and (\ref{e.110}). This would be viewed as an improvement of the results proved in Theorem 4.2 from \cite{Sun:21bis} where only sufficient conditions are provided.
\begin{definition}\label{D2.3}
For a function $H(\cdot): {\cal I}\subset {\mathbb R} \to {\cal S}_n $ we say that:\\
a) $H(\cdot)$ is uniform positive on ${\cal I}$ and we shall write $H(t)\ggcurly0$, $t\in{\cal I}$ if there exists $\mu>0$ such that $H(t)\geq  \mu I_n $  for all $t\in{\cal I}$.\\
b) $H(\cdot)$ is uniform negative on ${\cal I}$ and we shall write $H(t)\llcurly0$, $t\in{\cal I}$ if $-H(\cdot)$ is uniform positive on ${\cal I}$.
\end{definition}
\begin{rem}\label{rem200}
a) If  ${\cal I}\subset {\mathbb R}$ is a compact interval (close and bounded interval), then any continuous function $H(\cdot): {\cal I}\to {\cal S}_n$ is :\\
(i) uniform positive on ${\cal I}$ if and only if $H(t)> 0$ for all $t\in{\cal I}$;\\
(ii) uniform negative on ${\cal I}$ if and only if $H(t)< 0$ for all $t\in{\cal I}$.\\
b) According with the Definition \ref{D2.3} the sign conditions (\ref{e.111})- (\ref{e.112}) may be written as:
\begin{itemize}
\item [$\alpha$)] ${\mathbb R}_{22}(t,{ X}(t))\ggcurly0$, $t\in{\cal I}_X$;
\item [$\beta$)] $ {\mathbb R}_{22}^{\sharp}(t,{X}(t))\llcurly0$, $t\in{\cal I}_X$;
\item [$\gamma$)] $ \hat{{\mathbb R}}_{22}(t,{X}(t))\ggcurly0$, $t\in{\cal I}_X$;
\item[$\delta$)] $ \hat{{\mathbb R}}_{22}^{\sharp}(t,{X}(t))\llcurly0$, $t\in {\cal I}_X$.
\end{itemize}
\end{rem}
\begin{rem}\label{rem.1}
Let ${X}(\cdot): [0,\; T]\to {\cal S}_n$ be a solution of the TVP (\ref{e.9}) which satisfies the sign conditions (\ref{e.111}) and (\ref{e.112}).
Under this condition the corresponding matrix ${\mathbb R}(t,X(t))$ admits the factorization:
\begin{equation}\label{e.15}
{\mathbb R}(t, X(t))={\mathbb V}^\top(t, X(t)) diag (-I_{m_1}, I_{m_2}){\mathbb V}(t, X(t))
\end{equation}
where ${\mathbb V}(t, X(t))=\left(\begin{array}{cc} {\mathbb V}_{11}(t)& 0\\ {\mathbb V}_{21}(t) &{\mathbb V}_{22}(t)\end{array}\right)$.
A possible choice of the block components ${\mathbb V}_{jk}(t)$ is:
\begin{align}\label{e.16}
&{\mathbb V}_{11}(t)=(-{\mathbb R}_{22}^{\sharp}(t,X(t))^{\frac{1}{2}}\notag\\
&{\mathbb V}_{21}(t)=(R_{22}(t)+\sum_{k=1}^r B_{k2}^\top(t) X(t) B_{k2}(t))^{\frac{-1}{2}}(R_{12}(t)+\sum_{k=1}^rB_{k1}^\top(t)X(t)B_{k2}(t))^\top\notag\\
&{\mathbb V}_{22}(t)=(R_{22}(t)+\sum_{k=1}^r B_{k2}^\top(t)X(t) B_{k2}(t))^{\frac{1}{2}}
\end{align}
Similarly
\begin{equation}\label{e.17}
\hat{{\mathbb R}}(t, X(t))=\hat{{\mathbb V}}^\top(t, X(t)) diag (-I_{m_1}, I_{m_2})\hat{{\mathbb V}}(t, X(t))
\end{equation}
where $\hat{{\mathbb V}}(t, X(t))=\left(\begin{array}{cc} \hat{{\mathbb V}}_{11}(t)& 0\\ \hat{{\mathbb V}}_{21}(t) &\hat{{\mathbb V}}_{22}(t)\end{array}\right)$.
A possible choice of the block components $\hat{{\mathbb V}}_{jk}(t)$ is:
\begin{align}\label{e.18}
&\hat{{\mathbb V}}_{11}(t)=(-\hat{{\mathbb R}}_{22}^{\sharp}(t,X(t))^{\frac{1}{2}}\notag\\
&\hat{{\mathbb V}}_{21}(t)=(\hat{R}_{22}(t)+\sum_{k=1}^r \hat{B}_{k2}^\top(t) X(t) \hat{B}_{k2}(t))^{\frac{-1}{2}}(\hat{R}_{12}(t)+\sum_{k=1}^r\hat{B}_{k1}^\top(t)X(t)\hat{B}_{k2}(t))^\top\notag\\
&\hat{{\mathbb V}}_{22}(t)=(\hat{R}_{22}(t)+\sum_{k=1}^r \hat{B}_{k2}^\top(t)X(t) \hat{B}_{k2}(t))^{\frac{1}{2}}
\end{align}
\end{rem}
Let us now define:
\begin{equation}
\begin{cases}
x^1(t)=x(t)-\mathbb{E}[x(t)]\\
x^2(t)=\mathbb{E}[x(t)]\\
u^1(t)=u(t)-\mathbb{E}[u(t)]\\
u^2(t)=\mathbb{E}[u(t)]
\end{cases}
\end{equation}
$t\geq s\geq 0$. The dynamics of the new state variables $x^1(\cdot)$ and $x^2(\cdot)$ are given by:
\begin{equation}\label{e.x1}
\begin{cases}
dx^1(t)=\left\{A_0(t)x^1(t)+B_0(t)u^1(t)\right\}dt\\
\quad+\sum_{j=1}^r\left\{A_j(t)x^1(t)+B_j(t)u^1(t)+\hat{A}_j(t)x^2(t)+\hat{B}_j(t)u^2(t)\right\}dw_j(t)\\
x^1(s)=0
\end{cases}
\end{equation}
\begin{equation}\label{e.x2}
\begin{cases}
dx^2(t)=\left\{\hat{A}_0(t)x^2(t)+\hat{B}_0(t)u^2(t)\right\}dt\\
x^2(s)=x_s\in \mathbb{R}^n
\end{cases}
\end{equation}
\noindent Let $x_u^1(\cdot)$ be the solution of the equation
\begin{align}\label{e.26}
&dx^1(t)=\left\{A_0(t)x^1(t)+B_0(t)u^1(t)\right\}dt\notag\\
&+\sum_{j=1}^r\left\{A_j(t)x^1(t)+B_j(t)u^1(t)+\hat{A}_j(t)x^2_{u^2}(t)+\hat{B}_j(t)u^2(t)\right\}dw_j(t)
\end{align}
with $x_u^1(s)=0$, $x^2_{u^2}(\cdot)$ being the solution of the equation
\begin{equation}\label{e.27}
dx^2(t)=\left\{\hat{A}_0(t)x^2(t)+\hat{B}_0(t)u^2(t)\right\}dt
\end{equation}
with $x^2_{u^2}(s)=x_s\in \mathbb{R}^n$, $t\in[s,\; T]$. Define $\tilde{x}^1(\cdot)$ as the solution of the equation
\begin{align}\label{e.28}
&dx^1(t)=\left[A_0(t)+B_0(t)F(t)\right]x^1(t)dt\notag\\
&+\sum_{j=1}^r\left\{\left[A_j(t)+B_j(t)F(t)\right]x^1(t)+\left[\hat{A}_j(t)+\hat{B}_j(t)\hat{F}(t)\right]\tilde{x}^2(t)\right\}dw_j(t)
\end{align}
with $\tilde{x}^1(s)=0$, $\tilde{x}^2(\cdot)$ being the solution of the equation
\begin{equation}\label{e.29}
dx^2(t)=\left[\hat{A}_0(t)+\hat{B}_0(t)\hat{F}(t)\right]x^2(t)dt
\end{equation}
with $\tilde{x}^2(s)=x_s\in \mathbb{R}^n$, where:
\begin{align}\label{e.22}
F(t)=-\left(R(t)+\sum_{j=1}^rB_j^{\top}(t)X(t)B_j(t)\right)^{-1}\left(B_0^{\top}(t)X(t)+\sum_{j=1}^rB_j^{\top}(t)X(t)A_j(t)+L^{\top}(t)\right)
\end{align}
\begin{align}\label{e.23}
\hat{F}(t)= -\left(\hat{R}(t)+\sum_{j=1}^r\hat{B}_j^{\top}(t)X(t)\hat{B}_j(t)\right)^{-1}
\left(\hat{B}_0^{\top}(t)\hat X(t)+\sum_{j=1}^r\hat{B}_j^{\top}(t)X(t)\hat{A}_j(t)+\hat{L}^{\top}(t)\right)
\end{align}
We denote:
\begin{eqnarray}\label{e.604}
 \tilde{u}^1(t)=\left(\begin{array}{c}\tilde{u}^1_1(t) \\\tilde{u}^1_2(t)\end{array}\right)\triangleq\left(\begin{array}{c}F_1(t)\tilde{x}^1(t) \\F_2(t)\tilde{x}^1(t)\end{array}\right)
 \end{eqnarray}
 \begin{eqnarray}\label{e.605}
  \tilde{u}^2(t)=\left(\begin{array}{c}\tilde{u}^2_1(t)\\ \tilde{u}^2_2(t)\end{array}\right)\triangleq\left(\begin{array}{c}\hat{F}_1(t)\tilde{x}^2(t) \\\hat{F}_2(t)\tilde{x}^2(t)\end{array}\right)
 \end{eqnarray}
$ t\in[s,\; T]$, where: $
\begin{cases}
F_1(t)=\left(\begin{array}{cc}I_{m_1} & 0\end{array}\right)F(t)\\
F_2(t)=\left(\begin{array}{cc}0 & I_{m_2}\end{array}\right)F(t)
\end{cases}
,
\begin{cases}
\hat{F}_1(t)=\left(\begin{array}{cc}I_{m_1} & 0\end{array}\right)\hat{F}(t)\\
\hat{F}_2(t)=\left(\begin{array}{cc}0 & I_{m_2}\end{array}\right)\hat{F}(t)
\end{cases}
$.\\
\noindent Let $\hat{x}^1_{u}(\cdot)$ be the solution of the equation
\begin{align}\label{e.30}
&dx^1(t)=\left\{A_0(t)x^1(t)+B_0(t)(u^1(t)-\tilde{u}^1(t))\right\}dt\notag\\
&+\sum_{j=1}^r\left\{A_j(t)x^1(t)+B_j(t)(u^1(t)-\tilde{u}^1(t))+\hat{A}_j(t)\hat{x}^2_{u^2-\tilde{u}^2}(t)+\hat{B}_j(t)(u^2(t)-\tilde{u}^2(t))\right\}dw_j(t)
\end{align}
with $\hat{x}^1_u(s)=0$, $\hat{x}^2_{u^2-\tilde{u}^2}(\cdot)$ being the solution of the equation
\begin{equation}\label{e.31}
dx^2(t)=\left\{\hat{A}_0(t)x^2(t)+\hat{B}_0(t)(u^2(t)-\tilde{u}^2(t))\right\}dt
\end{equation}
with $\hat{x}^2_{u^2-\tilde{u}^2}(s)=0$, $t\in[s,\; T]$.\\
\noindent The next result will be involved in the proof of the main result of this section.
\begin{lem}\label{L2.1}
Assume that the Riccati equations (\ref{e.9}) and (\ref{e.10}) are solvable on $\mathcal{I}_{{X}}$ and $\mathcal{I}_{\hat{{X}}}$, respectively. Let $F(t)$ and $\hat{F}(t)$ be the feedback gains defined via (\ref{e.22}) and (\ref{e.23}), respectively. Then for every $s\in \mathcal{I}_{{X}}\bigcap \mathcal{I}_{\hat{{X}}}$, the quadratic functional (\ref{e.6}) has the representation:
\begin{align}\label{e.48}
{\mathcal J}(s,T,x_s;u_1(\cdot),u_2(\cdot))&=\frac{1}{2}\langle \hat{X}(s)x_s,x_s\rangle+\frac{1}{2}\mathbb{E}\int_s^T\left[\left(\begin{array}{c}u_1^1(t)-{F}_1(t)x^1_u(t) \\u_2^1(t)-{F}_2(t)x^1_u(t)\end{array}\right)^{\top}\mathbb{R}(t,{{X}}(t))\star\right.\notag\\
&\left.+\left(\begin{array}{c}u_1^2(t)-\hat{F}_1(t)x^2_{u^2}(t) \\u_2^2(t)-\hat{F}_2(t)x^2_{u^2}(t)\end{array}\right)^{\top}\hat{\mathbb{R}}(t,{{X}}(t))\star\right] dt
\end{align}
for all $u(\cdot)=(u_1(\cdot),\;u_2(\cdot))\in \mathcal{U}_{ad}$ and for all $x_s\in L_{\mathcal{F}_s}^2 (\Omega;\; \mathbb{R}^n)$.
\end{lem}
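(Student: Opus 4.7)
The plan is to exploit the mean/deviation splitting $x=x^1+x^2$, $u=u^1+u^2$ already introduced before the lemma, together with the orthogonality relations $\mathbb{E}[(x^1)^\top M x^2]=0$, $\mathbb{E}[(u^1)^\top R u^2]=0$, etc. (valid because $\mathbb{E}[x^1]=\mathbb{E}[u^1]=0$ while $x^2,u^2$ are deterministic). These identities split the cost (\ref{e.6}) into an $(x^1,u^1)$-quadratic with weights $(G_T,M,L,R)$ and an $(x^2,u^2)$-quadratic with weights $(\hat G_T,\hat M,\hat L,\hat R)$, where $\hat G_T=G_T+\bar G_T$. Each piece will then be transformed by applying Itô's formula with the matching Riccati equation and completing the square.

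First I would treat the deviation part $x^1(\cdot)$, which solves the SDE (\ref{e.x1}). Applying Itô's formula to $\langle X(t)x^1(t),x^1(t)\rangle$ on $[s,T]$, using $X(T)=G_T$ and $x^1(s)=0$, and taking expectation to kill the martingale part, one obtains an integrand made of the drift contribution $\langle\dot X x^1,x^1\rangle+2\langle Xx^1,A_0x^1+B_0u^1\rangle$ plus the Itô correction $\sum_j\langle X\sigma_j,\sigma_j\rangle$ with $\sigma_j=A_jx^1+B_ju^1+\hat A_jx^2+\hat B_ju^2$. All cross products between $(x^1,u^1)$ and $(x^2,u^2)$ vanish under $\mathbb{E}$, leaving a pure $(x^1,u^1)$-quadratic together with an extra $(x^2,u^2)$-quadratic coming from the diffusion. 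Substituting (\ref{e.9}) rewritten as $-\dot X=A_0^\top X+XA_0+\sum_k A_k^\top XA_k+M-F^\top\mathbb R(t,X(t))F$, and adding the integrand of the $(x^1,u^1)$-cost, the $M$-terms cancel and the standard completion of the square yields $(u^1-Fx^1)^\top\mathbb R(t,X(t))(u^1-Fx^1)$ under the expectation.

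Next I would handle the mean part $x^2(\cdot)$. Since (\ref{e.x2}) has no noise, the deterministic chain rule applied to $\langle\hat X(t)x^2(t),x^2(t)\rangle$ gives $\langle\hat G_Tx^2(T),x^2(T)\rangle-\langle\hat X(s)x_s,x_s\rangle=\int_s^T[\langle\dot{\hat X}x^2,x^2\rangle+2\langle\hat Xx^2,\hat A_0x^2+\hat B_0u^2\rangle]\,dt$. Using (\ref{e.10}) in the form $-\dot{\hat X}=\hat A_0^\top\hat X+\hat X\hat A_0+\sum_k\hat A_k^\top X\hat A_k+\hat M-\hat F^\top\hat{\mathbb R}(t,X(t))\hat F$, and then combining with the residual $(x^2,u^2)$-quadratic left over from the Itô computation of step 2 (which carries exactly the blocks $\sum_k\hat A_k^\top X\hat A_k$, $\sum_k\hat A_k^\top X\hat B_k$, $\sum_k\hat B_k^\top X\hat B_k$) and with the integrand of the $(x^2,u^2)$-cost, everything collapses into $(u^2-\hat Fx^2)^\top\hat{\mathbb R}(t,X(t))(u^2-\hat Fx^2)$. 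The isolated boundary term $-\langle\hat X(s)x_s,x_s\rangle$ moves to the left-hand side and produces the free term $\tfrac12\langle\hat X(s)x_s,x_s\rangle$ of (\ref{e.48}).

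The main obstacle is precisely the coupling between the two steps: the diffusion in (\ref{e.x1}) depends on $(x^2,u^2)$ as well as $(x^1,u^1)$, so Itô on $\langle Xx^1,x^1\rangle$ produces residual $(x^2,u^2)$-quadratic terms weighted by $X(t)$ (not by $\hat X(t)$). The delicate verification is that these residuals are exactly the ones stored in the $\sum_k\hat A_k^\top X\hat A_k$, $\sum_k\hat A_k^\top X\hat B_k$, $\sum_k\hat B_k^\top X\hat B_k$ blocks appearing in the Riccati equation for $\hat X$, so that they assemble with $\hat R$, $\hat L$ into the full matrix $\hat{\mathbb R}(t,X(t))$ defined in (\ref{e.107}) and deliver a clean square of the form $\|u^2-\hat Fx^2\|^2_{\hat{\mathbb R}(t,X(t))}$. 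Once this bookkeeping is done, the two squares combine into the stated representation (\ref{e.48}); the remaining manipulations are the routine factorizations $F^\top\mathbb RF$ and $\hat F^\top\hat{\mathbb R}\hat F$ of the quadratic parts of (\ref{e.9}) and (\ref{e.10}) obtained from the definitions (\ref{e.22})--(\ref{e.23}) of $F$ and $\hat F$.
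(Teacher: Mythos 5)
Your proposal is correct and follows essentially the same route as the paper, which (in its one-sentence sketch) applies It\^o's formula to the systems (\ref{e.26})--(\ref{e.27}) and to the function $v(t,x^1,x^2)=\langle X(t)x^1,x^1\rangle+\langle\hat X(t)x^2,x^2\rangle$ and then invokes the Riccati equations (\ref{e.9})--(\ref{e.10}); your separate treatment of the two quadratic forms plus the orthogonality splitting of the cost is the same computation, merely written out. You have also correctly identified the one genuinely delicate point, namely that the It\^o correction from the diffusion of $x^1$ produces $(x^2,u^2)$-terms weighted by $X(t)$ which are precisely the $\sum_k\hat A_k^\top X\hat A_k$, $\sum_k\hat A_k^\top X\hat B_k$, $\sum_k\hat B_k^\top X\hat B_k$ blocks of (\ref{e.10}) and of $\hat{\mathbb R}(t,X(t))$ in (\ref{e.107}).
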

The proof of the above lemma relies on the use of the It\^o formula to systems (\ref{e.26})-(\ref{e.27}) and to the function $v(t,x^1_u,x^2_{u^2})=\langle X(t)x^1_u,x^1_u\rangle+\langle \hat{X}(t)x^2_{u^2},x^2_{u^2}\rangle$, $t\in [s,\; T]$, $x^1_u\in \mathbb{R}^n$, $x^2_{u^2}\in \mathbb{R}^n$ and by taking into account the Riccati equations (\ref{e.9}) and (\ref{e.10}).\\
\noindent Now we are in position to prove the main result of this section.
\begin{thm}\label{T.1}
Assume:
\begin{itemize}
\item [a)] $\mathcal{J}(0,T,0;0,u_2(\cdot))\geq 0$, $\forall u_2(\cdot) \in L_w^2\left\{[0,\; T],\mathbb{R}^{m_2}\right\}$;
\item [b)] the Riccati equations (\ref{e.9}) and (\ref{e.10}) admit solutions $X(\cdot):[0,\; T]\rightarrow \mathcal{S}_n$ and $\hat{X}(\cdot):[0,\; T]\rightarrow \mathcal{S}_n$, respectively, such that $X(\cdot)$ satisfies the sign conditions (\ref{e.111})-(\ref{e.112}) for any $t \in [0;\; T]$.
\end{itemize}
\noindent Let $\tilde{u}_2(\cdot,u_1(\cdot))$ be defined by:
\begin{eqnarray}\label{e.32}
\tilde{u}_2(t,u_1(t))=\tilde u_2^1(t, u_1(t))+\tilde u_2^2(t, u_1(t))
\end{eqnarray}
where
\begin{eqnarray}\label{e.602}
\tilde u_2^1(t,u_1(t))= K(t)x^1_{u_1}(t)+W(t)u^1_1(t)
\end{eqnarray}
\begin{eqnarray}\label{e.603}
\tilde u_2^2(t,u_1(t))=\hat{K}(t)x^2_{u^2_1}(t)+\hat{W}(t)u^2_1(t)
\end{eqnarray}
\noindent $x^1_{u_1}(\cdot)$ being the solution of the initial value problem:
\begin{align}
&dx^1(t)=\left\{\left[A_0(t)+B_{02}(t)K(t)\right]x^1(t)+\left[B_{01}(t)+B_{02}(t)W(t)\right]u^1_1(t)\right\}dt\notag\\
&+\sum_{j=1}^r\left\{\left[A_j(t)+B_{j2}(t)K(t)\right]x^1(t)+\left[B_{j1}(t)+B_{j2}(t)W(t)\right]u^1_1(t)\right.\notag\\
&\left.+\left[\hat{A}_j(t)+\hat{B}_{j2}(t)\hat{K}(t)\right]x^2_{u^2_1}(t)+\left[\hat{B}_{j1}(t)+\hat{B}_{j2}(t)\hat{W}(t)\right]u^2_1(t)\right\}dw_j(t)
\end{align}
$x^1_{u_1}(s)=0$, and $x^2_{u^2_1}(\cdot)$ being the solution of the initial value problem
\begin{equation}
dx^2(t)=\left\{\left[\hat{A}_0(t)+\hat{B}_{02}(t)\hat{K}(t)\right]x^2(t)+\left[\hat{B}_{01}(t)+\hat{B}_{02}(t)\hat{W}(t)\right]u^2_1(t)\right\}dt
\end{equation}
with $x^2_{u^2_1}(s)=x_s\in \mathbb{R}^n$, $t\in[s,\; T]$ and:
\begin{equation}\label{e.24}
\begin{cases}
K(t)=\mathbb{V}_{22}^{-1}(t)\left(\begin{array}{cc} \mathbb{V}_{21}(t) & \mathbb{V}_{22}(t)\end{array}\right)F(t)\\
\hat{K}(t)=\hat{\mathbb{V}}_{22}^{-1}(t)\left(\begin{array}{cc} \hat{\mathbb{V}}_{21}(t) & \hat{\mathbb{V}}_{22}(t)\end{array}\right)\hat{F}(t)
\end{cases}
\end{equation}
\begin{equation}\label{e.25}
\begin{cases}
W(t)=-\mathbb{V}_{22}^{-1}(t) \mathbb{V}_{21}(t) \\
\hat{W}(t)=-\hat{\mathbb{V}}_{22}^{-1}(t)\hat{ \mathbb{V}}_{21}(t)
\end{cases}
\end{equation}
Under these conditions the following hold:
\begin{itemize}
\item [i)] for each $u_1(\cdot)\in L_{w}^2([s,T],\;\mathbb{R}^{m_1})$, $(u_1(\cdot),\; \tilde{{u}}_2(\cdot,u_1(\cdot)))$ lies in ${\mathcal{U}}_{ad}$;
\item[ii)] Let $\tilde{u}_1(\cdot)$ be defined by: $\tilde{u}_1(t)=\tilde{u}_1^1(t)+\tilde{u}_1^2(t)$, $t\in [s,\; T]$, where $\tilde{u}_1^1(\cdot)$ are $\tilde{u}_1^2(\cdot)$ are defined in (\ref{e.604}) and (\ref{e.605}), respectively. Hence, $(\tilde{u}_1(\cdot),\;\tilde{{u}}_2(\cdot,\tilde{u}_1(\cdot)))$ is a Stackelberg  equilibrium  for the two players  zero-sum  mean field LQ differential game described by (5), (6) and the class of admissible strategies ${\mathcal{U}}_{ad}= L_{w}^2([s,T],\;\mathbb{R}^{m_1})\times L_{w}^2([s,T],\;\mathbb{R}^{m_2})$.
\end{itemize}
\end{thm}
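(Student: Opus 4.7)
The plan is to combine the representation formula from Lemma~\ref{L2.1} with the block factorizations $\mathbb{R}=\mathbb{V}^\top\mathrm{diag}(-I_{m_1},I_{m_2})\mathbb{V}$ and $\hat{\mathbb{R}}=\hat{\mathbb{V}}^\top\mathrm{diag}(-I_{m_1},I_{m_2})\hat{\mathbb{V}}$ from Remark~\ref{rem.1}, which are available precisely because the sign conditions (\ref{e.111})-(\ref{e.112}) make $\mathbb{V}_{11}$, $\mathbb{V}_{22}$, $\hat{\mathbb{V}}_{11}$, $\hat{\mathbb{V}}_{22}$ uniformly invertible. The argument splits into admissibility, the follower's inner infimum, and the leader's outer supremum. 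For item (i) I would just note that for every $u_1\in L^2_w([s,T];\mathbb{R}^{m_1})$ the coefficients of the closed-loop SDE defining $x^1_{u_1}$ and of the ODE defining $x^2_{u_1^2}$ are continuous on $[0,T]$ with inputs in $L^2$, so the standard existence/uniqueness result invoked via Proposition 2.1 of \cite{yong-2017} yields $x^1_{u_1},x^2_{u_1^2}\in L^2_w$; hence $\tilde u_2(\cdot,u_1(\cdot))$ built from (\ref{e.602})-(\ref{e.603}) lies in $\mathcal{U}_2$.

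For the follower's inner infimum, I would apply Lemma~\ref{L2.1} and rewrite the first quadratic form via (\ref{e.15}) as
\[
-\bigl|\mathbb{V}_{11}(u_1^1-F_1 x^1_u)\bigr|^2+\bigl|\mathbb{V}_{21}(u_1^1-F_1 x^1_u)+\mathbb{V}_{22}(u_2^1-F_2 x^1_u)\bigr|^2,
\]
and analogously for the mean-field block using (\ref{e.17}). A short computation using (\ref{e.24})-(\ref{e.25}) gives $K=F_2-WF_1$ and $\hat K=\hat F_2-\hat W\hat F_1$, so the choice $u_2^1=Kx^1_u+Wu_1^1$, $u_2^2=\hat Kx^2_{u^2}+\hat Wu_1^2$ kills the "+" blocks and leaves only $-|\mathbb{V}_{11}(u_1^1-F_1 x^1_u)|^2$ and $-|\hat{\mathbb{V}}_{11}(u_1^2-\hat F_1 x^2_{u^2})|^2$ in the integrand. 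To turn this pointwise observation into a genuine minimization, I would introduce $v_2=u_2-\tilde u_2(\cdot,u_1(\cdot))$ and, using linearity of (\ref{e.26})-(\ref{e.27}) in the controls together with bilinearity of $\mathcal{J}$, expand
\[
\mathcal{J}(s,T,x_s;u_1,u_2)-\mathcal{J}(s,T,x_s;u_1,\tilde u_2(\cdot,u_1(\cdot)))=\mathcal{J}(s,T,0;0,v_2)+\text{linear cross term},
\]
where the linear cross term vanishes because $\tilde u_2(\cdot,u_1(\cdot))$ was constructed from the Euler-Lagrange (first-order) condition for $u_2\mapsto\mathcal{J}(s,T,x_s;u_1,\cdot)$, and the quadratic residue is nonnegative by hypothesis (a). Consequently $\tilde u_2(\cdot,u_1(\cdot))$ realizes $\inf_{u_2\in\mathcal{U}_2}\mathcal{J}(s,T,x_s;u_1,u_2)$.

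For the leader's outer supremum, plugging the follower's response back into Lemma~\ref{L2.1} (so that $x^1_u$ becomes $x^1_{u_1}$ and $x^2_{u^2}$ becomes $x^2_{u_1^2}$) yields
\[
\inf_{u_2\in\mathcal{U}_2}\mathcal{J}(s,T,x_s;u_1,u_2)=\tfrac12\langle\hat X(s)x_s,x_s\rangle-\tfrac12\mathbb{E}\!\int_s^T\!\Bigl[\bigl|\mathbb{V}_{11}(u_1^1-F_1 x^1_{u_1})\bigr|^2+\bigl|\hat{\mathbb{V}}_{11}(u_1^2-\hat F_1 x^2_{u_1^2})\bigr|^2\Bigr]dt.
\]
Since $\mathbb{V}_{11}=(-\mathbb{R}_{22}^\sharp)^{1/2}$ and $\hat{\mathbb{V}}_{11}=(-\hat{\mathbb{R}}_{22}^\sharp)^{1/2}$ are uniformly positive by (\ref{e.111b})-(\ref{e.112b}), the integrand is pointwise nonpositive and vanishes precisely when $u_1^1=F_1 x^1_{u_1}$ and $u_1^2=\hat F_1 x^2_{u_1^2}$; inserting this feedback into (\ref{e.x1})-(\ref{e.x2}) shows that the induced closed-loop trajectory coincides with $(\tilde x^1,\tilde x^2)$ defined in (\ref{e.28})-(\ref{e.29}), and the corresponding leader strategy is exactly the $\tilde u_1$ of (\ref{e.604})-(\ref{e.605}). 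Thus the supremum equals $\tfrac12\langle\hat X(s)x_s,x_s\rangle$ and is attained at $\tilde u_1$, which, combined with the follower's step applied now to $u_1=\tilde u_1$, delivers the two equalities in Definition~\ref{D2.1}.

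The step I expect to be the main obstacle is the identification of the difference $\mathcal{J}(s,T,x_s;u_1,u_2)-\mathcal{J}(s,T,x_s;u_1,\tilde u_2(\cdot,u_1(\cdot)))$ with $\mathcal{J}(s,T,0;0,v_2)$. The factorization makes an essentially pointwise case, but since $x^1_u$ itself depends on $u_2$ through the SDE and since the McKean--Vlasov terms couple $x^1$ and $x^2$ through the noise coefficients $\hat A_j$, $\hat B_j$, one must be careful to track the cross terms that arise when one expands the quadratic form along $v_2$; this is precisely where hypothesis (a) enters and where the factorization alone, absent the convexity assumption, would not be enough to conclude minimality of $\tilde u_2(\cdot,u_1(\cdot))$.
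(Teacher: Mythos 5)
Your proposal follows essentially the same route as the paper: the representation (\ref{e.48}) of Lemma \ref{L2.1}, the factorizations (\ref{e.15}) and (\ref{e.17}), the resulting completed--square form (\ref{e5.19}), and hypothesis a) supplying the convexity needed for the follower's minimization; your identity $K=F_2-WF_1$ and the vanishing of the ``$+$'' blocks are exactly the paper's mechanism. The one place where you diverge is the justification of the inner infimum: you propose to show, for \emph{every} $u_1$, that $\tilde u_2(\cdot,u_1(\cdot))$ is a stationary point of $u_2\mapsto{\cal J}(s,T,x_s;u_1,\cdot)$ and then invoke convexity, correctly flagging the vanishing of the cross term as the delicate step (since $x^1_u$ and $x^2_{u^2}$ depend on $u_2$, annihilating the ``$+$'' blocks does not by itself prove minimality). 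The paper instead proves minimality only at $u_1=\tilde u_1$, via the Appendix lemma, whose part (i) is precisely the It\^o--based verification that the cross terms vanish when one expands around $(\tilde u_1,\tilde u_2)$, and it states the general claim (\ref{e5.20}) rather tersely. Two remarks. First, your stationarity claim for arbitrary $u_1$ is true but is not a formal consequence of the factorization alone; it requires an It\^o computation of the same kind as in the Appendix, so if you take that route you must actually carry it out rather than assert it. Second, you can avoid it entirely: the factorization already gives ${\cal J}(s,T,x_s;u_1,\tilde u_2(\cdot,u_1(\cdot)))\le\tfrac12\langle\hat X(s)x_s,x_s\rangle$ for every $u_1$, hence $\sup_{u_1}\inf_{u_2}{\cal J}\le\tfrac12\langle\hat X(s)x_s,x_s\rangle$, while the reverse inequality follows from $\inf_{u_2}{\cal J}(s,T,x_s;\tilde u_1,\cdot)={\cal J}(s,T,x_s;\tilde u_1,\tilde u_2)$, i.e.\ from the second equality of (\ref{e.7}), which needs the expansion only at the single pair $(\tilde u_1,\tilde u_2)$ together with hypothesis a). This weaker chain is, in substance, what closes the paper's argument.
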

\begin{proof}
The assertion in i) is trivial.
Let us now prove the assertion ii).
First, from (\ref{e.604}), (\ref{e.605}), (\ref{e.32}), (\ref{e.602}), (\ref{e.603}), (\ref{e.24}) and (\ref{e.25}) together with the uniqueness of the solution of an initial value problem (IVP)  one obtains that
$\left(\begin{array}{c} \tilde{u}_1^1(t) \\ \tilde{u}_2^1(t, \tilde u_1(t))\end{array}\right) =\left(\begin{array}{c}F_1(t)\tilde{x}^1(t) \\{F}_2(t)\tilde{x}^1(t)\end{array}\right)$  and
$\left(\begin{array}{c} \tilde{u}_1^2(t) \\ \tilde{u}_2^2(t, \tilde u_1^2(t))\end{array}\right) =\left(\begin{array}{c}\hat F_1(t)\tilde{x}^2(t) \\\hat{F}_2(t)\tilde{x}^2(t)\end{array}\right)$.\\
\noindent From (\ref{e.48}) we obtain that:
\begin{equation}\label{th1_e1}
{\mathcal J}(s,T,x_s;\tilde{u}_1(\cdot),\tilde{u}_2(\cdot))=\frac{1}{2}{\mathbb E}[\langle \hat{X}(s)x_s,x_s\rangle]
\end{equation}
for all $x_s\in \mathbb{R}^n$.
On the other hand, using the factorizations (\ref{e.15})-(\ref{e.16}) and (\ref{e.17})-(\ref{e.18}) we rewrite (\ref{e.48}) in the form:
\begin{align}\label{e5.19}
&\mathcal{J}(s,T,x_s;u_1(\cdot),u_2(\cdot))=\frac{1}{2}{\mathbb E}[\langle \hat{X}(s)x_s,x_s\rangle]\notag\\
&+\frac{1}{2}\mathbb{E}\Big[\int_s^T\big(|{\mathbb{V}}_{21}(t)(u^1_1(t)-F_1(t)x^1_u(t))+{\mathbb{V}}_{22}(t)(u^1_2(t)-F_2(t)x^1_u(t))|^2\notag\\
&-|{\mathbb{V}}_{11}(t)(u^1_1(t)-F_1(t)x^1_u(t))|^2\big)dt\Big]\notag\\
&+\frac{1}{2}\mathbb{E}\Big[\int_s^T\big(|{\hat{\mathbb{V}}}_{21}(t)(u^2_1(t)-\hat{F}_1(t)x^2_{u^2}(t))+{\hat{\mathbb{V}}}_{22}(t)(u^2_2(t)-\hat{F}_2(t)x^2_{u^2}(t))|^2\notag\\
&-|{\hat{\mathbb{V}}}_{11}(t)(u^2_1(t)-\hat{F}_1(t)x^2_{u^2}(t))|^2\big)dt\Big]
\end{align}
It follows from the uniqueness of the solution of an initial value problem (IVP) that:
\begin{align}\label{e5.20}
&\inf_{u_2(\cdot)\in{\cal U}_2}\mathcal{J}(s,T,x_s;u_1(\cdot),u_2(\cdot))= {\cal J}(s,T,x_s; u_1(\cdot), \tilde u_2(\cdot, u_1(\cdot)))\notag\\
&=\frac{1}{2}{\mathbb E}[\langle \hat{X}(s)x_s,x_s\rangle] -\frac{1}{2}\mathbb{E}\Big[\int_s^T|{\mathbb{V}}_{11}(t)(u^1_1(t)-F_1(t)x^1_{u_1}(t))|^2 dt\Big]\notag\\
&-\frac{1}{2}\mathbb{E}\Big[\int_s^T|{\hat{\mathbb{V}}}_{11}(t)(u^2_1(t)-\hat{F}_1(t)x^2_{u^2_1}(t))|^2 dt\Big]
\end{align}
for all $u_1(\cdot)\in {L}_{ w}^2([s,T],\;\mathbb{R}^{m_1})$.
From (\ref{e5.20}) and (\ref{th1_e1}) we obtain that:
$$\sup_{u_1(\cdot)\in{\cal U}_1} \inf_{u_2(\cdot)\in{\cal U}_2} \mathcal{J}(s,T,x_s;u_1(\cdot),u_2(\cdot)) =
\mathcal{J}(s,T,x_s;\tilde u_1(\cdot),\tilde{u}_2(\cdot,\tilde u_1(\cdot)))$$
 which confirms the validity of the first equality from (\ref{e.7}).
On the other hand, from Lemma 5.1 we obtain that
$$\mathcal{J}(s,T,x_s;{\tilde u}_1(\cdot),{\tilde{u}}_2(\cdot,{\tilde{u}}_1(\cdot))) \leq \mathcal{J}(s,T,x_s;{\tilde{u}}_1(\cdot), u_2(\cdot))$$
for all $u_2(\cdot) \in L_w^2([s,T]; {\mathbb R}^{m_2})$.
Hence, $$\mathcal{J}(s,T, x_s;{\tilde{u}}_1(\cdot),{\tilde{u}}_2(\cdot,{\tilde{u}}_1(\cdot)))=\inf_{u_2(\cdot)\in{\cal U}_2}
\mathcal{J}(s,T,x_s;{\tilde{u}}_1(\cdot),u_2(\cdot))$$
because $\tilde u_2(\cdot, \tilde u_1(\cdot))\in{\cal U}_2$.
So we have shown that the second equality from  (\ref{e.7}) holds true. Thus the proof is complete.
\end{proof}
\section{The Riccati equations}

\subsection{Several preliminary results.}
In this subsection we deal first with the problem of the global existence on the whole interval $[0,\; T]$ of the solutions $X(\cdot)$, $\hat X(\cdot)$ of the TVPs (\ref{e.9}) and (\ref{e.10}) respectively, when their quadratic terms are of definite sign.
This result will be used later to state the necessary and sufficient conditions for the global existence of the solutions of the TVPs (\ref{e.9}) and (\ref{e.10}) satisfying conditions (\ref{e.111}), (\ref{e.112}).
The next result is a compact version of the result stated in Lemma \ref{L2.1} adapted to the use in this section.
\begin{lem}\label{L3.1}
If $s\in \mathcal{I}_X\bigcap \mathcal{I}_{\hat X}$ then for any $u(\cdot)\in L_w^2([s, T],\mathbb{R}^m)$ and $x_s \in {\mathbb R}^n$ we have:
\begin{align}
\mathcal{J}(s,T,x_s;u(\cdot))&=\frac{1}{2}\left\langle \hat X(s)x_s, x_s\right\rangle\notag\\
&+\frac{1}{2}{\mathbb E} \left[ \int_s^T \left\{\left\langle\left( R(t)+\sum_{j=1}^rB_j^\top(t)X(t)B_j(t)\right)\left(u^1(t)-F(t)x_u^1(t)\right),u^1(t)-F(t)x_u^1(t)\right\rangle
\right.\right.\notag\\
&+\left.\left.\left\langle\left(\hat{R}(t)+\sum_{j=1}^r\hat{B}_j^\top(t)X(t)\hat{B}_j(t)\right)
\left(u^2(t)-\hat{F}(t)x_{u^2}^2(t)\right),u^2(t)-F(t)x_{u^2}^2(t)\right\rangle\right\}dt\right]
\end{align}
where $F(t)$ and $\hat F(t)$ were defined in (\ref{e.22}), (\ref{e.23}), respectively.
\end{lem}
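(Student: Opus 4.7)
Since the preceding text advertises Lemma~\ref{L3.1} as a compact reformulation of Lemma~\ref{L2.1}, the cleanest route is to derive it directly from the representation (\ref{e.48}) by block algebra, rather than re-running the It\^o/Riccati computation. Invoking (\ref{e.48}) produces the boundary term $\tfrac{1}{2}\langle \hat X(s) x_s, x_s\rangle$ together with an integral of two quadratic forms in the component variables $(u^1_1 - F_1 x^1_u,\; u^1_2 - F_2 x^1_u)$ and $(u^2_1 - \hat F_1 x^2_{u^2},\; u^2_2 - \hat F_2 x^2_{u^2})$ weighted respectively by $\mathbb{R}(t,X(t))$ and $\hat{\mathbb{R}}(t,X(t))$; the task is then to reassemble these into the single-vector inner products appearing in the statement.

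The first step is the identification of the vectors. With the partition $u = (u_1^\top\; u_2^\top)^\top$, one has $u^1(t) = u(t) - \mathbb{E}[u(t)] = (u^1_1(t)^\top\; u^1_2(t)^\top)^\top$ and $u^2(t) = \mathbb{E}[u(t)] = (u^2_1(t)^\top\; u^2_2(t)^\top)^\top$. The gains $F(t)$ and $\hat F(t)$ defined by (\ref{e.22})--(\ref{e.23}) are matrices in $\mathbb{R}^{(m_1+m_2)\times n}$, and their partitions $F = (F_1^\top\; F_2^\top)^\top$ and $\hat F = (\hat F_1^\top\; \hat F_2^\top)^\top$ introduced just after (\ref{e.605}) yield $u^1(t) - F(t) x^1_u(t) = (u^1_1 - F_1 x^1_u,\; u^1_2 - F_2 x^1_u)^\top$, and analogously for the second piece.

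The second step is the identification of the weights. Inspecting (\ref{e.4}) one sees that $B_j(t) = (B_{j1}(t)\; B_{j2}(t))$ and $R(t)$ is the $2\times 2$ block matrix with entries $R_{ik}(t)$, so block-multiplying produces $R(t) + \sum_{j=1}^r B_j^\top(t) X(t) B_j(t)$ as the matrix whose $(i,k)$-block equals $R_{ik}(t) + \sum_{j=1}^r B_{ji}^\top(t) X(t) B_{jk}(t)$, which is precisely $\mathbb{R}(t,X(t))$ by (\ref{e.106b}). The analogous computation with the hatted coefficients gives $\hat{\mathbb{R}}(t,X(t)) = \hat R(t) + \sum_{j=1}^r \hat B_j^\top(t) X(t) \hat B_j(t)$.

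Substituting these two identifications into (\ref{e.48}) rewrites each weighted quadratic form as the inner product of the full-vector residual with itself under the weight $R + \sum B_j^\top X B_j$ (respectively its hatted version), which is exactly the integrand claimed in Lemma~\ref{L3.1}. No genuine obstacle arises; the argument is bookkeeping of block structures, and the orthogonality between the centred part and the expectation part that permits the decoupling of the $x^1$ and $x^2$ dynamics in the cost has already been exploited inside Lemma~\ref{L2.1}. If one preferred a self-contained proof, the natural alternative is to apply It\^o's formula to $v(t,\xi,\eta) = \langle X(t)\xi,\xi\rangle + \langle \hat X(t)\eta,\eta\rangle$ along (\ref{e.26})--(\ref{e.27}), substitute the Riccati identities (\ref{e.9})--(\ref{e.10}) to eliminate the state-quadratic drift, and complete the squares in $u^1$ and $u^2$; but that computation is already contained in Lemma~\ref{L2.1}.
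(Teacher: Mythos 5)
Your proposal is correct and matches the paper's treatment: the paper offers no separate proof of Lemma~\ref{L3.1}, introducing it only as ``a compact version of the result stated in Lemma~\ref{L2.1},'' and your block-algebra identification of the stacked residual vectors with $u^1-Fx^1_u$, $u^2-\hat F x^2_{u^2}$ and of $\mathbb{R}(t,X(t))$, $\hat{\mathbb{R}}(t,X(t))$ with $R+\sum_j B_j^\top X B_j$, $\hat R+\sum_j \hat B_j^\top X \hat B_j$ is exactly the bookkeeping that justifies that claim. The alternative self-contained It\^o argument you mention is likewise the one the paper sketches for Lemma~\ref{L2.1}, so nothing further is needed.
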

The next two propositions will play an important role in the proof of the main result of this section.
\begin{prop}\label{P3.2}
For the controlled system (\ref{e.1}) and the quadratic functional (\ref{e.3}) the following are equivalent:
\begin{itemize}
\item [i)] there exists $\delta >0$ with the property that:
\begin{equation}\label{e.113}
\mathcal{J}(0,T,0;u(\cdot))\geq \delta \mathbb{E}\left[\int_0^T|u(t)|^2 dt\right];\quad \forall u(\cdot) \in L_w^2([0,\; T],\mathbb{R}^m)
\end{equation}
\item [ii)] the TVPs (\ref{e.9}) and (\ref{e.10}) have the solutions $X(\cdot):[0,\; T]\rightarrow \mathcal{S}_n$ and $\hat X(\cdot):[0,\; T]\rightarrow \mathcal{S}_n$, respectively, which are satisfying the sign conditions:
\begin{equation}\label{e.114}
R(t)+\sum_{j=1}^r B_j^\top(t) X(t)B_j(t)\geq \gamma I_m
\end{equation}
\begin{equation}\label{e.115}
\hat{R}(t)+\sum_{j=1}^r \hat{B}_j^\top(t) X(t)\hat{B}_j(t)\geq \hat{\gamma} I_m
\end{equation}
for all $t\in [0,\; T]$, $\gamma$ and $\hat{\gamma}$ being positive constants.
\end{itemize}
\end{prop}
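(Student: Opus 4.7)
The plan is to prove the two implications separately: $(ii)\Rightarrow(i)$ is a direct consequence of the representation in Lemma \ref{L3.1}, while $(i)\Rightarrow(ii)$ requires constructing $X,\hat X$ as Hessians of the value function of an underlying mean-field LQ control problem.

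For $(ii)\Rightarrow(i)$ I would specialize Lemma \ref{L3.1} to $s=0$, $x_s=0$. The boundary term $\frac12\langle\hat X(0)x_s,x_s\rangle$ drops out, leaving $\mathcal J(0,T,0;u)$ as the sum of two positive quadratic forms in the deviations $v:=u^1-Fx^1_u$ and $w:=u^2-\hat F x^2_{u^2}$. The sign conditions (\ref{e.114})--(\ref{e.115}) then give
\begin{equation*}
\mathcal J(0,T,0;u)\geq \tfrac{1}{2}\min(\gamma,\hat\gamma)\,\mathbb E\int_0^T(|v|^2+|w|^2)\,dt.
\end{equation*}
To convert this into a bound by $\mathbb E\int_0^T|u|^2\,dt$, I would observe that the map $(u^1,u^2)\mapsto(v,w)$ is a bounded linear bijection on the ambient Hilbert space: given $(v,w)$ one first recovers $x^2$ from the closed-loop ODE $\dot x^2=(\hat A_0+\hat B_0\hat F)x^2+\hat B_0 w$, $x^2(0)=0$, then sets $u^2=w+\hat F x^2$, and solves the closed-loop SDE associated with (\ref{e.x1}) forced by $(v,x^2,u^2)$ to recover $x^1$ and $u^1=v+Fx^1$. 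Gronwall's inequality produces a constant $C$ with $\mathbb E\int_0^T|u|^2\,dt\leq C\,\mathbb E\int_0^T(|v|^2+|w|^2)\,dt$, yielding (\ref{e.113}) with $\delta=\min(\gamma,\hat\gamma)/(2C)$.

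For $(i)\Rightarrow(ii)$, first note that (\ref{e.113}) transfers to every subinterval $[s,T]$: the zero-extension $\tilde u$ of $u\in L^2_w([s,T];\mathbb R^m)$ drives the state identically to zero on $[0,s]$ when $x(0)=0$, giving $\mathcal J(0,T,0;\tilde u)=\mathcal J(s,T,0;u)$ and $\|\tilde u\|^2_{L^2([0,T])}=\|u\|^2_{L^2([s,T])}$. Hence for every $(s,x_s)\in[0,T]\times L^2_{\mathcal F_s}(\Omega;\mathbb R^n)$ the functional $u\mapsto\mathcal J(s,T,x_s;u)$ is uniformly strongly convex with Hessian bounded below by $2\delta I$, so the value function $V(s,x_s):=\inf_u\mathcal J(s,T,x_s;u)$ exists and is quadratic in $x_s$. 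The mean-field structure of $\mathcal J$ forces a decomposition
\begin{equation*}
V(s,\xi)=\tfrac{1}{2}\mathbb E\langle X(s)(\xi-\mathbb E\xi),\xi-\mathbb E\xi\rangle+\tfrac{1}{2}\langle \hat X(s)\mathbb E\xi,\mathbb E\xi\rangle,
\end{equation*}
uniquely defining $X(s),\hat X(s)\in\mathcal S_n$. A dynamic programming argument on a shrinking window $[s,s+h]$, applied separately to mean-zero random and to deterministic perturbations of $\xi$, combined with Hamiltonian minimization that identifies the optimal feedbacks with (\ref{e.22})--(\ref{e.23}), then shows that $X,\hat X$ satisfy the TVPs (\ref{e.9}), (\ref{e.10}). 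The strict sign conditions (\ref{e.114})--(\ref{e.115}) finally follow from (\ref{e.113}): with the Riccati solutions in hand one may invoke Lemma \ref{L3.1} and test it on localized controls $u(t)=\mathbf 1_{[t_0,t_0+\varepsilon]}(t)\zeta$ (first with $\zeta$ $\mathcal F_{t_0}$-measurable and mean-zero, then with $\zeta$ deterministic), divide by $\varepsilon$, and pass to the limit $\varepsilon\to 0^+$ to obtain $\mathbb R(t_0,X(t_0))\geq 2\delta I_m$ and $\hat{\mathbb R}(t_0,X(t_0))\geq 2\delta I_m$ uniformly in $t_0\in[0,T]$.

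The principal obstacle is the rigorous construction of $X(\cdot),\hat X(\cdot)$ from $V$: justifying the Hessian decomposition into variance and mean parts, verifying regularity of $s\mapsto(X(s),\hat X(s))$, and extracting the exact form of the TVPs from the dynamic programming equation in the mean-field setting, where the state variable $\xi$ lives in $L^2(\Omega;\mathbb R^n)$ and the feedback laws couple through $\mathbb E[\cdot]$. These technicalities parallel the classical stochastic LQ treatment and can be handled by a standard variational argument adapted to the mean-field couplings of (\ref{e.1}) and (\ref{e.3}).
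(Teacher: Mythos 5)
Your proof of $(ii)\Rightarrow(i)$ is correct and is, in substance, the paper's own argument: the paper reaches the same conclusion by contradiction (a normalized sequence $u_k$ with $\mathcal J(0,T,0;u_k)\to 0$ forces the deviations $u^1_k-Fx^1_k$ and $u^2_k-\hat Fx^2_k$ to vanish in $L^2$, whence, by a continuity theorem for the closed-loop equations, $u_k\to 0$, contradicting $\|u_k\|_2=1$), while you make the same underlying fact --- bounded invertibility of the map $u\mapsto(v,w)$ --- quantitative via Gronwall. Either packaging is fine.

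The gap is in $(i)\Rightarrow(ii)$. The paper does not reprove this direction: it is obtained by citing Theorems 4.2 and 4.4 of Sun (2017) for the TVPs (\ref{e.9}) and (\ref{e.10}). Your proposal replaces that citation with a sketch of the underlying dynamic-programming construction, but the sketch defers exactly the steps that carry the difficulty (quadratic form and variance/mean splitting of the value function, regularity of $s\mapsto(X(s),\hat X(s))$, derivation of the TVPs), and, as ordered, it is circular: you propose to obtain the sign conditions (\ref{e.114})--(\ref{e.115}) by invoking Lemma \ref{L3.1} ``with the Riccati solutions in hand,'' but Lemma \ref{L3.1} presupposes that $X$ solves (\ref{e.9}), and (\ref{e.9}) cannot even be written down --- the inverse of $R+\sum_j B_j^\top XB_j$ appears in it, and the Hamiltonian minimization that is supposed to produce the feedbacks (\ref{e.22})--(\ref{e.23}) requires strict positivity of that matrix --- before a version of (\ref{e.114}) is available. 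The standard repair is to establish uniform positivity of $\mathbb R(t,X(t))$ and $\hat{\mathbb R}(t,X(t))$, with $X$ defined only as the Hessian of the value function, \emph{directly} from the uniform convexity (\ref{e.113}) restricted to $[t_0,T]$ (your localized controls, but applied to the raw functional $\mathcal J$ via an It\^o completion of squares with a general symmetric kernel, not via Lemma \ref{L3.1}), and only then derive the Riccati equations. As written, this direction is an outline whose critical step is both missing and placed in an order that cannot work; either supply the argument in the corrected order or do what the paper does and invoke the cited results of Sun.
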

\begin{proof}
The implication $i)\Rightarrow ii)$ follows directly applying Theorem 4.2 and Theorem 4.4 from \cite{Sun:17} in the case of the TVPs (\ref{e.9}) and (\ref{e.10}), respectively.\\
Let us now prove the implication $ii)\Rightarrow i)$. First we show that:
\begin{equation}\label{e.118}
\delta \triangleq\inf\left\{\mathcal{J}(0,T,0;u(\cdot))|u(\cdot)\in L_w^2([0,\; T];\mathbb{R}^m)\; \text{with}\; \|u(\cdot)\|_2=1\right\}>0
\end{equation}
where we denoted:
\begin{equation}\label{e.119}
\|u(\cdot)\|_2\triangleq \left({\mathbb E}\left[\int_0^T|u(t)|^2dt\right]dt\right)^{\frac{1}{2}}
\end{equation}
Let us assume by contrary that (\ref{e.118}) is not true. In this case, there exists a sequence $\{u_k(\cdot)\}_{k\in \mathbb{Z}_+}\subset L_w^2\left([0,\; T];\mathbb{R}^m\right)$ such that $\|u_k(\cdot)\|_2=1$ and
\begin{equation}\label{e.120}
\underset{k\rightarrow \infty}{\lim} \mathcal{J}(0,T,0;u_k(\cdot))=0
\end{equation}
Let $\left(x_k^1(\cdot),\; x^2_k(\cdot)\right)$ be the solution of the stochastic differential equations (\ref{e.x1}) and (\ref{e.x2}) determined by $u_k^1(\cdot)\triangleq u_k(\cdot)-\mathbb{E}[u_k(\cdot)]$, $u_k^2(\cdot)\triangleq {\mathbb E}[u_k(\cdot)]$, and having the initial conditions $x_k^1(0)=0$, $x_k^2(0)=0$.
We rewrite the equations (\ref{e.x1}) and (\ref{e.x2}) satisfied by $\left(x_k^1(\cdot),\; x^2_k(\cdot)\right)$ as:
\begin{subequations}\label{e.121}
\begin{align}
\label{e.121a}
&d x_k^1(t)=\left[\left(A_0(t)+B_0(t)F(t)\right)x_k^1(t)+f_{0k}(t)\right]dt+\sum_{j=1}^r\left[\left(A_j(t)+B_j(t)F(t)\right)x_k^1(t)+\right.\notag\\
&\left.+\left(\hat{A}_j(t)+\hat{B}_j(t)\hat{F}(t)\right)x_k^2(t)+f_{jk}(t)+g_{jk}(t)\right]dw_j(t)\\
\label{e.121b}
&d x_k^2(t)=\left[\left(\hat{A}_0(t)+\hat{B}_0(t)\hat{F}(t)\right)x_k^2(t)+g_{0k}(t)\right]dt
\end{align}
\end{subequations}
where:
\begin{equation*}
\begin{cases}
f_{jk}(t)\triangleq B_j(t)\left(u_k^1(t)-F(t)x_k^1(t)\right)\\
g_{jk}(t)\triangleq \hat{B}_j(t)\left(u_k^2(t)-\hat{F}(t)x_k^2(t)\right)
\end{cases}
\end{equation*}
$0\leq j\leq r$, $k\in \mathbb{Z}_+$. From Lemma \ref{L3.1} together with (\ref{e.114}), (\ref{e.115}), (\ref{e.119}), (\ref{e.120}) we deduce that $\underset{k\rightarrow \infty}{\lim}\|f_{jk}(\cdot)\|_2=0$ and\\ $\underset{k\rightarrow \infty}{\lim}\|g_{jk}(\cdot)\|_2=0$ for all $0\leq j\leq r$. Further, applying Theorem 3.6.1 from \cite{carte2013} in the case of the solutions of (\ref{e.121}) we may deduce that $\underset{k\rightarrow \infty}{\lim}\mathbb{E}\left[\int_0^T|x_k^i(t)|^2dt\right]=0$, $i=1,2$. Hence:
\begin{align*}
\underset{k\rightarrow \infty}{\lim}\|u_k(\cdot)\|_2^2&=\underset{k\rightarrow \infty}{\lim}(\|u_k^1(\cdot)\|_2^2+\|u_k^2(\cdot)\|_2^2)\leq \underset{k\rightarrow \infty}{\lim}\mathbb{E}\left[\int_0^T|F(t)x_k^1(t)|^2dt\right]+\notag\\
&+\underset{k\rightarrow \infty}{\lim}\mathbb{E}\left[\int_0^T|u^1(k)-F(t)x_k^1(t)|^2dt\right]+\underset{k\rightarrow \infty}{\lim}\mathbb{E}\left[\int_0^T|\hat{F}(t)x_k^2(t)|^2dt\right]\notag\\
&+\underset{k\rightarrow \infty}{\lim}\mathbb{E}\left[\int_0^T|u^2(k)-\hat{F}(t)x_k^2(t)|^2dt\right]=0
\end{align*}
This is not possible because $\|u_k(\cdot)\|_2=1$. Thus (\ref{e.118}) is true. Now, (\ref{e.113}) is obtained from (\ref{e.118}) written for $u(\cdot)$ replaced by $\frac{1}{\|u(\cdot)\|_2}u(\cdot)$ and taking into account that: $$\mathcal{J}\left(0,T,0;\frac{1}{\|u(\cdot)\|_2}u(\cdot)\right)=\frac{1}{\|u(\cdot)\|_2^2}\mathcal{J}(0,T,0;u(\cdot))$$ Hence, we have shown that the implication $ii)\rightarrow i)$ holds. Thus the proof is completed.
\end{proof}
\begin{rem}
The proof of Proposition 3.2 could be viewed as an alternative proof of Theorem 5.2 from \cite{Sun:17} which is of interest on its own. 
\end{rem}
Applying the result proved in Proposition \ref{P3.2} in the case of the quadratic functional $\check{\mathcal{J}}(0,T,x_s;u(\cdot))\triangleq -{\mathcal{J}}(0,T,x_s;u(\cdot))$ we obtain:
\begin{prop}\label{P3.3}
For the controlled system (\ref{e.1}) and the quadratic functional (\ref{e.3}) the following are equivalent:
\begin{itemize}
\item [i)] there exists $\hat{\delta} >0$ with the property that :
\begin{equation}\label{e.133}
\mathcal{J}(0,T,0;u(\cdot))\leq -\hat{\delta} \mathbb{E}\left[\int_0^T|u(t)|^2 dt\right];\quad \forall u(\cdot) \in L_w^2\left([0,\; T],\mathbb{R}^m\right)
\end{equation}
\item [ii)] the TVPs (\ref{e.9}) and (\ref{e.10}) have the solutions $X(\cdot):[0,\; T]\rightarrow \mathcal{S}_n$ and $\hat X(\cdot):[0,\; T]\rightarrow \mathcal{S}_n$, respectively, which are satisfying the sign conditions:
\begin{equation}\label{e.134}
R(t)+\sum_{j=1}^r B_j^\top(t) X(t)B_j(t)\leq -\gamma I_m
\end{equation}
\begin{equation}\label{e.135}
\hat{R}(t)+\sum_{j=1}^r \hat{B}_j^\top(t) X(t)\hat{B}_j(t)\leq -\hat{\gamma} I_m
\end{equation}
for all $t\in [0,\; T]$, $\gamma$ and $\hat{\gamma}$ being positive constants.
\end{itemize}
\end{prop}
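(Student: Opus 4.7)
The proof will proceed by reduction to Proposition \ref{P3.2}, exactly as the authors suggest. The plan is to introduce the dual cost functional $\check{\mathcal J}(0,T,x_s;u(\cdot))\triangleq -\mathcal J(0,T,x_s;u(\cdot))$, which corresponds to the same controlled system (\ref{e.1}) but with the weighting matrices $G_T, \bar G_T, M, \bar M, L, \bar L, R, \bar R$ all replaced by their negatives. Condition (i) of Proposition \ref{P3.3} then becomes exactly condition (i) of Proposition \ref{P3.2} applied to $\check{\mathcal J}$, with $\hat{\delta}$ playing the role of $\delta$: indeed, $\mathcal J(0,T,0;u(\cdot))\leq -\hat\delta\,\mathbb E\int_0^T|u(t)|^2dt$ if and only if $\check{\mathcal J}(0,T,0;u(\cdot))\geq \hat\delta\,\mathbb E\int_0^T|u(t)|^2dt$.

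Next, I would identify the Riccati equations associated with $\check{\mathcal J}$. Writing down the TVPs of type (\ref{e.9})--(\ref{e.10}) with $M,L,R,G_T$ (resp.\ $\hat M,\hat L,\hat R,\hat G_T$) replaced by their negatives, and substituting $\check X(t)=-X(t)$, $\check{\hat X}(t)=-\hat X(t)$, a short computation shows that every sign flip is absorbed: the linear terms become $-A_0^\top X-XA_0-\sum A_k^\top XA_k-M$, while in the quadratic term each of the three factors $(XB_0+\sum A_k^\top XB_k+L)$, $[R+\sum B_k^\top XB_k]^{-1}$, $(B_0^\top X+\sum B_k^\top XA_k+L^\top)$ picks up exactly one sign, producing an overall factor of $(-1)^3=-1$, which combines with the outer minus sign to reproduce the original Riccati equation (\ref{e.9}). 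The terminal condition likewise transforms correctly: $\check X(T)=-G_T$ iff $X(T)=G_T$. An identical calculation applies to (\ref{e.10}).

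Consequently, condition (ii) of Proposition \ref{P3.2} applied to $\check{\mathcal J}$ says that the TVPs (\ref{e.9})--(\ref{e.10}) admit solutions $\check X,\check{\hat X}$ on $[0,T]$ satisfying
\[
-R(t)+\sum_{j=1}^r B_j^\top(t)\check X(t)B_j(t)\geq \gamma I_m,\quad -\hat R(t)+\sum_{j=1}^r \hat B_j^\top(t)\check X(t)\hat B_j(t)\geq \hat\gamma I_m.
\]
Substituting $\check X=-X$ turns these inequalities into (\ref{e.134})--(\ref{e.135}), while the existence of $\check X,\check{\hat X}$ on $[0,T]$ is equivalent to the existence of $X=-\check X,\hat X=-\check{\hat X}$ on $[0,T]$ solving (\ref{e.9})--(\ref{e.10}). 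Thus condition (ii) of Proposition \ref{P3.3} is obtained.

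The only nontrivial piece of the argument is the sign-chasing that verifies the substitution $\check X\mapsto -X$ maps the negated Riccati equation back to the original; once this is recorded, the equivalence (i)$\Leftrightarrow$(ii) in Proposition \ref{P3.3} follows immediately from the corresponding equivalence in Proposition \ref{P3.2}. No new functional-analytic ingredients are needed and the result is in this sense a direct dualisation of the previous proposition.
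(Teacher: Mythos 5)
Your proposal is correct and follows exactly the route the paper takes: the paper derives Proposition \ref{P3.3} by applying Proposition \ref{P3.2} to $\check{\mathcal J}=-\mathcal J$, and your sign-chasing (negating the weights, substituting $\check X=-X$, and checking that the three factors of the quadratic term each contribute one sign so the Riccati equation is reproduced) merely spells out the details the paper leaves implicit. No gap.
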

\subsection{Necessary and sufficient conditions for the global existence of the TVPs (\ref{e.9})-(\ref{e.10}) subject to constraints (\ref{e.111})-(\ref{e.112})}
In this section we deal with the problem of the global existence, that is, "the existence on the whole interval $[0,\; T]$" of the solution $X(\cdot)$ of the TVP (\ref{e.9}) satisfying the sign conditions (\ref{e.111}) and (\ref{e.112}) and the problem of global existence on  $[0,\; T]$ of the solution $\hat X(\cdot)$ of the TVP (\ref{e.10}).
To this end, first, we shall associate to the pair formed by the controlled system (\ref{e.5}) and the performance criterion (\ref{e.6}), two other adequately defined pairs of controlled systems and associated quadratic functionals.
For these two pairs we shall apply the results proved in Proposition \ref{P3.2} and Proposition \ref{P3.3} in order to derive necessary and sufficient conditions for the global existence of the solutions of (\ref{e.9}) and (\ref{e.10}) satisfying the sign conditions (\ref{e.111}) and (\ref{e.112}).\\
Let $\left(K(\cdot), W(\cdot)\right):[0,\; T]\rightarrow \mathbb{R}^{m_2\times n}\times \mathbb{R}^{m_2\times m_1}$ and $\left(\hat{K}(\cdot),\hat{W}(\cdot)\right):[0,\; T]\rightarrow \mathbb{R}^{m_2\times n}\times \mathbb{R}^{m_2\times m_1}$ be continuous matrix valued functions.
We set:
\begin{equation}\label{e.136}
u_2(t):=K(t)x(t)+W(t)u_1(t)+\left(\hat K(t)-K(t)\right)\mathbb{E}[x(t)]+\left(\hat W(t)-W(t)\right)\mathbb{E}[u_1(t)]
\end{equation}
One sees that:
\begin{equation}\label{e.137}
\mathbb{E}[u_2(t)]=\hat K(t)\mathbb{E}[x(t)]+\hat W(t)\mathbb{E}[u_1(t)]
\end{equation}
The substitution of (\ref{e.136}) and (\ref{e.137}) in (\ref{e.5}) and (\ref{e.6}) yield:
\begin{align}\label{e.138}
&dx(t)=\left(A_{0K}(t)x(t)+A_{0K\hat K}(t)\mathbb{E}[x(t)]+B_{0W}(t)u_1(t)+B_{0W\hat W}(t)\mathbb{E}[u_1(t)]\right)dt+\notag\\
&+\sum_{j=1}^r\left(A_{jK}(t)x(t)+A_{jK\hat K}(t)\mathbb{E}[x(t)]+B_{jW}(t)u_1(t)+B_{jW\hat W}(t)\mathbb{E}[u_1(t)]\right)dw_j(t)\notag \\
&x(s)=x_s \in \mathbb{R}^n
\end{align}
and:
\begin{align}\label{e.139}
&\mathcal{J}_{KW}^{\hat K \hat W}(s,T,x_s;u_1(\cdot))=\frac{1}{2}\mathbb{E}\left[\left \langle G_Tx_{u_1}(T),x_{u_1}(T)\right \rangle+\left \langle \hat{G}_T\mathbb{E}[x_{u_1}(T)],\mathbb{E}[x_{u_1}(T)]\right \rangle+\right.\notag\\
&\left.+\int_s^T\left(\left \langle \left(\begin{array}{cc}M_K(t) & L_{KW}(t) \\\star & R_W(t)\end{array}\right)\left(\begin{array}{c}x_{u_1}(t) \\u_1(t)\end{array}\right),\left(\begin{array}{c}x_{u_1}(t) \\u_1(t)\end{array}\right) \right\rangle+\right.\right.\notag\\
&\left.\left.+\left \langle \left(\begin{array}{cc}M_{K\hat K}(t) & L_{KW}^{\hat K\hat W}(t) \\\star & R_{W\hat W}(t)\end{array}\right)\left(\begin{array}{c}\mathbb{E}[x_{u_1}](t) \\\mathbb{E}[u_1(t)]\end{array}\right),\left(\begin{array}{c}\mathbb{E}[x_{u_1}(t)] \\\mathbb{E}[u_1(t)]\end{array}\right) \right\rangle\right)\right]
\end{align}
where $x_{u_1}(\cdot)=x(\cdot;s,x_s,u_1(\cdot))$ is the solution of the IVP (\ref{e.138}) determined by the input $u_1(\cdot)\in L_w^2\left\{[0,\; T];\mathbb{R}^{m_1}\right\}$. In (\ref{e.138}) and (\ref{e.139}) we have used the notations:
\begin{subequations}\label{e.140}
\begin{align}
\label{e.140a}
&A_{jK\hat K}(t)=\hat A_{j\hat K}(t)-A_{jK}(t)\\
\label{e.140b}
&\hat A_{j\hat K}(t)={\hat A}_j(t)+ {\hat B}_{j2}(t)\hat K(t)\\
\label{e.140c}
& A_{jK}(t)={A}_j(t)+ {B}_{j2}(t) K(t)
\end{align}
\end{subequations}
\begin{subequations}\label{e.141}
\begin{align}
\label{e.141a}
&B_{jW\hat W}(t)=\hat B_{j\hat W}(t)-B_{jW}(t)\\
\label{e.141b}
&\hat B_{j\hat W}(t)={\hat B}_{j1}(t)+ {\hat B}_{j2}(t)\hat W(t)\\
\label{e.141c}
& B_{jW}(t)={B}_{j1}(t)+ {B}_{j2}(t) W(t)
\end{align}
\end{subequations}
$0\leq j\leq r$,
\begin{subequations}\label{e.142}
\begin{align}
\label{e.142a}
&M_{K\hat K}(t)=\hat M_{\hat K}(t)-M_{K}(t)\\
\label{e.142b}
&\hat M_{\hat K}(t)=\left(\begin{array}{c}I_n \\\hat K(t)\end{array}\right)^\top\left(\begin{array}{cc}\hat M(t) & {\hat L}_2(t) \\\star & {\hat R}_{22}(t)\end{array}\right)\left(\begin{array}{c}I_n \\\hat K(t)\end{array}\right)\\
\label{e.142c}
& M_{K}(t)=\left(\begin{array}{c}I_n \\K(t)\end{array}\right)^\top\left(\begin{array}{cc}M(t) & {L}_2(t) \\\star & {R}_{22}(t)\end{array}\right)\left(\begin{array}{c}I_n \\K(t)\end{array}\right)
\end{align}
\end{subequations}
\begin{subequations}\label{e.143}
\begin{align}
\label{e.143a}
&L_{KW}^{\hat K \hat W}(t)={\hat L}_{\hat K \hat W}(t)-L_{KW}(t)\\
\label{e.143b}
&{\hat L}_{\hat K \hat W}(t)={\hat L}_1(t)+{\hat L}_2(t)\hat W(t)+{\hat K}^\top(t){\hat R}_{12}^\top(t)+{\hat K}^\top(t){\hat R}_{22}(t)\hat W(t)\\
\label{e.143c}
& L_{KW}(t)={L}_1(t)+{L}_2(t)W(t)+{K}^\top(t){R}_{12}^\top(t)+{K}^\top(t){R}_{22}(t)W(t)
\end{align}
\end{subequations}
\begin{subequations}\label{e.144}
\begin{align}
\label{e.144a}
&R_{W\hat W}(t)=\hat R_{\hat W}(t)-R_{W}(t)\\
\label{e.144b}
&\hat R_{\hat W}(t)=\left(\begin{array}{c}I_{m_1} \\\hat W(t)\end{array}\right)^\top\left(\begin{array}{cc}{\hat R}_{11}(t) & {\hat R}_{12}(t) \\\star & {\hat R}_{22}(t)\end{array}\right)\left(\begin{array}{c}I_{m_1} \\\hat W(t)\end{array}\right)\\
\label{e.144c}
& R_{W}(t)=\left(\begin{array}{c}I_{m_1} \\ W(t)\end{array}\right)^\top\left(\begin{array}{cc}{ R}_{11}(t) & {R}_{12}(t) \\\star & {R}_{22}(t)\end{array}\right)\left(\begin{array}{c}I_{m_1} \\W(t)\end{array}\right)
\end{align}
\end{subequations}
The Riccati differential equations of type (\ref{e.9}) and (\ref{e.10}), respectively, associated to the pair consisting of the controlled system (\ref{e.138}) and the quadratic functional (\ref{e.139}) are:
\begin{subequations}\label{e.145}
\begin{align}
\label{e.145a}
&-{\dot{Y}}(t)=A_{0K}^{\top}(t)Y(t)+Y(t)A_{0K}(t)+
\sum_{j=1}^rA_{jK}^{\top}(t)Y(t)A_{jK}(t)-\notag\\
&-\Big(Y(t)B_{0W}(t)+\sum_{j=1}^rA_{jK}^{\top}(t)Y(t)B_{jW}(t)+L_{K W}(t)\Big)\Big(R_{W}(t)+\sum_{j=1}^rB_{jW}^{\top}(t)Y(t)B_{jW}(t)\Big)^{-1}\times\notag\\
&\times\Big(B_{0 W}^{\top}(t)Y(t)+\sum_{j=1}^rB_{j W}^{\top}(t)Y(t)A_{jK}(t)+L_{K W}^{\top}(t)\Big)+M_{K}(t)\\
\label{e.145b}
&Y(T)=G_T
\end{align}
\end{subequations}
\begin{subequations}\label{e.146}
\begin{align}
\label{e.146a}
&-{\dot{\hat Y}}(t)={\hat A}_{0\hat K}^{\top}(t)\hat Y(t)+\hat Y(t){\hat A}_{0\hat K}(t)+
\sum_{j=1}^r{\hat A}_{j\hat K}^{\top}(t)Y(t){\hat A}_{j\hat K}(t)-\notag\\
&-\Big(\hat Y(t){\hat B}_{0\hat W}(t)+\sum_{j=1}^r{\hat A}_{j\hat K}^{\top}(t)Y(t){\hat B}_{j\hat W}(t)+{\hat L}_{\hat K \hat W}(t)\Big)\Big({\hat R}_{\hat W}(t)+\sum_{j=1}^r{\hat B}_{j\hat W}^{\top}(t)Y(t){\hat B}_{j\hat W}(t)\Big)^{-1}\times\notag\\
&\times\Big({\hat B}_{0 \hat W}^{\top}(t)\hat Y(t)+\sum_{j=1}^r{\hat B}_{j \hat W}^{\top}(t)Y(t){\hat A}_{j\hat K}(t)+{\hat L}_{\hat K \hat W}^{\top}(t)\Big)+{\hat M}_{\hat K}(t)\\
\label{e.146b}
&\hat Y(T)={\hat G}_T
\end{align}
\end{subequations}
Applying Proposition 3.3 in the case of the pair consisting of the controlled system (\ref{e.138}) and the quadratic functional (\ref{e.139}), we obtain:
\begin{cor}\label{C3.5}
For the controlled system (\ref{e.138}) and the quadratic functional (\ref{e.139}) associated to the pairs of continuous matrix valued functions $\left(K(\cdot),W(\cdot)\right)$ and $\left(\hat K(\cdot),\hat W(\cdot)\right)$, the following are equivalent:
\item [i)] There exists $\hat \delta >0$ with the property that:
\begin{equation}\label{e.147}
\mathcal{J}_{KW}^{\hat K \hat W}(0,T,0;u_1(\cdot))\leq -\hat \delta \mathbb{E}\left[\int_0^T|u_1(t)|^2dt\right]
\end{equation}
$\forall u_1(\cdot)\in L_w^2\left([0,\; T];\mathbb{R}^{m_1}\right)$.
\item [ii)] The TVPs (\ref{e.145}), (\ref{e.146}) have the solutions $Y_{KW}:[0,\; T]\rightarrow \mathcal{S}_n$ and $Y_{KW}^{\hat K \hat W}:[0,\; T]\rightarrow \mathcal{S}_n$ with the additional property that $Y_{KW}(t)$ satisfies the constraints:
\begin{equation}\label{e.148}
R_W(t)+\sum_{j=1}^rB_{j W}^\top(t)Y_{KW}(t)B_{j W}(t)\leq -\mu I_{m_1}
\end{equation}
\begin{equation}\label{e.149}
{\hat R}_{\hat W}(t)+\sum_{j=1}^r{\hat B}_{j \hat W}^\top(t)Y_{KW}(t){\hat B}_{j \hat W}(t)\leq -\hat\mu I_{m_1}
\end{equation}
for all $t\in [0,\; T]$, $\mu$, $\hat\mu$ being positive constants.
\end{cor}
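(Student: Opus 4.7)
The plan is to obtain Corollary 3.5 by applying Proposition 3.3 verbatim to the pair formed by the controlled system (\ref{e.138}) and the quadratic functional (\ref{e.139}), with $u_1(\cdot)$ playing the role of the control $u(\cdot)$, and with the dimension $m$ of Proposition 3.3 replaced by $m_1$. In other words, the entire content of the corollary is that the new pair is a genuine instance of the general mean-field LQ framework (\ref{e.1})--(\ref{e.3}), and that the Riccati TVPs (\ref{e.145})--(\ref{e.146}) are the corresponding instances of (\ref{e.9})--(\ref{e.10}).

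The first (and essentially only) thing I would verify is this structural matching. Using (\ref{e.140}) one checks immediately that $A_{jK}(t)+A_{jK\hat K}(t)=\hat A_{j\hat K}(t)$, so that the term $A_{jK}(t)x(t)+A_{jK\hat K}(t)\mathbb{E}[x(t)]$ appearing in (\ref{e.138}) is precisely of the form ``$A(t)x+\bar A(t)\mathbb{E}[x]$'' required by the mean-field framework, with the ``hat'' coefficient $A_{jK}+A_{jK\hat K}=\hat A_{j\hat K}$. The same verification using (\ref{e.141}) takes care of the control coefficients $B_{jW}, B_{jW\hat W}$, and using (\ref{e.142})--(\ref{e.144}) takes care of the weights $M_K, M_{K\hat K}$, $L_{KW}, L_{KW}^{\hat K\hat W}$ and $R_W, R_{W\hat W}$. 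Continuity of all these new coefficients on $[0,T]$ is inherited from that of the original data and of $K,W,\hat K,\hat W$, so the standing regularity hypotheses of Section 2 are satisfied.

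Next I would identify the Riccati TVPs. A term-by-term comparison shows that (\ref{e.145}) is exactly (\ref{e.9}) written for the coefficients $A_{0K}, A_{jK}, B_{0W}, B_{jW}, M_K, L_{KW}, R_W$, and that (\ref{e.146}) is exactly (\ref{e.10}) written for $\hat A_{0\hat K}, \hat A_{j\hat K}, \hat B_{0\hat W}, \hat B_{j\hat W}, \hat M_{\hat K}, \hat L_{\hat K\hat W}, \hat R_{\hat W}$, with the coupling $Y(\cdot)$ appearing in the quadratic part of (\ref{e.146}) in the same way that $X(\cdot)$ appears in (\ref{e.10}). Similarly, the sign conditions (\ref{e.148})--(\ref{e.149}) are precisely the specializations of (\ref{e.134})--(\ref{e.135}) from Proposition 3.3 to the new coefficients, and (\ref{e.147}) is exactly (\ref{e.133}) written for $\mathcal{J}_{KW}^{\hat K\hat W}$ with control $u_1$.

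Once all of this bookkeeping is in place, the equivalence (i)$\Leftrightarrow$(ii) in the corollary is simply the statement of Proposition 3.3 applied to the new pair; no further argument is needed. The only mildly delicate point is the correct tracking of the ``hat = original + bar'' structural identities between the weights in (\ref{e.142})--(\ref{e.144}), since at first sight the cost (\ref{e.139}) does not look like (\ref{e.3}) until one recognizes that $M_K+M_{K\hat K}=\hat M_{\hat K}$ etc., but this is a direct algebraic verification rather than a substantive difficulty. I would therefore expect the main body of the write-up to be a short declaration of these identifications followed by a single-sentence invocation of Proposition 3.3.
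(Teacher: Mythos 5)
Your proposal is correct and coincides with the paper's own argument: the corollary is stated there as the direct application of Proposition \ref{P3.3} to the pair formed by the controlled system (\ref{e.138}) and the functional (\ref{e.139}), with the Riccati TVPs (\ref{e.145})--(\ref{e.146}) and sign conditions (\ref{e.148})--(\ref{e.149}) being the corresponding specializations of (\ref{e.9})--(\ref{e.10}) and (\ref{e.134})--(\ref{e.135}). Your explicit verification of the ``hat $=$ original $+$ difference'' identities in (\ref{e.140})--(\ref{e.144}) is exactly the bookkeeping the paper leaves implicit.
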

\noindent Applying Lemma \ref{L3.1} in the case of the quadratic functional (\ref{e.139}) for which (\ref{e.147}) holds, we obtain via (\ref{e.148})- (\ref{e.149}) that
\begin{eqnarray}\label{3.1001}
\mathcal{J}_{KW}^{\hat K \hat W}(s,T,x_s;u_1(\cdot))\leq \frac{1}{2}\langle Y_{KW}^{\hat K\hat W}(s)x_s, x_s\rangle
\end{eqnarray}
for any $s\in[0, T) $, $u_1(\cdot)\in L_w^2([s, T], {\mathbb R}^{m_1}), x_s\in{\mathbb R}^n$.\\
\noindent Now we introduce the following condition:\\
\noindent \textit{\textbf{C}1.) There exist continuous matrix valued functions $\left(K(\cdot),W(\cdot)\right):[0,\; T]\rightarrow \mathbb{R}^{m_2\times n}\times \mathbb{R}^{m_2\times m_1}$ and $\left(\hat K(\cdot),\hat W(\cdot)\right):[0,\; T]\rightarrow \mathbb{R}^{m_2\times n}\times \mathbb{R}^{m_2\times m_1}$ with the property that the mapping $u_1(\cdot)\rightarrow \mathcal{J}_{KW}^{\hat K \hat W}(0,T,0;u_1(\cdot)):L_w^2\left\{[0,\; T];\mathbb{R}^{m_1}\right\}\rightarrow \mathbb{R}$ is uniformly concave.
This means that there exists $\hat \delta >0$ for which (\ref{e.147}) holds.}\\
\noindent Consider the following SDE:
\begin{align}\label{e1.3}
dx(t)&=(A_0(t)x(t)+B_{01}(t)\mu_1(t)+ B_{02}(t)\mu_2(t))dt\notag\\
&+\sum_{j=1}^r(A_j(t)x(t)+B_{j1}(t)\mu_1(t)+B_{j2}(t)\mu_2(t))dw_j(t),\quad x(s)=x_s\in{\mathbb{R}}^n
\end{align}
and the cost functional:
\begin{equation}\label{e1.4}
\mathring{\mathcal{J}}(s,T,x_s,\mu_1,\mu_2)=\frac{1}{2}\mathbb{E}\left[\left\langle G_Tx_\mu(T),x_\mu(T)\right\rangle+\int_s^T\left\langle \mathbb{M}(t)\left(\begin{array}{c}x_\mu(t) \\\mu_1(t) \\\mu_2(t)\end{array}\right), \left(\begin{array}{c}x_\mu(t) \\\mu_1(t) \\\mu_2(t)\end{array}\right)\right\rangle dt\right]
\end{equation}
where $\mathbb{M}(t)=\left(\begin{array}{ccc}M(t) & L_1(t) & L_2(t) \\\star & R_{11}(t) & R_{12}(t) \\\star & \star & R_{22}(t)\end{array}\right)$, $0\leq t\leq T$ and $x_\mu(t)$, $0\leq t\leq T$, is the solution of the initial value problem ({\ref{e1.3}}) corresponding to the input $\mu(t)=\left(\begin{array}{c}\mu_1(t) \\\mu_2(t)\end{array}\right)$.\\
\noindent Setting $\mu_2(t)\equiv\mu_{2KW}(t)= K(t)x(t)+W(t)\mu_1(t)$ in ({\ref{e1.3}}) and ({\ref{e1.4}}), we obtain:
\begin{align}\label{e3.4}
dx(t)=(A_{0K}(t)x(t)+B_{0W}(t)\mu_1(t))dt+\sum\limits_{j=1}^r(A_{jK}(t)x(t)+B_{jW}(t)\mu_1(t))dw_j(t),\;
 x(s)=x_s\in{\mathbb{R}}^n
\end{align}
\begin{align}\label{e3.5}
\mathring{\mathcal{J}}_{K W}(s,T,x_s,\mu_1)=\frac{1}{2}\mathbb{E}\left[\left\langle G_Tx_{\mu_1}(T),x_{\mu_1}(T)\right\rangle+\int\limits_0^T\left\langle\left(
                                                \begin{array}{cc}
                                                  M_{K}(t) & L_{K W}(t) \\
                                                  \star & R_{W}(t) \\
                                                \end{array}
                                              \right)\left(
                                       \begin{array}{c}
                                         x_{\mu_1}(t) \\
                                         \mu_{1}(t) \\
                                       \end{array}
                                     \right),\left(
                                       \begin{array}{c}
                                         x_{\mu_1}(t) \\
                                         \mu_{1}(t) \\
                                       \end{array}
                                     \right)\right\rangle dt\right]
\end{align}
where $x_{\mu_1}(t)$ is the solution of the initial value problem ({\ref{e3.4}}) corresponding to the input $\mu_1(t)$.
\begin{lem}\label{L3.6}
a)  Assume that for the continuous matrix valued functions $(K(\cdot), W(\cdot)): [0,T]\to {\mathbb R}^{m_2\times n} \times {\mathbb R}^{m_2\times m_1}$
the solution $Y_{KW}(\cdot)$ of he TVP of type  (\ref{e.145}) is defined on the whole interval $[0,T]$  and satisfies the sign condition (\ref{e.148}).
Under  these conditions we have:
\begin{eqnarray}\label{3.1002}
\mathring{\mathcal{J}}_{K W}(s,T,x_s;\mu_1(\cdot))\leq \frac{1}{2} \langle Y_{KW}(s)x_s, x_s\rangle
\end{eqnarray}
for all $s\in [0, T)$, $\mu_1(\cdot)\in L_w^2([s,T]; {\mathbb R}^{m_1}), x_s\in{\mathbb R}^n$.\\
b) If the solution $X(\cdot)$ of the TVP (\ref{e.9}) is defined and satisfies the sign condition (\ref{e.111}) on an interval $(\tau, T]$  and if $Y_{KW}(\cdot): [0,T] \to {\cal S}_n$ is the solution of the TVP (\ref{e.145}) satisfying the sign condition (\ref{e.148}), then
\begin{eqnarray}\label{3.1003}
X(s)\leq Y_{KW}(s); \quad \forall   s\in (\tau, T].
\end{eqnarray}
\end{lem}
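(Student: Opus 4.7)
The plan is to handle both parts via a single It\^o/completion-of-squares mechanism, with two different Riccati equations playing the role of the quadratic value function, and then to combine the two resulting identities in part b) through a judicious choice of control.

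For part a), I would apply It\^o's formula to $v(t,x)=\langle Y_{KW}(t)x,x\rangle$ along a trajectory $x_{\mu_1}(\cdot)$ of the (non-mean-field) closed-loop system (\ref{e3.4}), substitute the Riccati equation (\ref{e.145a}) for $\dot{Y}_{KW}$, and collect terms to obtain the standard representation
\begin{equation*}
\mathring{\mathcal{J}}_{KW}(s,T,x_s;\mu_1)=\tfrac12\langle Y_{KW}(s)x_s,x_s\rangle+\tfrac12\mathbb{E}\!\int_s^T\!\bigl\langle\mathcal{R}_W(t)(\mu_1-F_{KW}x_{\mu_1}),\mu_1-F_{KW}x_{\mu_1}\bigr\rangle dt,
\end{equation*}
where $\mathcal{R}_W(t):=R_W(t)+\sum_{j=1}^r B_{jW}^\top(t)Y_{KW}(t)B_{jW}(t)$ and $F_{KW}$ is the associated feedback gain. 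Since the sign condition (\ref{e.148}) forces $\mathcal{R}_W(t)\leq -\mu I_{m_1}$, the integral term is non-positive, and (\ref{3.1002}) follows. This mirrors the argument behind Lemma \ref{L2.1} (specialized to the non-mean-field setting), so no real subtlety is expected here.

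For part b), I would run the same type of argument on $\langle X(t)x,x\rangle$ along the trajectory of (\ref{e1.3}) driven by an arbitrary pair $(\mu_1,\mu_2)\in L_w^2$; this is legitimate on $[s,T]$ whenever $s\in(\tau,T]$, because $X(\cdot)$ is defined on this interval and $\mathbb{R}(t,X(t))$ is invertible there by (\ref{e.111}). The resulting identity reads
\begin{equation*}
\mathring{\mathcal{J}}(s,T,x_s;\mu_1,\mu_2)=\tfrac12\langle X(s)x_s,x_s\rangle+\tfrac12\mathbb{E}\!\int_s^T\!\bigl\langle\mathbb{R}(t,X(t))(\mu-Fx_\mu),\mu-Fx_\mu\bigr\rangle dt.
\end{equation*}
The critical step is to make this integral term non-negative by a clever choice of $\mu$. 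I would take the pathwise feedback $\mu_1^\star(t)=F_1(t)x^\star(t)$, $\mu_2^\star(t)=K(t)x^\star(t)+W(t)\mu_1^\star(t)$, where $x^\star(\cdot)$ is the closed-loop solution of the associated linear SDE (existence and uniqueness in $L_w^2$ being standard); a direct comparison of dynamics shows that the solution of (\ref{e1.3}) driven by $(\mu_1^\star,\mu_2^\star)$ coincides with $x^\star$, so $\mathring{\mathcal{J}}(s,T,x_s;\mu_1^\star,\mu_2^\star)=\mathring{\mathcal{J}}_{KW}(s,T,x_s;\mu_1^\star)$.

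With this choice $\mu^\star-Fx^\star$ has vanishing first block and second block equal to $(K+WF_1-F_2)x^\star$, so the integrand collapses to $\langle\mathbb{R}_{22}(t,X(t))(K+WF_1-F_2)x^\star,(K+WF_1-F_2)x^\star\rangle\geq 0$ thanks to $\mathbb{R}_{22}(t,X(t))\ggcurly 0$ from (\ref{e.111a}). Hence $\mathring{\mathcal{J}}_{KW}(s,T,x_s;\mu_1^\star)\geq\tfrac12\langle X(s)x_s,x_s\rangle$, and combining with the upper bound from part a) gives $\langle X(s)x_s,x_s\rangle\leq\langle Y_{KW}(s)x_s,x_s\rangle$ for every $x_s\in\mathbb{R}^n$, which is precisely (\ref{3.1003}). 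The only conceptual point requiring care is recognizing that the sign condition on $\mathbb{R}_{22}$ alone---rather than on the full indefinite matrix $\mathbb{R}$---is exactly what is needed to convert the $X$-identity into a lower bound once the control is restricted to the class $(\mu_1,Kx+W\mu_1)$; the rest is routine bookkeeping.
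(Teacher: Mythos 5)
Your proposal is correct and follows essentially the same route as the paper: Itô's formula plus completion of squares for part a), and for part b) the same substitution $\mu_1^\star=F_1x^\star$, $\mu_2^\star=Kx^\star+W\mu_1^\star$ that kills the first block of $\mu-Fx_\mu$ and identifies $\mathring{\mathcal{J}}$ with $\mathring{\mathcal{J}}_{KW}$ along that trajectory. The only cosmetic difference is that the paper passes through the factorization (\ref{e.15}) to write the surviving term as $|{\mathbb V}_{22}(\cdot)|^2$, whereas you observe directly that it is the quadratic form of ${\mathbb R}_{22}(t,X(t))\ggcurly 0$; these are the same estimate.
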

\begin{proof}
a)  Applying It\^o's formula in the case of the function $v(t,x)=x^\top Y_{KW}(t)x $ and to the stochastic process $x_{\mu_1}(t)$ defined as the solution of the IVP (\ref{e3.4}) we obtain:
\begin{align*}
&\mathring{\mathcal{J}}_{KW}(s,T,x_s;\mu_1(\cdot))=\frac{1}{2}\left\langle Y_{KW}(s)x_s, x_s\right\rangle +\notag\\
&+\frac{1}{2} {\mathbb E}\left[\int_s^T \left\langle (R_W(t)+\sum_{j=1}^r B_{jW}^\top(t)Y_{KW}(t)B_{jW}(t))(\mu_1(t)-F_{KW}(t)x_{\mu_1}(t)), \mu_1(t)-F_{KW}(t)x_{\mu_1}(t)\right\rangle dt\right] \notag\\
&\leq \frac{1}{2}\left\langle Y_{KW}(s)x_s, x_s\right\rangle
\end{align*}
where $F_{KW}(t)=-\left(R_W(t)+\sum_{j=1}^r B_{jW}^\top(t)Y_{KW}(t)B_{jW}(t)\right)^{-1}\left(B_{0W}^\top(t)Y_{KW}(t)+\right.\\\left.
+\sum_{j=1}^r B_{jW}^\top(t)Y_{KW}(t)A_{jK}(t)+L_{KW}^\top(t)\right)$. This confirm the validity of (\ref{3.1002}).\\
\noindent b)  Applying It\^o's formula to the function $v_1(t,x)=x^\top X(t)x$  and to the stochastic process $x_{\mu}(t)$ defined as a solution of the IVP
(\ref{e1.3}) we obtain:
\begin{align*}
\mathring{\mathcal{J}}(s,T,x_s;\mu_1(\cdot),\mu_2(\cdot))&=\frac{1}{2}\langle X(s)x_s,x_s\rangle +\notag\\
&+\frac{1}{2} {\mathbb E}\left[\int_s^T \left(\begin{array}{c} \mu_1(t)-F_1(t)x_{\mu}(t)\\ \mu_2(t)-F_2(t)x_{\mu}(t)\end{array}\right)^\top {\mathbb R}(t,X(t))\left(\begin{array}{c} \mu_1(t)-F_1(t)x_{\mu}(t)\\ \mu_2(t)-F_2(t)x_{\mu}(t)\end{array}\right) dt\right]
\end{align*}
for all $s\in(\tau, T)$, $\mu_k(\cdot)\in L_w^2([s,T], {\mathbb R}^{m_k}), k=1,2, x_s\in {\mathbb R}^n$, where ${\mathbb R}(t,X(t))$ is computed as in (\ref{e.106}) and $F(t)=(F_1(t)^\top \quad F_2(t)^\top)^\top $ is computed as in (\ref{e.22}).
Further, employing partition (\ref{e.15}) we may write:
 \begin{align}\label{3.1004}
&\mathring{\mathcal{J}}(s,T,x_s;\mu_1(\cdot),\mu_2(\cdot))=\frac{1}{2}\langle X(s)x_s,x_s\rangle +\notag\\
&+\frac{1}{2} {\mathbb E}\left[\int_s^T |{\mathbb V}_{21}(t)(\mu_1(t)-F_1(t)x_{\mu}(t))+{\mathbb V}_{22}(t)(\mu_2(t)-F_2(t)x_{\mu}(t))|^2 dt\right]-\notag\\
&-\frac{1}{2} {\mathbb E}\left[\int_s^T |{\mathbb V}_{11}(t)(\mu_1(t)-F_1(t)x_{\mu}(t))|^2 dt\right]
\end{align}
Let $\hat x(t), t\in[s,T]$ be the solution of the following IVP:
\begin{eqnarray}\label{3.1005}
dx(t)=[A_0(t)+B_{02}(t)K(t)+(B_{01}(t)+B_{02}(t)W(t))F_1(t)]x(t)dt+\nonumber\\
\sum_{j=1}^r [A_j(t)+B_{j2}(t)K(t)+(B_{j1}(t)+B_{j2}(t)W(t))F_1(t)]x(t)dw_j(t)\\
\hat x(s)= x_s\in \mathbb{R}^n\nonumber
\end{eqnarray}
We set  $\hat \mu_1(t):= F_1(t)\hat x(t)$,  $\hat \mu_2(t) :=K(t)\hat x(t)+W(t)\hat \mu_1(t)$ for $t\in[s,T]$.
With these notations, (\ref{3.1005}) may  be rewritten as:
\begin{eqnarray}\label{3.1006}
&d\hat x(t)=(A_0(t)\hat x(t)+B_{01}(t)\hat \mu_1(t)+B_{02}(t)\hat \mu_2(t))dt +\nonumber\\
&\sum_{j=1}^r (A_j(t)\hat x(t)+B_{j1}(t)\hat \mu_1(t)+B_{j2}(t)\hat \mu_2(t))\hat x(t)dw_j(t)\\
&x_{\hat \mu}(s)=x_s\in \mathbb{R}^n\nonumber
\end{eqnarray}
If $x_{\hat \mu}(t)$ is the solution of the IVP (\ref{3.1006}) we may infer that $x_{\hat \mu}(t)=\hat x(t), t\in[s,T]$.
Hence,
\begin{eqnarray}\label{3.1007}
&\hat \mu_1(t)= F_1(t)x_{\hat\mu}(t)\nonumber\\
&\hat \mu_2(t)= K(t)x_{\hat \mu}(t)+W(t)\hat \mu_1(t)
\end{eqnarray}
In this case, (\ref{3.1004}) yields
\begin{eqnarray}\label{3.1008}
\mathring{\mathcal{J}}(s,T,x_s;\hat{\mu}_1(\cdot),\hat{\mu}_2(\cdot))=\frac{1}{2}\langle X(s)x_s,x_s\rangle +\frac{1}{2} {\mathbb E}\left[\int_s^T |{\mathbb V}_{22}(t)(\hat \mu_2(t)-F_2(t)x_{\hat \mu}(t))|^2 dt\right]
\end{eqnarray}
On the other hand, (\ref{e1.4}), (\ref{e3.5}) and (\ref{3.1007}) allow us to obtain the equality:
\begin{eqnarray}\label{3.1009}
\mathring{\mathcal{J}}(s,T,x_s;\hat \mu_1(\cdot),\hat \mu_2(\cdot))=\mathring{\mathcal{J}}_{KW}(s,T,x_s;\hat \mu_1(\cdot))
\end{eqnarray}
for all $s\in(\tau, T]$ and $x_s\in{\mathbb R}^n$.
Finally, employing (\ref{3.1002}) together with (\ref{3.1008}) and (\ref{3.1009}) we obtain (\ref{3.1003}).
Thus the proof ends.
\end{proof}
\noindent Consider now $\left(\Phi(\cdot),\hat \Phi(\cdot)\right):[0,\; T]\rightarrow \mathbb{R}^{m_1\times n}\times \mathbb{R}^{m_1\times n}$ be a pair of continuous matrix valued functions. We set:
\begin{equation}\label{e.150}
u_1(t):=\Phi(t)x(t)+\left(\hat \Phi(t)-\Phi(t)\right)\mathbb{E}[x(t)]
\end{equation}
It follows that:
\begin{equation}\label{e.151}
\mathbb{E}[u_1(t)]=\hat \Phi(t)\mathbb{E}[x(t)]
\end{equation}
Substituting (\ref{e.150}) and (\ref{e.151}) in (\ref{e.5}) and (\ref{e.6}), we obtain:
\begin{align}\label{e.152}
&dx(t)=\left({\cal A}_{0\Phi}(t)x(t)+{\cal A}_{0\Phi\hat \Phi}(t)\mathbb{E}[x(t)]+ B_{02}(t)u_2(t)+\bar{B}_{02}(t)\mathbb{E}[u_2(t)]\right)dt+\notag\\
&+\sum_{j=1}^r\left({\cal A}_{j\Phi}(t)x(t)+{\cal A}_{j\Phi\hat \Phi}(t)\mathbb{E}[x(t)]+B_{j2}(t)u_2(t)+\bar{B}_{j2}(t)\mathbb{E}[u_2(t)]\right)dw_j(t)\notag\\
&x(s)=x_s\in \mathbb{R}^n
\end{align}
\begin{align}\label{e.153}
&\mathcal{J}_{\Phi}^{\hat \Phi}(s,T,x_s;u_2(\cdot))=\frac{1}{2}\mathbb{E}\left[\left \langle G_Tx_{u_2}(T),x_{u_2}(T)\right \rangle+\left \langle \hat{G}_T\mathbb{E}[x_{u_2}(T)],\mathbb{E}[x_{u_2}(T)]\right \rangle+\right.\notag\\
&\left.+\int_s^T\left(\left \langle \left(\begin{array}{cc}{\mathbb M}_\Phi(t) & {\mathbb L}_{\Phi}(t) \\\star & R_{22}(t)\end{array}\right)\left(\begin{array}{c}x_{u_2}(t) \\u_2(t)\end{array}\right),\left(\begin{array}{c}x_{u_2}(t) \\u_2(t)\end{array}\right) \right\rangle+\right.\right.\notag\\
&\left.\left.+\left \langle \left(\begin{array}{cc}{\mathbb M}_{\Phi\hat \Phi}(t) & {\mathbb L}_{\Phi\hat \Phi}(t) \\\star & \bar{R}_{22}(t)\end{array}\right)\left(\begin{array}{c}\mathbb{E}[x_{u_2}](t) \\\mathbb{E}[u_2(t)]\end{array}\right),\left(\begin{array}{c}\mathbb{E}[x_{u_2}(t)] \\\mathbb{E}[u_2(t)]\end{array}\right) \right\rangle\right)\right]
\end{align}
where $x_{u_2}(t)$ is the solution of the IVP (\ref{e.152}) determined by the input $u_2(\cdot)\in L_w^2\left([0,\; T];\mathbb{R}^{m_2}\right)$.\\
In (\ref{e.152}) we denoted:
\begin{subequations}\label{e.154}
\begin{align}
\label{e.154a}
&{\cal A}_{j\Phi\hat \Phi}(t)=\hat {\cal A}_{j\hat \Phi}(t)- {\cal A}_{j\Phi}(t)\\
\label{e.154b}
&\hat {\cal A}_{j\hat \Phi}(t)={\hat A}_j(t)+ {\hat B}_{j1}(t)\hat \Phi(t)\\
\label{e.154c}
& {\cal A}_{j\Phi}(t)={A}_j(t)+ {B}_{j1}(t) \Phi(t)
\end{align}
\end{subequations}
and in (\ref{e.153}) we have used the notations:
\begin{subequations}\label{e.155}
\begin{align}
\label{e.155a}
&{\mathbb M}_{\Phi\hat \Phi}(t)=\hat {\mathbb M}_{\hat \Phi}(t)- {\mathbb M}_{\Phi}(t)\\
\label{e.155b}
&\hat {\mathbb M}_{\hat \Phi}(t)=\left(\begin{array}{c}I_n \\\hat \Phi(t)\end{array}\right)^\top\left(\begin{array}{cc}\hat M(t) & {\hat L}_1(t) \\\star & {\hat R}_{11}(t)\end{array}\right)\left(\begin{array}{c}I_n \\\hat \Phi(t)\end{array}\right)\\
\label{e.155c}
& {\mathbb M}_{\Phi}(t)=\left(\begin{array}{c}I_n \\\Phi(t)\end{array}\right)^\top\left(\begin{array}{cc}M(t) & {L}_1(t) \\\star & {R}_{11}(t)\end{array}\right)\left(\begin{array}{c}I_n \\\Phi(t)\end{array}\right)
\end{align}
\end{subequations}
\begin{subequations}\label{e.156}
\begin{align}
\label{e.156a}
&{\mathbb L}_{\Phi \hat \Phi}(t)={\hat {\mathbb L}}_{\hat \Phi}(t)- {\mathbb L}_{\Phi}(t)\\
\label{e.156b}
&{\hat {\mathbb L}}_{\hat \Phi}(t)={\hat L}_2(t)-{\hat\Phi}^\top(t){\hat R}_{12}(t)\\
\label{e.156c}
& {\mathbb L}_{\Phi}(t)={L}_2(t)-{\Phi}^\top(t){R}_{12}(t)
\end{align}
\end{subequations}
The Riccati differential equations of type (\ref{e.9}), (\ref{e.10}), respectively, associated to the pair consisting of the controlled system (\ref{e.152}) and the quadratic functional (\ref{e.153}) are:
\begin{subequations}\label{e.157}
\begin{align}
\label{e.157a}
&-{\dot{\Upsilon}}(t)={\cal A}_{0\Phi}^{\top}(t)\Upsilon(t)+\Upsilon(t){\cal A}_{0\Phi}(t)+
\sum_{j=1}^r {\cal A}_{j\Phi}^{\top}(t)\Upsilon(t){\cal A}_{j\Phi}(t)-\notag\\
&-\Big(\Upsilon(t)B_{02}(t)+\sum_{j=1}^r {\cal A}_{j\Phi}^{\top}(t)\Upsilon(t)B_{j2}(t) +{\mathbb L}_{\Phi}(t)\Big)\Big(R_{22}(t)+\sum_{j=1}^r B_{j2}^{\top}(t)\Upsilon(t)B_{j2}(t)\Big)^{-1}\times\notag\\
&\times\Big(B_{0 2}^{\top}(t)\Upsilon(t)+\sum_{j=1}^r B_{j 2}^{\top}(t)\Upsilon(t){\cal A}_{j\Phi}(t)+ {\mathbb L}_{\Phi}^{\top}(t)\Big)+ {\mathbb M}_{\Phi}(t)\\
\label{e.157b}
&\Upsilon(T)=G_T
\end{align}
\end{subequations}
\begin{subequations}\label{e.158}
\begin{align}
\label{e.158a}
&-{\dot{\hat \Upsilon}}(t)={\hat {\cal A}}_{0\hat \Phi}^{\top}(t)\hat \Upsilon(t)+\hat \Upsilon(t){\hat {\cal A}}_{0\hat \Phi}(t)+
\sum_{j=1}^r{\hat {\cal A}}_{j\hat \Phi}^{\top}(t)\Upsilon(t){\hat {\cal A}}_{j\hat \Phi}(t)-\notag\\
&-\Big(\hat \Upsilon(t){\hat B}_{02}(t)+\sum_{j=1}^r{\hat {\cal A}}_{j\hat \Phi}^{\top}(t)\Upsilon(t){\hat B}_{j2}(t)+{\hat {\mathbb L}}_{\hat \Phi}(t)\Big)\Big({\hat R}_{22}(t)+\sum_{j=1}^r{\hat B}_{j2}^{\top}(t)\Upsilon(t){\hat B}_{j2}(t)\Big)^{-1}\times\notag\\
&\times\Big({\hat B}_{02}^{\top}(t)\hat \Upsilon(t)+\sum_{j=1}^r{\hat B}_{j2}^{\top}(t)\Upsilon(t){\hat {\cal A}}_{j\hat \Phi}(t)+{\hat {\mathbb L}}_{\hat \Phi}^{\top}(t)\Big)+{\hat {\mathbb M}}_{\hat \Phi}(t)\\
\label{e.146b}
&\hat \Upsilon(T)={\hat G}_T
\end{align}
\end{subequations}
Applying Proposition \ref{P3.2} in the case of the pair consisting of the controlled system (\ref{e.152}) and the quadratic functional (\ref{e.153}), we obtain:
\begin{cor}\label{C3.8}
For the controlled system (\ref{e.152}) and the quadratic functional (\ref{e.153}) associated to the pair of continuous matrix valued functions $\left(\Phi(\cdot),\hat \Phi(\cdot)\right)$, the following are equivalent:
\item [i)] There exists $\delta >0$ with the property that:
\begin{equation}\label{e.159}
\mathcal{J}_{\Phi}^{\hat \Phi}(0,T,0;u_2(\cdot))\geq  \delta \mathbb{E}\left[\int_0^T|u_2(t)|^2dt\right]
\end{equation}
$\forall u_2(\cdot)\in L_w^2\left([0,\; T];\mathbb{R}^{m_2}\right)$.
\item [ii)] The TVPs of type (\ref{e.157}), (\ref{e.158}) have the solutions $\left(\Upsilon_\Phi(\cdot),\Upsilon_\Phi^{\hat \Phi}(\cdot)\right):[0,\; T]\rightarrow \mathcal{S}_n\times \mathcal{S}_n$ with the additional property that $\Upsilon_{\Phi}(t)$ satisfies the constraints:
\begin{equation}\label{e.160}
R_{22}(t)+\sum_{j=1}^rB_{j2}^\top(t)\Upsilon_{\Phi}(t)B_{j2}(t)\geq \nu I_{m_2}
\end{equation}
\begin{equation}\label{e.161}
{\hat R}_{22}(t)+\sum_{j=1}^r{\hat B}_{j2}^\top(t)\Upsilon_{\Phi}(t){\hat B}_{j2}(t)\geq \nu I_{m_2}
\end{equation}
for all $t\in [0,\; T]$, where $\nu$, $\hat\nu$ are positive constants.
\end{cor}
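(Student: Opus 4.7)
The plan is to apply Proposition \ref{P3.2} directly to the auxiliary pair consisting of the controlled system (\ref{e.152}) and the quadratic functional (\ref{e.153}), treating $x$ as the state and $u_2$ as the single input. The substitution (\ref{e.150})--(\ref{e.151}) was precisely engineered so that (\ref{e.152}) is again a linear mean-field SDE of the form (\ref{e.1}) and (\ref{e.153}) is a quadratic cost of the form (\ref{e.3}), now in the reduced unknown $u_2$; thus Proposition \ref{P3.2} can be invoked verbatim.

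First, I would check the structural identification: the drift and diffusion coefficients of (\ref{e.152}) play the role of $(A_k,\bar A_k,B_k,\bar B_k)$ in (\ref{e.1}) via
\begin{align*}
A_k \leftrightarrow {\cal A}_{k\Phi}, \quad \bar A_k \leftrightarrow {\cal A}_{k\Phi\hat\Phi}, \quad B_k \leftrightarrow B_{k2}, \quad \bar B_k \leftrightarrow \bar B_{k2},
\end{align*}
which, using (\ref{e.154}) and the definitions $\hat A_k = A_k+\bar A_k$, $\hat B_k = B_k+\bar B_k$, satisfy the ``sum-rule'' $\hat{(\cdot)} = (\cdot) + \overline{(\cdot)}$ expected in the setup of Section~2. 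Analogously, the cost matrices $(M,\bar M,L,\bar L,R,\bar R)$ of (\ref{e.3}) are identified with $({\mathbb M}_\Phi, {\mathbb M}_{\Phi\hat\Phi}, {\mathbb L}_\Phi, {\mathbb L}_{\Phi\hat\Phi}, R_{22}, \bar R_{22})$ via (\ref{e.155})--(\ref{e.156}), and these identifications are continuous in $t$ because $\Phi(\cdot)$ and $\hat\Phi(\cdot)$ are assumed continuous.

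Second, I would verify that the Riccati equations (\ref{e.9}) and (\ref{e.10}) written for this identified pair coincide term by term with (\ref{e.157}) and (\ref{e.158}), so that $\Upsilon_\Phi(\cdot)$ and $\Upsilon_\Phi^{\hat\Phi}(\cdot)$ are exactly the counterparts of $X(\cdot)$ and $\hat X(\cdot)$ in the Proposition \ref{P3.2} template. With that identification, condition (\ref{e.114}) becomes (\ref{e.160}) and condition (\ref{e.115}) becomes (\ref{e.161}) (with $m$ replaced by $m_2$ and with the constant $\nu$ in place of $\gamma$). The uniform convexity estimate (\ref{e.113}) translates into (\ref{e.159}), and Proposition \ref{P3.2} then yields the equivalence $i)\Leftrightarrow ii)$ asserted in the corollary.

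The only real work is bookkeeping: making sure that the reduced coefficients (\ref{e.154})--(\ref{e.156}) are exactly what one obtains when substituting (\ref{e.150})--(\ref{e.151}) into the expressions (\ref{e.5})--(\ref{e.6}), and then that the Riccati equations assembled from these reduced coefficients match (\ref{e.157})--(\ref{e.158}). I expect this coefficient verification to be the main (and essentially only) obstacle, but it is purely algebraic and reduces to a direct comparison of blocks, since the construction in (\ref{e.136})--(\ref{e.144}) was carried out earlier for $u_2$ in the very same manner. Once this identification is confirmed, the corollary is immediate from Proposition \ref{P3.2}.
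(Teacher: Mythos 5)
Your proposal is correct and follows exactly the paper's route: the paper derives Corollary~3.8 with no separate proof, simply by applying Proposition~3.2 to the pair consisting of the controlled system (\ref{e.152}) and the quadratic functional (\ref{e.153}), with the Riccati equations (\ref{e.157})--(\ref{e.158}) playing the roles of (\ref{e.9})--(\ref{e.10}). Your explicit coefficient identification is precisely the bookkeeping the paper leaves implicit.
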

\noindent Applying Lemma \ref{L3.1} in the case of the quadratic functional (\ref{e.153}) for which (\ref{e.159}) holds, we obtain via (\ref{e.160})- (\ref{e.161}) that
\begin{eqnarray}\label{3.1010}
\mathcal{J}_{\Phi}^{\hat \Phi}(s,T,x_s;u_2(\cdot)) \geq \frac{1}{2}
\left\langle \Upsilon_{\Phi}^{\hat \Phi}(s) x_s, x_s \right\rangle
\end{eqnarray}
for all $s\in [0,T)$, $u_2(\cdot)\in L_w^2([s,T], {\mathbb R}^{m_2})$, $x_s\in{\mathbb R}^n$.\\
\noindent Motivated by the result stated in Corollary \ref{C3.8}, we introduce the condition:\\
\noindent \textit{\textbf{C2}) There exist continuous matrix valued functions $\left(\Phi(\cdot),\hat \Phi(\cdot)\right):[0,\; T]\rightarrow \mathbb{R}^{m_1\times n}\times \mathbb{R}^{m_1\times n}$ with the property that the mapping $u_2(\cdot)\rightarrow \mathcal{J}_{\Phi}^{\hat \Phi}(0,T,0;u_2(\cdot)):L_w^2\left([0,\; T];\mathbb{R}^{m_2}\right)\rightarrow \mathbb{R}$ is uniformly convex, that is there exists $\delta >0$ for which (\ref{e.159}) holds.}\\
\noindent Setting formally $\mu_1(t) \equiv \mu_{1\Phi}(t)=\Phi(t)x(t)$ in ({\ref{e1.3}}) and ({\ref{e1.4}}), we obtain:
\begin{align}\label{e3.4bis}
dx(t)&=({\cal A}_{0\Phi}(t)x(t)+B_{02}(t)\mu_2(t))dt+\sum\limits_{j=1}^r({\cal A}_{j\Phi}(t)x(t)+B_{j2}(t)\mu_2(t))dw_j(t),\quad x(s)=x_s\in{\mathbb{R}}^n
\end{align}
\begin{align}\label{e3.5bis}
\mathring{\mathcal{J}}_{\Phi}(s,T,x_s;\mu_2(\cdot))=\frac{1}{2}\mathbb{E}\left[\left\langle G_Tx_{\mu_2}(T),x_{\mu_2}(T)\right\rangle+\int\limits_0^T\left\langle\left(
                                                \begin{array}{cc}
                                                  {\mathbb M}_{\Phi}(t) & {\mathbb L}_{\Phi}(t) \\
                                                  {\mathbb L}_{\Phi}^\top(t) & R_{22}(t) \\
                                                \end{array}
                                              \right)\left(
                                       \begin{array}{c}
                                         x_{\mu_2}(t) \\
                                         \mu_{2}(t) \\
                                       \end{array}
                                     \right),\left(
                                       \begin{array}{c}
                                         x_{\mu_2}(t) \\
                                         \mu_{2}(t) \\
                                       \end{array}
                                     \right)\right\rangle dt\right]
\end{align}
where $x_{\mu_2}(t)$ is the solution of the initial value problem ({\ref{e3.4bis}}) corresponding to the input $\mu_2(t)$.\\


\begin{lem}\label{L3.8}
a)  Assume that for the continuous matrix valued function $\Phi(\cdot): [0,T]\to {\mathbb R}^{m_1\times n}$, the solution $\Upsilon_{\Phi}(\cdot)$ of the TVP  (\ref{e.157}) is defined on the whole interval $[0,T]$ and satisfies the sign condition (\ref{e.160}).
Under these conditions we have:
\begin{eqnarray}\label{3.1011}
\mathring{\cal J}_{\Phi}(s, T, x_s; \mu_2(\cdot))\geq \frac{1}{2}\left\langle \Upsilon_{\Phi}(s)x_s, x_s\right\rangle
\end{eqnarray}
for all $s\in[0,T), \mu_2(\cdot)\in L_w^2([s,T]; {\mathbb R}^{m_2}), x_s\in{\mathbb R}^n$.\\

\noindent b) If the solution $X(\cdot)$ of the TVP (\ref{e.9}) is defined and satisfies the sign condition (\ref{e.111}) on an interval $(\tau, T]$  and if
$\Upsilon_{\Phi}(\cdot):[0,T]\to {\cal S}_n$ is the solution of the TVP (\ref{e.157})  satisfying the sign condition (\ref{e.160}) then
\begin{eqnarray}\label{3.1012}
X(s)\geq \Upsilon_{\Phi}(s)
\end{eqnarray}
for all $s\in(\tau, T]$.
\end{lem}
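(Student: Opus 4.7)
The plan for part (a) is to apply It\^o's formula to $v(t,x)=\langle\Upsilon_{\Phi}(t)x,x\rangle$ along trajectories of the closed-loop system (\ref{e3.4bis}), use the Riccati equation (\ref{e.157}) to substitute for $\dot\Upsilon_\Phi$ inside the drift, and complete squares with respect to $\mu_2-F_\Phi x_{\mu_2}$, where $F_\Phi$ is the gain naturally associated with $\Upsilon_\Phi$ by (\ref{e.157}). This will yield
\[
\mathring{\mathcal{J}}_{\Phi}(s,T,x_s;\mu_2(\cdot))=\tfrac{1}{2}\langle\Upsilon_{\Phi}(s)x_s,x_s\rangle+\tfrac{1}{2}\mathbb{E}\int_s^T\langle\mathcal{R}_\Phi(t)(\mu_2(t)-F_\Phi(t)x_{\mu_2}(t)),\mu_2(t)-F_\Phi(t)x_{\mu_2}(t)\rangle\,dt,
\]
with $\mathcal{R}_\Phi(t):=R_{22}(t)+\sum_j B_{j2}^\top(t)\Upsilon_\Phi(t)B_{j2}(t)$. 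The sign constraint (\ref{e.160}) makes the integral nonnegative, proving (\ref{3.1011}). This is the exact dual of Lemma~\ref{L3.6}a).

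For part (b) the strategy mirrors Lemma~\ref{L3.6}b) but with the roles of $\mu_1$ and $\mu_2$ swapped. Fix $s\in(\tau,T]$ and $x_s\in\mathbb{R}^n$. Applying It\^o's formula to $v_1(t,x)=\langle X(t)x,x\rangle$ along a trajectory $x_\mu(\cdot)$ of (\ref{e1.3}) and invoking (\ref{e.9}) together with the factorization (\ref{e.15})--(\ref{e.16}) reproduces exactly the representation (\ref{3.1004}). The crux is then to construct a pair $(\tilde\mu_1,\tilde\mu_2)$ satisfying two requirements: (i) $\tilde\mu_1=\Phi\tilde x$, which automatically reduces the cost to $\mathring{\mathcal{J}}_\Phi(s,T,x_s;\tilde\mu_2)$ through the passage from (\ref{e1.3})--(\ref{e1.4}) to (\ref{e3.4bis})--(\ref{e3.5bis}); and (ii) the pointwise identity $\mathbb{V}_{21}(\tilde\mu_1-F_1\tilde x)+\mathbb{V}_{22}(\tilde\mu_2-F_2\tilde x)=0$, which kills the positive square in (\ref{3.1004}).

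Requirement (ii) forces the feedback
\[
\tilde\mu_2(t)=\bigl[F_2(t)-\mathbb{V}_{22}^{-1}(t)\mathbb{V}_{21}(t)\bigl(\Phi(t)-F_1(t)\bigr)\bigr]\tilde x(t),
\]
after which $\tilde x(\cdot)$ is uniquely specified as the solution of a linear closed-loop SDE on $[s,T]$ with initial value $x_s$; by uniqueness of solutions of (\ref{e1.3}) we have $x_{\tilde\mu}=\tilde x$. Substituting back, the middle term of (\ref{3.1004}) vanishes and we obtain
\[
\mathring{\mathcal{J}}(s,T,x_s;\tilde\mu_1,\tilde\mu_2)=\tfrac{1}{2}\langle X(s)x_s,x_s\rangle-\tfrac{1}{2}\mathbb{E}\int_s^T|\mathbb{V}_{11}(t)(\Phi(t)-F_1(t))\tilde x(t)|^2\,dt\leq\tfrac{1}{2}\langle X(s)x_s,x_s\rangle.
\]
By requirement (i), $\mathring{\mathcal{J}}(s,T,x_s;\tilde\mu_1,\tilde\mu_2)=\mathring{\mathcal{J}}_\Phi(s,T,x_s;\tilde\mu_2)$, which by part (a) is at least $\tfrac{1}{2}\langle\Upsilon_\Phi(s)x_s,x_s\rangle$. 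Chaining the two bounds for arbitrary $x_s\in\mathbb{R}^n$ delivers (\ref{3.1012}).

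The only real obstacle I anticipate is verifying that this construction is well posed on $(\tau,T]$: one must check that $\mathbb{V}_{22}(t)$ is invertible and $\mathbb{V}_{11}(t)$ is well defined throughout the interval, so that $\tilde\mu_2\in L_w^2([s,T];\mathbb{R}^{m_2})$. This is exactly the role of the sign conditions (\ref{e.111}), which via the explicit formulas (\ref{e.16}) guarantee both the positive definiteness of $R_{22}(t)+\sum_k B_{k2}^\top(t)X(t)B_{k2}(t)$ (hence invertibility of $\mathbb{V}_{22}$) and the existence of $\mathbb{V}_{11}(t)=(-\mathbb{R}_{22}^\sharp(t,X(t)))^{1/2}$; continuity of all the ingredients then yields the required $L_w^2$ integrability along the linear SDE satisfied by $\tilde x$.
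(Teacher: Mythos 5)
Your proposal is correct and follows essentially the same route as the paper: part (a) is the Itô-formula/completion-of-squares identity with the gain $F_\Phi$ associated to $\Upsilon_\Phi$, and part (b) constructs exactly the closed-loop pair the paper builds via $\tilde K=\mathbb{V}_{22}^{-1}(\mathbb{V}_{21}\ \ \mathbb{V}_{22})F$ and $\tilde W=-\mathbb{V}_{22}^{-1}\mathbb{V}_{21}$ (your explicit feedback $\tilde\mu_2=[F_2-\mathbb{V}_{22}^{-1}\mathbb{V}_{21}(\Phi-F_1)]\tilde x$ is the same map), annihilating the positive square in (\ref{3.1004}) and chaining with part (a). Your closing remark on the invertibility of $\mathbb{V}_{22}$ under (\ref{e.111}) is the same point the paper disposes of via the factorization in Remark \ref{rem.1}.
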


\begin{proof}
a)  Applying It\^o's formula in the case of the function $v_2(t,x)= x^\top \Upsilon_{\Phi}(t) x $ and to the stochastic process $x_{\mu_2}(t)$ defined as the solution of the IVP (\ref{e3.4bis})  we obtain:
\begin{align}\label{3.1013}
&\mathring {\cal J}_{\Phi}(s, T, x_s; \mu_2(\cdot))=\frac{1}{2} \left\langle \Upsilon_{\Phi}(s) x_s, x_s\right\rangle +\notag\\
&+\frac{1}{2} {\mathbb E}\left[\int_s^T \left\langle \left(R_{22}(t)+\sum_{j=1}^r B_{j2}^\top(t)\Upsilon_{\Phi}(t)B_{j2}(t)\right)\left(\mu_2(t)-F_{\Phi}(t)x_{\mu_2}(t)\right), \mu_2(t)-F_{\Phi}(t)x_{\mu_2}(t)\right\rangle dt\right]
\end{align}
where $F_{\Phi}(t):= -(R_{22}(t)+\sum_{j=1}^r B_{j2}^\top(t)\Upsilon_{\Phi}(t)B_{j2}(t))^{-1}(B_{02}^\top(t)\Upsilon_{\Phi}(t)+\sum_{j=1}^r B_{j2}^\top(t)\Upsilon_{\Phi}(t){\cal A}_{j\Phi}(t)+{\mathbb L}_{\Phi}^\top(t))$.
Now, (\ref{3.1011}) follows from (\ref{3.1013}) together with (\ref{e.160}).\\

\noindent b)  Under the considered assumptions the quadratic functional (\ref{e1.4})  takes the form given in (\ref{3.1004}) for all 
$s\in (\tau, T]$, $\mu_k(\cdot) \in L_w^2([s,T]; {\mathbb R}^{m_k}), k=1,2, x_s\in{\mathbb R}^n$.
Let $\bar x(t),t \in [s,T]$ be the solution of the following IVP:
\begin{eqnarray}\label{3.1014}
dx(t)=[A_0(t)+(B_{01}(t)+B_{02}(t)\tilde W(t))\Phi(t)+B_{02}(t)\tilde K(t)] x(t)dt+\nonumber\\
\sum_{j=1}^r[A_j(t)+(B_{j1}(t)+B_{j2}(t)\tilde W(t))\Phi(t)+B_{j2}(t)\tilde K(t)]x(t)dw_j(t)\\
\bar x(s)=x_s\in \mathbb{R}^n\nonumber
\end{eqnarray}
where 
\begin{eqnarray}\label{3.1014bis}
&\tilde K(t)={\mathbb V}(t)^{-1} ( {\mathbb V}_{21}(t) \qquad  {\mathbb V}_{22}(t)) \; F(t)\\
&\tilde W(t)= - {\mathbb V}_{22}^{-1}(t) {\mathbb V}_{21}(t) \nonumber
\end{eqnarray}
with $F(t)$ defined in (\ref{e.22}). 
We set $\bar \mu(t)=(\bar \mu_1^\top(t) \qquad \bar \mu_2^\top(t) )^\top$ 
\begin{eqnarray}\label{3.1015}
&\bar \mu_1(t):= \Phi(t)\bar x(t)\\
&\bar \mu_2(t):=\tilde K(t) \bar x(t)+\tilde W(t)\bar \mu_1(t)\nonumber
\end{eqnarray}
With these notations (\ref{3.1014}) becomes 
\begin{eqnarray}\label{3.1016}
dx(t)=(A_0(t)x(t)+B_{01}(t)\bar \mu_1(t)+B_{02}(t)\bar\mu_2(t))dt+\nonumber\\
\sum_{j=1}^r (A_j(t)x(t)+B_{j1}(t)\bar\mu_1(t)+B_{j2}(t)\bar \mu_2(t))dw_j(t)\\
\bar x(s)=x_s\in \mathbb{R}^n\nonumber
\end{eqnarray}
If $x_{\bar\mu}(t)$ is the solution of the IVP (\ref{3.1016}) then (\ref{3.1015}) becomes 
\begin{eqnarray}\label{3.1017}
&\bar \mu_1(t)= \Phi(t)x_{\bar \mu}(t)\\
&\bar \mu_2(t)=\tilde K(t) x_{\bar\mu}(t)+\tilde W(t)\bar \mu_1(t)\nonumber
\end{eqnarray}
In this case (\ref{3.1004}) yields 
\begin{eqnarray}\label{3.1018}
\mathring{\mathcal{J}}(s,T,x_s;\bar \mu_1(\cdot),\bar \mu_2(\cdot))=\frac{1}{2}\langle X(s)x_s,x_s\rangle-
\frac{1}{2} {\mathbb E}\left[\int_s^T |{\mathbb V}_{11}(t)(\Phi(t)-F_1(t)) x_{\bar\mu}^1(t)|^2dt\right]
\end{eqnarray}
On the other hand, from (\ref{e1.4}), (\ref{e3.5bis})  and (\ref{3.1017}) we deduce that 
\begin{eqnarray}\label{3.1019}
\mathring{\mathcal{J}}(s,T,x_s;\bar \mu_1(\cdot),\bar \mu_2(\cdot))=\mathring{\mathcal{J}}_{\Phi}(s,T,x_s;\bar \mu_2(\cdot))
\end{eqnarray}
Employing (\ref{3.1011}), (\ref{3.1018}) and (\ref{3.1019}) we may conclude that (\ref{3.1012}) holds. Thus the proof is complete.
\end{proof}
\noindent The next two Lemmas provide information about the behavior of the solution $\hat X(\cdot)$ of the TVP (\ref{e.10}).



\begin{lem}\label{L3.11}
Assume that the solution $X(\cdot)$ of the TVP (\ref{e.9}) is defined and satisfies the sign condition (\ref{e.111}) on the interval $(\tau,\; T]$. Let $\mathcal{I}_{\hat X}\subset [0,\; T]$ be the maximal interval where the solution $\hat X(\cdot)$ of the TVP (\ref{e.10}) is defined. If $\left(K(\cdot),\; W(\cdot)\right)$ and $\left(\hat K(\cdot),\; \hat W(\cdot)\right)$ are two pairs of continuous matrix values functions satisfying \textbf{C1)} then we have:
\begin{equation}\label{3.1020}
\hat X(s)\leq Y_{KW}^{\hat K \hat W}(s)
\end{equation}
for all $s\in \mathcal{I}_{\hat X} \bigcap (\tau,\; T]$, $Y_{KW}^{\hat K \hat W}(\cdot)$ being the solution of the TVP (\ref{e.146}).
\end{lem}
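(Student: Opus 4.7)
The plan is to adapt the template of Lemma~\ref{L3.6}~b) to the mean-field setting by computing $\mathcal J(s,T,x_s;u_1,u_2)$ in two complementary ways for a judiciously chosen $(u_1,u_2)$ and comparing. Fix $s\in\mathcal I_{\hat X}\cap(\tau,T]$ and $x_s\in\mathbb R^n$; since $X(\cdot)$ is defined on $(\tau,T]$ by hypothesis, we have $s\in\mathcal I_X\cap\mathcal I_{\hat X}$, so the representation (\ref{e.48}) of Lemma~\ref{L3.1} is available.

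First, I substitute the feedback law (\ref{e.136}) associated with the given $(K,W,\hat K,\hat W)$ into $u_2$. Under this substitution the controlled system (\ref{e.5}) collapses to the reduced system (\ref{e.138}) and $\mathcal J(s,T,x_s;u_1,u_2^{\mathrm{fb}})=\mathcal J_{KW}^{\hat K\hat W}(s,T,x_s;u_1)$. Condition \textbf{C1} together with Corollary~\ref{C3.5} then gives (cf.\ (\ref{3.1001}))
\begin{equation*}
\mathcal J(s,T,x_s;u_1,u_2^{\mathrm{fb}})\le \tfrac12\langle Y_{KW}^{\hat K\hat W}(s)x_s,x_s\rangle,\qquad \forall\,u_1(\cdot)\in L_w^2([s,T];\mathbb R^{m_1}).
\end{equation*}

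Second, I apply Lemma~\ref{L3.1} to the same pair $(u_1,u_2^{\mathrm{fb}})$ and use the factorization (\ref{e.15})--(\ref{e.16}) of $\mathbb R(t,X(t))$, which is legitimate by (\ref{e.111}). I specialize $u_1(t)=F_1(t)(x(t)-\mathbb E[x(t)])+\hat F_1(t)\mathbb E[x(t)]$ with $F,\hat F$ the main-Riccati feedbacks from (\ref{e.22})--(\ref{e.23}), and let $\bar x$ denote the resulting closed-loop state. Since $\mathbb E[\bar x^1]=0$ and $\bar x^2=\mathbb E[\bar x]$ is deterministic, this choice forces $\bar u_1^1-F_1\bar x^1\equiv 0$ and $\bar u_1^2-\hat F_1\bar x^2\equiv 0$. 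The $\mathbb R$-integrand therefore collapses to the non-negative $|\mathbb V_{22}(t)(K+WF_1-F_2)\bar x^1(t)|^2$, while the $\hat{\mathbb R}$-integrand collapses to $\langle\hat{\mathbb R}_{22}(t,X(t))(\hat K+\hat W\hat F_1-\hat F_2)\bar x^2(t),(\hat K+\hat W\hat F_1-\hat F_2)\bar x^2(t)\rangle$. Combining with the first bound and letting $x_s\in\mathbb R^n$ be arbitrary gives $\hat X(s)\le Y_{KW}^{\hat K\hat W}(s)$ provided the mean-field residue is also non-negative.

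The hard part will be verifying the non-negativity of that mean-field residue. Because (\ref{e.112}) is not assumed, $\hat{\mathbb R}(t,X(t))$ does not admit the factorization (\ref{e.17})--(\ref{e.18}) and the sign of $\hat{\mathbb R}_{22}(t,X(t))$ is not prescribed by the hypotheses. The plan to overcome this is the following: (i) invoke Lemma~\ref{L3.6}~b) --- applicable here because $X$ satisfies (\ref{e.111}) on $(\tau,T]$ and, by Corollary~\ref{C3.5} under \textbf{C1}, the reduced Riccati solution $Y_{KW}(\cdot)$ exists on $[0,T]$ with the sign condition (\ref{e.148}) --- to deduce $X(t)\le Y_{KW}(t)$ on $(\tau,T]$, which transfers (\ref{e.149}) into $(I,\hat W)^\top\hat{\mathbb R}(t,X(t))(I,\hat W)\le-\hat\mu I$; (ii) equate the two representations above to obtain that the difference $I_A(u_1)-I_B(u_1)$ of the corresponding integral remainders is constant in $u_1$ and equal to $\tfrac12\langle(Y_{KW}^{\hat K\hat W}-\hat X)(s)x_s,x_s\rangle$; and (iii) evaluate this constant at the reduced-problem optimal feedback (for which $I_B$ vanishes by the equality case of (\ref{3.1001})) and combine with the transferred negative-definiteness bound on $(I,\hat W)^\top\hat{\mathbb R}(t,X(t))(I,\hat W)$ via completion of squares in $u_1^2$ to force the correct sign of the constant, thereby yielding $\hat X(s)\le Y_{KW}^{\hat K\hat W}(s)$.
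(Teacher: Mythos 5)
Your first two paragraphs are, in substance, the paper's own proof: insert the feedback law (\ref{e.136}) so that $\mathcal J(s,T,x_s;u_1,u_2^{\mathrm{fb}})=\mathcal J_{KW}^{\hat K\hat W}(s,T,x_s;u_1)\le\frac12\langle Y_{KW}^{\hat K\hat W}(s)x_s,x_s\rangle$ by \textbf{C1}, Corollary \ref{C3.5} and (\ref{3.1001}); then specialize $u_1$ to the closed-loop law built from $F_1,\hat F_1$ so that in the representation (\ref{e5.19}) the two negative terms vanish and only the $\langle\mathbb R_{22}(t,X(t))\,\cdot,\cdot\rangle$- and $\langle\hat{\mathbb R}_{22}(t,X(t))\,\cdot,\cdot\rangle$-squares survive; comparing the two evaluations of $\mathcal J$ gives (\ref{3.1020}). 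This is exactly the construction of the processes $\xi^1,\xi^2,\check u_1,\check u_2$ and the identities (\ref{3.1025})--(\ref{3.1026}) in the paper.

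The difficulty you flag in your last paragraph is real, but your proposed repair does not resolve it. You correctly observe that the surviving mean-field term is non-negative only if $\hat{\mathbb R}_{22}(t,X(t))\ge0$, i.e.\ (\ref{e.112a}), which is not among the lemma's stated hypotheses; the paper's proof tacitly assumes it, since (\ref{e5.19}) and (\ref{3.1025}) rest on the factorization (\ref{e.17})--(\ref{e.18}) of $\hat{\mathbb R}(t,X(t))$, which is only available under (\ref{e.112}) (this is harmless where the lemma is actually invoked, because in the proof of Theorem 3.15 the interval $(\tau,T]$ is by construction one on which $X$ satisfies both (\ref{e.111}) and (\ref{e.112})). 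Your three-step plan does not recover this: step (i) only yields that the congruence of $\hat{\mathbb R}(t,X(t))$ by $\left(\begin{smallmatrix}I_{m_1}&0\\ \hat W&I_{m_2}\end{smallmatrix}\right)$ has uniformly negative $(1,1)$-block, which constrains the quadratic form on an $m_1$-dimensional subspace and says nothing about its restriction to $\{0\}\times\mathbb R^{m_2}$, i.e.\ about the sign of $\hat{\mathbb R}_{22}(t,X(t))$ (for an indefinite matrix these two corners are independent; $\hat{\mathbb R}(t,X(t))$ could even be negative definite). Steps (ii)--(iii) are circular: the identity $\frac12\langle(Y_{KW}^{\hat K\hat W}-\hat X)(s)x_s,x_s\rangle=I_A(u_1)-I_B(u_1)$ together with $I_B\le0$ reduces the claim to exhibiting one admissible $u_1$ with $I_A(u_1)\ge0$; the only $u_1$ for which $I_A$ simplifies is the $F_1/\hat F_1$ feedback of your second paragraph, whose $I_A$ contains precisely the $\hat{\mathbb R}_{22}(t,X(t))$-term whose sign is in question, while evaluating at the reduced-problem optimal feedback makes $I_B$ vanish but leaves all four terms of $I_A$, two of which are negative. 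The correct fix is simply to carry (\ref{e.112a}) (indeed all of (\ref{e.112})) as a hypothesis on $(\tau,T]$, exactly as the paper implicitly does; with that, your second paragraph already completes the proof and the third paragraph is unnecessary.
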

\begin{proof}
Let $s\in \mathcal{I}_{\hat X}\bigcap(\tau,\; T]$. Let $\left(\xi^1(\cdot),\; \xi^2(\cdot)\right)$ be the solution of the following IVP:
\begin{equation}\label{3.1021}
\begin{cases}
d\xi^1(t)=\left[A_0(t)+(B_{01}(t)+B_{02}(t)W(t))F_1(t)+B_{02}(t)K(t)\right]\xi^1(t)dt+\\
+\sum_{j=1}^r\{\left[A_j(t)+(B_{j1}(t)+B_{j2}(t)W(t))F_1(t)+B_{j2}(t)K(t)\right]\xi^1(t)+\\
+\left[\hat{A}_j(t)+(\hat{B}_{j1}(t)+\hat{B}_{j2}(t)\hat{W}(t))\hat{F}_1(t)+\hat{B}_{j2}(t)\hat{K}(t)\right]\xi^2(t)\}dw_j(t)\\
d\xi^2(t)=\left[\hat{A}_0(t)+(\hat{B}_{01}(t)+\hat{B}_{02}(t)\hat{W}(t))\hat{F}_1(t)+\hat{B}_{02}(t)\hat{K}(t)\right]\xi^2(t)dt\\
\xi^1(s)=0\\
\xi^2(s)=x_s\in \mathbb{R}^n
\end{cases}
\end{equation}
where $F_1(t)=(I_{m_1}\; 0)F(t)$, $\hat{F}_1(t)=(I_{m_1}\; 0)\hat{F}(t)$, $F(t)$ and $\hat F(t)$ being computed as in (\ref{e.22}) and (\ref{e.23}), respectively. For each $t\in [s,\; T]$ we set:
\begin{subequations}\label{3.1022}
\begin{align}
\label{3.1022a}
&\check{u}_1^1(t)\triangleq F_1(t)\xi^1(t)\\
\label{3.1022b}
&\check{u}_1^2(t)\triangleq \hat{F}_1(t)\xi^2(t)\\
\label{3.1022c}
&\check{u}_1(t)\triangleq \check{u}_1^1(t)+ \check{u}_1^2(t)
\end{align}
\end{subequations}
\begin{subequations}\label{3.1023}
\begin{align}
\label{3.1023a}
&\check{u}_2^1(t)\triangleq K(t)\xi^1(t)+W(t)\check{u}_1^1(t)\\
\label{3.1023b}
&\check{u}_2^2(t)\triangleq \hat{K}(t)\xi^2(t)+\hat W(t)\check{u}_1^2(t)\\
\label{3.1023c}
&\check{u}_2(t)\triangleq \check{u}_2^1(t)+ \check{u}_2^2(t)
\end{align}
\end{subequations}
If $\xi(t)\triangleq \xi^1(t)+\xi^2(t)$, then (\ref{3.1021})-(\ref{3.1023}) allow us to obtain:
\begin{equation}\label{3.1024}
\begin{cases}
d\xi(t)=\left[A_0(t)\xi(t)+\bar A_0(t)\xi^2(t)+B_{01}(t)\check{u}_1(t)+\bar B_{01}(t)\check{u}_1^2(t)+B_{02}(t)\check{u}_2(t)+\bar B_{02}(t)\check{u}_2^2(t)\right]dt+\\
+\sum_{j=1}^r\left[A_j(t)\xi(t)+\bar A_j(t)\xi^2(t)+B_{j1}(t)\check{u}_1(t)+\bar B_{j1}(t)\check{u}_1^2(t)+B_{j2}(t)\check{u}_2(t)+\bar B_{j2}(t)\check{u}_2^2(t)\right]dw_j(t)\\
\xi(s)=x_s\in \mathbb{R}^n
\end{cases}
\end{equation}
From the first equation of (\ref{3.1021}) we deduce that:
\begin{equation*}
\begin{cases}
d \mathbb{E}[\xi^1(t)]=[A_0(t)+(B_{01}(t)+B_{02}W(t))F_1(t)+B_{02}(t)K(t)]\mathbb{E}[\xi^1(t)] dt\\
\mathbb{E}[\xi^1(s)]=0
\end{cases}
\end{equation*}
Hence, $\mathbb{E}[\xi^1(t)]=0$, $\forall t\in [s,\; T]$. From the second equation of (\ref{3.1021}) we may infer that $\xi^2(t)=\mathbb{E}[\xi^2(t)]$, $\forall t\in [s,\; T]$ if $x_s\in \mathbb{R}^n$. Thus we have obtained that $\mathbb{E}[\xi(t)]=\xi^2(t)$. Further (\ref{3.1022}) and (\ref{3.1023}) allow us to deduce that $\mathbb{E}[\check{u}_k(t)]=\check{u}_k^2(t)$, $t\in [s,\; T]$, $k=1,2$. Substituting these equalities in (\ref{3.1024}) we may conclude that the solution $\xi(\cdot)$ of the IVP (\ref{3.1024}) coincides with the solution $x_{\check{u}}$ of the IVP (\ref{e.5}) corresponding to the input $\check{u}(t)=(\check{u}_1^\top(t)\quad\check{u}_2^\top(t))^\top$. One can easily check that $\check{u}_1(t)=F_1(t)x_{\check{u}}^1(t)$ and $\check{u}_2(t)=\hat F_1(t)x_{\check{u}^2}^2(t)$. Hence the equality (\ref{e5.19}) written for $u_k(\cdot)$ replaced by $\check u_k(\cdot)$ yields:    
\begin{align}\label{3.1025}
&\mathcal{J}(s,T,x_s;\check u_1(\cdot),\check u_2(\cdot))=\frac{1}{2}{\mathbb E}[\langle \hat{X}(s)x_s,x_s\rangle]\notag\\
&+\frac{1}{2}\mathbb{E}\left[\int_s^T\left\{|{\mathbb{V}}_{22}(t)(\check u^1_2(t)-F_2(t)x^1_{\check u}(t))|^2+|{\hat{\mathbb{V}}}_{22}(t)(\check u^2_2(t)-\hat{F}_2(t)x^2_{\check u^2}(t))|^2\right\}dt\right]
\end{align}
By direct calculation one obtains from (\ref{3.1022}) and (\ref{3.1023}) that $\check u_2(t)=K(t)x_{\check u}(t)+W(t)\check u_1(t)+(\hat K(t)-K(t))\mathbb E[x_{\check u}(t)]+(\hat W(t)-W(t))\mathbb E[{\check u}(t)]$, $t\in [s,\; T]$. Based on this equality one obtains that:
\begin{equation}\label{3.1026}
\mathcal{J}(s,T,x_s;\check u_1(\cdot),\check u_2(\cdot))=\mathcal{J}_{KW}^{\hat K \hat W}(s,T,x_s;\check u_1(\cdot))
\end{equation}
Employing (\ref{3.1001}), (\ref{3.1025}), (\ref{3.1026}) we deduce that $\left \langle\hat X(s)x_s,x_s\right\rangle\leq \left \langle\hat Y_{KW}^{\hat K \hat W}(s)x_s,x_s\right\rangle$. Hence, (\ref{3.1020}) holds because $x_s$ is arbitrary in $\mathbb{R}^n$. Thus the proof is completed.
\end{proof}
\begin{lem}\label{L3.13}
Assume that the solution $X(\cdot)$ of the TVP (\ref{e.9}) is defined and satisfies the sign condition (\ref{e.111}) on the interval $(\tau,\; T]$. Let $\mathcal{I}_{\hat X}\subset [0,\; T]$ be the maximal interval where the solution $\hat X(\cdot)$ of the TVP (\ref{e.10}) is defined. If $\left(\Phi(\cdot),\; \hat \Phi(\cdot)\right)$ are two continuous matrix values functions satisfying \textbf{C2)} then we have:
\begin{equation}\label{3.1027bis}
\hat X(s)\geq \Upsilon_{\Phi}^{\hat \Phi}(s)
\end{equation}
for all $s\in \mathcal{I}_{\hat X} \bigcap (\tau,\; T]$, $\Upsilon_{\Phi}^{\hat \Phi}(\cdot)$ being the solution of the TVP (\ref{e.158}).
\end{lem}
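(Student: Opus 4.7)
The plan is to dualize the argument of Lemma \ref{L3.11}. In Lemma \ref{L3.11} one fixed a pair $(K,W)$, $(\hat K,\hat W)$ that implements the follower's feedback and derived the upper estimate $\hat X(s)\leq Y_{KW}^{\hat K\hat W}(s)$ via the representation formulas for $\mathcal{J}_{KW}^{\hat K\hat W}$ and $\mathcal{J}$. Here the roles are swapped: I want to use the leader's feedback $\Phi,\hat\Phi$ satisfying \textbf{C2}), produce a pair $(\check u_1,\check u_2)$ realising $\check u_1$ in exactly the form (\ref{e.150}), and combine the identity (\ref{e5.19}) for $\mathcal{J}$ with the lower bound (\ref{3.1010}) for $\mathcal{J}_{\Phi}^{\hat\Phi}$.

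Fix $s\in\mathcal{I}_{\hat X}\cap(\tau,T]$ and $x_s\in\mathbb{R}^n$. First I introduce the auxiliary IVP analogous to (\ref{3.1021}) in which the \textit{leader} is driven by $\Phi,\hat\Phi$ and the \textit{follower} is driven by the feedback laws $K,W,\hat K,\hat W$ from (\ref{e.24})--(\ref{e.25}):
\begin{equation*}
\begin{cases}
d\xi^1(t)=\bigl[A_0+(B_{01}+B_{02}W)\Phi+B_{02}K\bigr]\xi^1\,dt\\
\qquad\;+\sum_{j=1}^r\bigl\{\bigl[A_j+(B_{j1}+B_{j2}W)\Phi+B_{j2}K\bigr]\xi^1
+\bigl[\hat A_j+(\hat B_{j1}+\hat B_{j2}\hat W)\hat\Phi+\hat B_{j2}\hat K\bigr]\xi^2\bigr\}dw_j,\\
d\xi^2(t)=\bigl[\hat A_0+(\hat B_{01}+\hat B_{02}\hat W)\hat\Phi+\hat B_{02}\hat K\bigr]\xi^2\,dt,\\
\xi^1(s)=0,\qquad \xi^2(s)=x_s.
\end{cases}
\end{equation*}
I then set $\check u_1^1:=\Phi\xi^1$, $\check u_1^2:=\hat\Phi\xi^2$, $\check u_2^1:=K\xi^1+W\check u_1^1$, $\check u_2^2:=\hat K\xi^2+\hat W\check u_1^2$, and $\check u_k:=\check u_k^1+\check u_k^2$, $k=1,2$. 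Exactly as in Lemma \ref{L3.11}, taking expectations in the first equation yields $\mathbb{E}[\xi^1(t)]=0$, while determinism of the second gives $\xi^2(t)=\mathbb{E}[\xi^2(t)]=\mathbb{E}[\xi(t)]$ with $\xi=\xi^1+\xi^2$. Hence $\check u_1^1=\check u_1-\mathbb{E}[\check u_1]$, $\check u_1^2=\mathbb{E}[\check u_1]$ (and analogously for $\check u_2$), and $\xi$ satisfies (\ref{e.5}) driven by $\check u=(\check u_1^{\top},\check u_2^{\top})^{\top}$, so $\xi=x_{\check u}$. In particular, $\check u_1=\Phi\, x_{\check u}+(\hat\Phi-\Phi)\mathbb{E}[x_{\check u}]$, i.e.\ $\check u_1$ is precisely in the feedback form (\ref{e.150}), which gives the identity
\begin{equation*}
\mathcal{J}(s,T,x_s;\check u_1(\cdot),\check u_2(\cdot))=\mathcal{J}_{\Phi}^{\hat\Phi}(s,T,x_s;\check u_2(\cdot)).
\end{equation*}

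The main computational step is to evaluate $\mathcal{J}(s,T,x_s;\check u_1,\check u_2)$ through (\ref{e5.19}) and show that the two ``plus-squared'' terms involving $\mathbb{V}_{21}\cdot+\mathbb{V}_{22}\cdot$ and $\hat{\mathbb V}_{21}\cdot+\hat{\mathbb V}_{22}\cdot$ vanish identically. Using $\mathbb{V}_{22}W=-\mathbb{V}_{21}$ and $\mathbb{V}_{22}K=\mathbb{V}_{21}F_1+\mathbb{V}_{22}F_2$ (from (\ref{e.24})--(\ref{e.25})), a direct calculation gives
\begin{equation*}
\mathbb{V}_{21}(\check u_1^1-F_1\xi^1)+\mathbb{V}_{22}(\check u_2^1-F_2\xi^1)=0,
\end{equation*}
and the analogous cancellation holds in the hatted block. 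Consequently (\ref{e5.19}) collapses to
\begin{equation*}
\mathcal{J}(s,T,x_s;\check u_1,\check u_2)=\tfrac12\langle\hat X(s)x_s,x_s\rangle-\tfrac12\mathbb{E}\!\int_s^{T}\!\bigl(|\mathbb{V}_{11}(\check u_1^1-F_1\xi^1)|^2+|\hat{\mathbb V}_{11}(\check u_1^2-\hat F_1\xi^2)|^2\bigr)dt\leq\tfrac12\langle\hat X(s)x_s,x_s\rangle.
\end{equation*}

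Finally, applying (\ref{3.1010}) (which is available because $(\Phi,\hat\Phi)$ satisfies \textbf{C2}) and Corollary \ref{C3.8} is in force) to $\check u_2\in L_w^2([s,T];\mathbb{R}^{m_2})$ gives $\mathcal{J}_{\Phi}^{\hat\Phi}(s,T,x_s;\check u_2)\geq\tfrac12\langle\Upsilon_{\Phi}^{\hat\Phi}(s)x_s,x_s\rangle$. Chaining the three relations yields
\begin{equation*}
\tfrac12\langle\Upsilon_{\Phi}^{\hat\Phi}(s)x_s,x_s\rangle\leq\mathcal{J}_{\Phi}^{\hat\Phi}(s,T,x_s;\check u_2)=\mathcal{J}(s,T,x_s;\check u_1,\check u_2)\leq\tfrac12\langle\hat X(s)x_s,x_s\rangle,
\end{equation*}
and the arbitrariness of $x_s\in\mathbb{R}^n$ gives (\ref{3.1027bis}). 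The only delicate point is the cancellation identity; the rest is bookkeeping inherited from Lemma \ref{L3.11} and the structural identities of the factorisation (\ref{e.15})--(\ref{e.16}).
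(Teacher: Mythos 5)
Your proposal is correct and follows essentially the same route as the paper's own proof: you construct the same auxiliary IVP (the paper's (\ref{3.1027}), with $\tilde K,\tilde W,\check K,\check W$ coinciding with the gains from (\ref{e.24})--(\ref{e.25})), establish the same mean/deviation identities and the same cancellation of the $\mathbb{V}_{21}\cdot+\mathbb{V}_{22}\cdot$ terms, and then chain the collapsed form of (\ref{e5.19}) with the identity $\mathcal{J}=\mathcal{J}_{\Phi}^{\hat\Phi}$ and the lower bound (\ref{3.1010}). No substantive differences.
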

\begin{proof}
Let $s\in \mathcal{I}_{\hat X} \bigcap (\tau,\; T)$ be arbitrary but fixed. Let $\left(\zeta^1(\cdot),\; \zeta^2(\cdot)\right)$ be the solution of the following IVP:
\begin{equation}\label{3.1027}
\begin{cases}
d\zeta^1(t)=\left[A_0(t)+B_{02}(t)\tilde K(t)+(B_{01}(t)+B_{02}(t)\tilde W(t))\Phi(t)\right]\zeta^1(t)dt+\\
+\sum_{j=1}^r\{\left[A_j(t)+B_{j2}(t)\tilde K(t)+(B_{j1}(t)+B_{j2}(t)\tilde W(t))\Phi(t)\right]\zeta^1(t)+\\
+\left[\hat{A}_j(t)+\hat{B}_{j2}(t)\check{K}(t)+(\hat{B}_{j1}(t)+\hat{B}_{j2}(t)\check{W}(t))\hat{\Phi}(t)\right]\zeta^2(t)\}dw_j(t)\\
d\zeta^2(t)=\left[\hat{A}_0(t)+\hat{B}_{02}(t)\check{K}(t)+(\hat{B}_{01}(t)+\hat{B}_{02}(t)\check{W}(t))\hat{\Phi}(t)\right]\zeta^2(t)dt\\
\zeta^1(s)=0\\
\zeta^2(s)=x_s\in \mathbb{R}^n
\end{cases}
\end{equation} 
where $\tilde K(t)$, $\tilde W(t)$ were introduced in (\ref{3.1014bis}), while $\check{K}(t)\triangleq \hat{\mathbb{V}}_{22}^-1(t)(\hat{\mathbb{V}}_{21}(t)\quad \hat{\mathbb{V}}_{22}(t))\hat F(t)$ and $\check{W}(t)\triangleq -\hat{\mathbb{V}}_{22}^-1(t)\hat{\mathbb{V}}_{21}(t)$, $\forall t\in [s,\; T]$.\\ 
Now we set:
\begin{equation}\label{3.1028}
\begin{cases}
\hat u_1^1(t)\triangleq \Phi(t)\zeta^1(t)\\
\hat u_1^2(t)\triangleq \hat \Phi(t)\zeta^2(t)\\
\hat u_1(t)=\hat u_1^1(t)+\hat u_1^2(t)
\end{cases}
\end{equation}
\begin{equation}\label{3.1029}
\begin{cases}
\hat u_2^1(t)\triangleq \tilde K(t)(t)\zeta^1(t)+\tilde W(t)\hat u_1^1(t)\\
\hat u_2^2(t)\triangleq \check K(t)(t)\zeta^2(t)+\check W(t)\hat u_1^2(t)\\
\hat u_2(t)=\hat u_2^1(t)+\hat u_2^2(t)
\end{cases}
\end{equation}
\begin{equation}\label{3.1030}
\zeta(t)=\zeta^1(t)+\zeta^2(t)
\end{equation}
By direct calculation one obtains from (\ref{3.1027})--(\ref{3.1030}) that $t\rightarrow \zeta(t)$ is the solution of the following IVP:
\begin{equation}\label{3.1031}
\begin{cases}
d\zeta(t)=\left[A_0(t)\zeta(t)+\bar A_0(t)\zeta^2(t)+B_{01}(t)\hat{u}_1(t)+\bar B_{01}(t)\hat{u}_1^2(t)+B_{02}(t)\hat{u}_2(t)+\bar B_{02}(t)\hat{u}_2^2(t)\right]dt+\\
+\sum_{j=1}^r\left[A_j(t)\zeta(t)+\bar A_j(t)\zeta^2(t)+B_{j1}(t)\hat{u}_1(t)+\bar B_{j1}(t)\hat{u}_1^2(t)+B_{j2}(t)\hat{u}_2(t)+\bar B_{j2}(t)\hat{u}_2^2(t)\right]dw_j(t)\\
\zeta(s)=x_s\in \mathbb{R}^n
\end{cases}
\end{equation}
Proceeding as in the case of the IVP (\ref{3.1021}) one shows that in the case of the stochastic process defined in (\ref{3.1028})--(\ref{3.1030}) we have:
\begin{equation}\label{3.1032}
\begin{cases}
\mathbb{E}[\zeta(t)]=\zeta^2(t)\\
\mathbb{E}[\hat u_k(t)]=\hat u_k^2(t);\; k=1,2,\; t\in [s,\;T]
\end{cases}
\end{equation}
Substituting (\ref{3.1032}) in (\ref{3.1031}) we may conclude that the solution $\zeta(\cdot)$ of the IVP (\ref{3.1031}) coincides with the solution $x_{\hat u}(\cdot)$ of the IVP (\ref{e.5}) determined by the input $\hat u(t)=(\hat u_1^\top (t)\quad \hat u_2^\top (t))^\top$. Hence (\ref{3.1029}) yields:
\begin{equation*}
\mathbb{V}_{21}\left(\hat u_1^1(t)-F_1x_{\hat u}^1(t)\right)+\mathbb{V}_{22}\left(\hat u_2^1(t)-F_2x_{\hat u}^1(t)\right)=0\quad (\text{a.s})
\end{equation*}
\begin{equation*}
\hat{\mathbb{V}}_{21}\left(\hat u_1^2(t)-\hat F_1x_{\hat u^2}^2(t)\right)+\mathbb{V}_{22}\left(\hat u_2^2(t)-\hat F_2x_{\hat u^2}^2(t)\right)=0\quad (\text{a.s})
\end{equation*}
We recall that if $y(t)$ is a stochastic process, then $\hat y(t)$ stands for $y(t)-\mathbb{E}[y(t)]$ and $y^2(t)$ stands for $\mathbb{E}[y(t)]$. Hence, the equality (\ref{e5.19}) written for $u_k(\cdot)$ replaced by $\hat u_k(\cdot)$, $k=1,2$, becomes:
\begin{align}\label{3.1033}
&\mathcal{J}(s,T,x_s;\hat u_1(\cdot),\hat u_2(\cdot))=\frac{1}{2}{\mathbb E}[\langle \hat{X}(s)x_s,x_s\rangle]\notag\\
&-\frac{1}{2}\mathbb{E}\left[\int_s^T\left\{|{\mathbb{V}}_{11}(t)(\hat u^1_1(t)-F_1(t)x^1_{\hat u}(t))|^2+|{\hat{\mathbb{V}}}_{11}(t)(\hat u^2_1(t)-\hat{F}_1(t)x^2_{\hat u^2}(t))|^2\right\}dt\right]
\end{align}
On the other hand, (\ref{3.1028}) and (\ref{3.1032}) lead to $\hat u_1(t)=\Phi(t)x_{\hat u}(t)+\left(\hat \Phi(t)-\Phi(t)\right)\mathbb{E}[x_{\hat u}(t)]$. Substituting this equality in (\ref{e.6}) written for $u_k(\cdot)$ replaced by $\hat u_k(\cdot)$, $k=1,2$, we obtain:
\begin{equation}\label{3.1034}
\mathcal{J}(s,T,x_s;\hat u_1(\cdot),\hat u_2(\cdot))=\mathcal{J}_\Phi^{\hat \Phi}(s,T,x_s;\hat u_2(\cdot))
\end{equation}
for all $x_s\in \mathbb{R}^n$. Invoking (\ref{3.1010}) together with (\ref{3.1033}) and (\ref{3.1034}) we get: $\left\langle\Upsilon_\Phi^{\hat\Phi}(s)x(s),x(s)\right\rangle\leq\left\langle\hat X(s)x(s),x(s)\right\rangle$ for all $x_s\in \mathbb{R}^n$ which is equivalent to (\ref{3.1027bis}). Thus the proof ends.
\end{proof}
\noindent We are now in position to state and prove the main result of this section:
\begin{thm} The following are equivalent:
\begin{itemize}
\item [i)] the solutions $X(\cdot)$, $\hat X(\cdot)$ of the TVPs (\ref{e.9}), (\ref{e.10}), are defined on the whole interval $[0,\; T]$ and additionally $X(t)$ satisfies the conditions (\ref{e.111})-(\ref{e.112}).
\item [ii)] the conditions \textbf{C1}) and \textbf{C2}) are fulfilled;
\end{itemize}
\end{thm}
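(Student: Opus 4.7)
The plan is to treat each implication separately, with (i) $\Rightarrow$ (ii) following from a direct computation using the factorizations (\ref{e.15})--(\ref{e.18}), and (ii) $\Rightarrow$ (i) requiring a comparison-based prolongation argument built on the auxiliary solutions $Y_{KW}, Y_{KW}^{\hat K\hat W}, \Upsilon_{\Phi}, \Upsilon_{\Phi}^{\hat\Phi}$ provided by Corollaries \ref{C3.5} and \ref{C3.8}.

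\textit{Direction (i) $\Rightarrow$ (ii).} First I would exhibit explicit candidates for the feedback gains satisfying \textbf{C1} and \textbf{C2}. Taking $(K,W)$, $(\hat K,\hat W)$ as in (\ref{e.24})--(\ref{e.25}) and $\Phi:=F_1$, $\hat\Phi:=\hat F_1$, a short computation verifies that, with the substitution (\ref{e.136})--(\ref{e.137}), the cross-terms $\mathbb V_{21}(u_1^1-F_1 x_u^1)+\mathbb V_{22}(u_2^1-F_2 x_u^1)$ and $\hat{\mathbb V}_{21}(u_1^2-\hat F_1 x_{u^2}^2)+\hat{\mathbb V}_{22}(u_2^2-\hat F_2 x_{u^2}^2)$ vanish identically in (\ref{e5.19}), whence
\[
\mathcal J_{KW}^{\hat K\hat W}(0,T,0;u_1)=-\tfrac{1}{2}\mathbb E\int_0^T \{|\mathbb V_{11}(t)(u_1^1(t)-F_1(t)x_u^1(t))|^2+|\hat{\mathbb V}_{11}(t)(u_1^2(t)-\hat F_1(t)x_{u^2}^2(t))|^2\}\,dt.
\]
By the sign condition (\ref{e.111}b) one has $\mathbb V_{11}^\top\mathbb V_{11}=-\mathbb R_{22}^{\sharp}(t,X(t))\geq \nu_1 I_{m_1}$ and analogously for $\hat{\mathbb V}_{11}$. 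A contradiction argument modelled on Proposition \ref{P3.2} -- a normalized sequence $\{u_1^k\}$ with vanishing right-hand side would produce vanishing closed-loop residuals and, via the $L^2$-stability bound (Thm.~3.6.1 of \cite{carte2013}) applied to the system (\ref{e.138}), vanishing states, hence $u_1^k\to 0$, contradicting $\|u_1^k\|_2=1$ -- yields the uniform concavity (\ref{e.147}), which is \textbf{C1}. For \textbf{C2} the roles swap: with $\Phi=F_1$, $\hat\Phi=\hat F_1$ the negative-sign contributions in (\ref{e5.19}) vanish and only the terms with $\mathbb V_{22}$, $\hat{\mathbb V}_{22}$ remain; $\mathbb V_{22}^\top\mathbb V_{22}=\mathbb R_{22}(t,X(t))\geq \nu_2 I_{m_2}$ together with the same contradiction argument gives (\ref{e.159}).

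\textit{Direction (ii) $\Rightarrow$ (i).} Corollaries \ref{C3.5} and \ref{C3.8} supply solutions $Y_{KW}, Y_{KW}^{\hat K\hat W}:[0,T]\to\mathcal S_n$ satisfying (\ref{e.148})--(\ref{e.149}) and $\Upsilon_\Phi, \Upsilon_\Phi^{\hat\Phi}:[0,T]\to\mathcal S_n$ satisfying (\ref{e.160})--(\ref{e.161}). Let $\mathcal I_X\subset[0,T]$ denote the maximal subinterval containing $T$ on which $X$ solves (\ref{e.9}) and (\ref{e.111}) holds. Since $X(T)=G_T=\Upsilon_\Phi(T)=Y_{KW}(T)$, the Schur-complement identity $\mathbb R_{22}^{\sharp}(t,X(t))\leq R_W(t)+\sum_j B_{jW}^\top(t)X(t)B_{jW}(t)$ (valid for every $W$) combined with (\ref{e.148}) and (\ref{e.160}) shows that (\ref{e.111}) holds at $t=T$, so $\mathcal I_X$ is non-empty. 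On $\mathcal I_X$, Lemmas \ref{L3.6}(b) and \ref{L3.8}(b) deliver the two-sided sandwich
\[
\Upsilon_\Phi(s)\leq X(s)\leq Y_{KW}(s),\qquad s\in\mathcal I_X.
\]
Applying this monotonicity with $(B_{j2},R_{22})$ gives $\mathbb R_{22}(t,X(t))\geq \nu I_{m_2}$, and applying it through the Schur-complement bound with $(B_{jW},R_W)$ gives $\mathbb R_{22}^{\sharp}(t,X(t))\leq -\mu I_{m_1}$; the analogous manipulations with the hatted blocks yield (\ref{e.112}). Thus on $\mathcal I_X$ the matrix $X(\cdot)$ is bounded and $\mathbb R(t,X(t))$ is uniformly non-degenerate, so the right-hand side of (\ref{e.9}) is locally Lipschitz with constants independent of $t$; standard ODE prolongation extends $X$ continuously past $\inf\mathcal I_X$, and by continuity the sign conditions survive at the limit. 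Maximality of $\mathcal I_X$ then forces $\inf\mathcal I_X=0$, so $X$ is defined on all of $[0,T]$ satisfying (\ref{e.111})--(\ref{e.112}).

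\textit{Extension of $\hat X$ and main obstacle.} With $X$ now defined on $[0,T]$ with the required sign conditions, Lemmas \ref{L3.11} and \ref{L3.13} yield $\Upsilon_\Phi^{\hat\Phi}(s)\leq \hat X(s)\leq Y_{KW}^{\hat K\hat W}(s)$ on $\mathcal I_{\hat X}$. Since the quadratic term in (\ref{e.10}) involves $X$ only and $\hat{\mathbb R}(t,X(t))$ is uniformly invertible by (\ref{e.112}), the same boundedness-plus-regularity argument extends $\hat X$ to the whole interval $[0,T]$. The delicate step -- and the main obstacle -- is the prolongation in direction (ii) $\Rightarrow$ (i): preserving the \emph{indefinite} sign conditions (\ref{e.111})--(\ref{e.112}) as $t$ decreases towards $0$. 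This is resolved precisely by the two-sided comparison with $\Upsilon_\Phi$ and $Y_{KW}$, which simultaneously bounds $X$ in norm and, via Schur-complement monotonicity, transfers the uniform sign estimates (\ref{e.148}), (\ref{e.149}), (\ref{e.160}), (\ref{e.161}) from the auxiliary Riccati solutions onto $X$ itself with explicit constants $\nu_k,\hat\nu_k$ that do not deteriorate along $\mathcal I_X$.
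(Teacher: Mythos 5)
Your proposal is correct and follows essentially the same architecture as the paper's proof: the terminal-time Schur-complement argument to show $\mathcal I_X\neq\emptyset$, the two-sided sandwich $\Upsilon_\Phi\leq X\leq Y_{KW}$ (and its hatted analogue for $\hat X$) obtained from the comparison lemmas, the transfer of the sign conditions through the congruence $\mathbb{W}^\top\mathbb{R}\mathbb{W}$, and the boundedness-plus-continuation argument. The only variation is in (i)$\Rightarrow$(ii), where you establish uniform concavity/convexity directly from the closed-loop representation (\ref{e5.19}) together with a normalization argument in the spirit of Proposition \ref{P3.2}, whereas the paper identifies the transformed TVPs (\ref{e.145})--(\ref{e.146}) and (\ref{e.157})--(\ref{e.158}) with (\ref{e.9})--(\ref{e.10}) under the same feedback choices and then invokes the equivalences of Corollaries \ref{C3.5} and \ref{C3.8}; both routes rest on the same facts and are equally valid.
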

\begin{proof}
\begin{itemize}
\item []  i)$\Rightarrow$ii) If $X(\cdot),\; \hat X(\cdot)$ are well defined on the whole interval $[0;\; T]$, we may define  $F(t)$ and $\hat{F}(t)$ via (\ref{e.22}) and (\ref{e.23}), respectively. By setting:
\begin{equation}
\begin{cases}
\Phi(t)\triangleq \left(\begin{array}{cc}I_{m_1} &0\end{array}\right)F(t)\\
\hat{\Phi}(t)\triangleq \left(\begin{array}{cc}I_{m_1} &0\end{array}\right)\hat{F}(t)
\end{cases}
\end{equation}
one can shows, using Lemma 5.1.1 from \cite{carte2013} and relying on adequate algebraic manipulations, that the TVPs (\ref{e.9}) and (\ref{e.10}) can be equivalently rewritten as (\ref{e.157})-(\ref{e.158}), respectively. Because $X(\cdot)$ satisfies the sign conditions (\ref{e.111a}) and (\ref{e.112a}), it follows that the TVPs of type (\ref{e.157}), (\ref{e.158}) have the solutions $\left(\Upsilon_\Phi(\cdot),\Upsilon_\Phi^{\hat \Phi}(\cdot)\right):[0,\; T]\rightarrow \mathcal{S}_n\times \mathcal{S}_n$ with the additional property that $\Upsilon_{\Phi}(t)$ satisfies the constraints (\ref{e.160})-(\ref{e.161}). Hence, from Corollary 3.8, it follows that the condition \textbf{C2}) is fulfilled.\\
In a similar way, by using the factorization (\ref{e.15})-(\ref{e.18}) on the interval $[0, T]$ and setting:
\begin{equation}
\begin{cases}
K(t)\triangleq (\mathbb{V}_{22}(t))^{-1}\left(\begin{array}{cc}\mathbb{V}_{21}(t) & \mathbb{V}_{22}(t)\end{array}\right)F(t)\\
W(t)\triangleq-(\mathbb{V}_{22}(t))^{-1}\mathbb{V}_{21}(t)
\end{cases}
\end{equation}
and:
\begin{equation}
\begin{cases}
\hat{K}(t)\triangleq (\hat{\mathbb{V}}_{22}(t))^{-1}\left(\begin{array}{cc}\hat{\mathbb{V}}_{21}(t) & \hat{\mathbb{V}}_{22}(t)\end{array}\right)\hat{F}(t)\\
\hat{W}(t)\triangleq-(\hat{\mathbb{V}}_{22}(t))^{-1}\hat{\mathbb{V}}_{21}(t)
\end{cases}
\end{equation}
one can shows that the TVPs (\ref{e.9}) and (\ref{e.10}) can be equivalently rewritten as (\ref{e.145})-(\ref{e.146}), respectively. Because $X(\cdot)$ satisfies the sign conditions (\ref{e.111b}) and (\ref{e.112b}), it follows that the TVPs of type (\ref{e.145}), (\ref{e.146}) have the solutions $\left(Y_{KW}(\cdot),Y_{K W}^{\hat K \hat W}(\cdot)\right):[0,\; T]\rightarrow \mathcal{S}_n\times \mathcal{S}_n$ with the additional property that $Y_{KW}(t)$ satisfies the constraints (\ref{e.148})-(\ref{e.149}). Hence, from Corollary 3.5, it follows that the condition \textbf{C1}) is fulfilled.
\item []  ii)$\Rightarrow$i) Let $\Upsilon_\Phi(\cdot)$ be the solution of the TVP (\ref{e.157}) satisfying the sign conditions (\ref{e.160})-(\ref{e.161}), for $t\in [0;\; T]$. Hence, it follows that:
\begin{equation}\label{Ix.1}
R_{22}(T)+\sum_{j=1}^rB_{j2}^{\top}(T)G_TB_{j2}(T)>0
\end{equation}
\begin{equation}\label{Ix.1bis}
\hat{R}_{22}(T)+\sum_{j=1}^r\hat{B}_{j2}^{\top}(T)G_T\hat{B}_{j2}(T)>0
\end{equation}
Also, let $Y_{KW}(\cdot)$ be the solution of the TVP (\ref{e.145}) satisfying the sign condition (\ref{e.148})-(\ref{e.149}), for $t\in [0;\; T]$. Hence, it follows that:
\begin{equation}\label{Ix.2}
R_{W}(T)+\sum_{j=1}^rB_{jW}^{\top}(T)G_TB_{jW}(T)<0
\end{equation}
\begin{equation}\label{Ix.2bis}
\hat{R}_{W}(T)+\sum_{j=1}^r\hat{B}_{jW}^{\top}(T)G_T\hat{B}_{jW}(T)<0
\end{equation}
This yields that:
 \begin{align}\label{Ix.3}
&R_W(T)+\sum_{j=1}^rB_{jW}^{\top}(T)G_TB_{jW}(T)\notag\\
&-\left(R_{12}(T)+W^{\top}(T)R_{22}(T)+\sum_{j=1}^rB_{jW}^{\top}(T)G_TB_{j2}(T)\right)\notag\\
&\times\left(R_{22}(T)+\sum_{j=1}^rB_{j2}^{\top}(T)G_TB_{j2}(T)\right)^{-1}\star<0
\end{align}
The left hand side of (\ref{Ix.3}) is the Schur complement of the (2,2)-block of the matrix:
\begin{equation}\label{Ix.4}
\Theta(T)=\mathbb{W}^{\top}(T)\mathbb{R}(T,G_T)\mathbb{W}(T)
\end{equation}
where $\mathbb{W}(T)=\left(\begin{array}{cc}I_{m_1} & 0 \\W(T) & I_{m_2}\end{array}\right)$.
Using (\ref{Ix.4}) one can show by direct calculation that the left hand side of (\ref{Ix.3}) is the Schur complement of the (2,2)-block of the matrix $\mathbb{R}(T,G_T)$. Hence it follows that:
 \begin{align}\label{Ix.5}
&R_{11}(T)+\sum_{j=1}^rB_{j1}^{\top}(T)G_TB_{j1}(T)-\left(R_{12}(T)+\sum_{j=1}^rB_{j1}^{\top}(T)G_TB_{j2}(T)\right)\times\notag\\
&\times\left(R_{22}(T)+\sum_{j=1}^rB_{j2}^{\top}(T)G_TB_{j2}(T)\right)^{-1}\star<0
\end{align}
Similarly, one can show that:
 \begin{align}\label{Ix.5bis}
&\hat{R}_{11}(T)+\sum_{j=1}^r\hat{B}_{j1}^{\top}(T)G_T\hat{B}_{j1}(T)-\left(\hat{R}_{12}(T)+\sum_{j=1}^r\hat{B}_{j1}^{\top}(T)G_T\hat{B}_{j2}(T)\right)\times\notag\\
&\times\left(\hat{R}_{22}(T)+\sum_{j=1}^r\hat{B}_{j2}^{\top}(T)G_T\hat{B}_{j2}(T)\right)^{-1}\star<0
\end{align}
Using (\ref{Ix.1}) and (\ref{Ix.5}), it follows from the local existence theory of ODEs that $\mathcal{I}_X$ is not empty.\\
Now by applying the general theory of ODEs and based on the continuity of the coefficients of the Riccati equation (\ref{e.9}) one deduces that the solution $X(\cdot)$ of the Riccati equation (\ref{e.9}) is defined and satisfies the conditions (\ref{e.111})-(\ref{e.112}) on an interval $(\tau, T]\subseteq \mathcal{I}_X$. Let us assume that this is the maximal interval with these properties. Applying Lemma 3.6 and Lemma 3.9 one deduces that:
\begin{equation}\label{e.new1} 
\Upsilon_{\Phi}(s)\leq X(s)\leq Y_{KW}(s),\; s\in (\tau, T]  
\end{equation}
\noindent From the result above, it follows from one side that $X(\cdot)$ verifies the sign conditions (\ref{e.111a}) and (\ref{e.112a}) on $(\tau, T] $. On the other hand one has:  
 \begin{align}\label{equa4_main}
&R_W(t)+\sum_{j=1}^rB_{jW}^{\top}(t)X(t)B_{jW}(t)\notag\\
&-\left(R_{12}(t)+W^{\top}(t)R_{22}(t)+\sum_{j=1}^rB_{jW}^{\top}(t)X(t)B_{j2}(t)\right)\notag\\
&\times\left(R_{22}(t)+\sum_{j=1}^rB_{j2}^{\top}(t)X(t)B_{j2}(t)\right)^{-1}\star\llcurly0,\; t\in(\tau, T] 
\end{align}
 \begin{align}\label{equa4_mainbis}
&\hat{R}_{\hat{W}}(t)+\sum_{j=1}^r\hat{B}_{j\hat{W}}^{\top}(t)X(t)\hat{B}_{j\hat{W}}(t)\notag\\
&-\left(\hat{R}_{12}(t)+\hat{W}^{\top}(t)\hat{R}_{22}(t)+\sum_{j=1}^r\hat{B}_{j\hat{W}}^{\top}(t)X(t)\hat{B}_{j2}(t)\right)\notag\\
&\times\left(\hat{R}_{22}(t)+\sum_{j=1}^r\hat{B}_{j2}^{\top}(t)X(t)\hat{B}_{j2}(t)\right)^{-1}\star\llcurly0,\; t\in (\tau, T] 
\end{align}
The left hand side of (\ref{equa4_main}) is the Schur complements of the (2,2)-block of the matrix:
\begin{equation}\label{equa5_main}
\Theta(t)=\mathbb{W}^{\top}(t)\mathbb{R}(t,{X}(t))\mathbb{W}(t),\; t\in \mathcal{I}_\mathbf{X}
\end{equation}
where $\mathbb{W}(t)=\left(\begin{array}{cc}I_{m_1} & 0 \\W(t) & I_{m_2}\end{array}\right)$. Using (\ref{equa5_main}) one can show by direct calculation that the left hand side of (\ref{equa4_main}) is the Schur complement of the (2,2)-block of the matrix $\mathbb{R}(t,{X}(t))$. This yields that ${X}(t)$ verify the sign condition (\ref{e.111b}) on $(\tau, T]$. A similar reasoning as above yields to the conclusion that ${X}(t)$ verify the sign condition (\ref{e.112b}) on $(\tau, T]$.\\
Now by taking the limit for $t\rightarrow \tau$ one obtains that $X(\cdot) $ is define also at $t=\tau$ and  
satisfies also the sign conditions (\ref{e.111})-(\ref{e.112}).\\
Further, again from the general theory of ODEs and based on the continuity of the coefficients of the Riccati equation (\ref{e.9}) one deduces that the solution of the equation (\ref{e.9}) can be extended to  another interval $(\tau_1,\; \tau]$ and it satisfies the sign conditions(\ref{e.111})-(\ref{e.112}) on this interval. In this way, the maximality of the interval $(\tau, T]$ is  violated. Hence (\ref{e.new1}) remains true on $\mathcal{I}_X$. In addition, $X(t)$ being bounded on $\mathcal{I}_{X}$, it can be extended to $[0;\; T]$.\\
It remains now to prove that $\mathcal{I}_{\hat{X}}=[0;\; T]$. First, by using similar arguments as the ones used to prove that $\mathcal{I}_{X}\neq \emptyset$, we can show that $\mathcal{I}_{\hat{X}}$ is not empty. Hence, in order to get the desired conclusion it is sufficient to show that $\hat X(t)$ is uniformly bounded on $\mathcal{I}_{\hat{X}}$. Here also, one can apply a similar procedure as above in order to get the expected result. Indeed, thanks to Lemma 3.11 and Lemma 3.13 one can show that:
\begin{equation}
\Upsilon_{\Phi}^{\hat \Phi}(t)\leq \hat X(t)\leq Y_{KW}^{\hat K \hat W}(t);\quad t\in \mathcal{I}_{\hat{X}}
\end{equation}
This ends the proof.
\end{itemize}
\end{proof}
As already discussed in Section 2, in \cite{Sun:21bis} the authors provided sufficient conditions which guarantee the global existence of the whole interval $[0,\; T]$ of the solutions $X(\cdot)$, $\hat X(\cdot)$ of the RDEs (\ref{e.9}) and (\ref{e.10}) satisfying the sign conditions (\ref{e.109})-(\ref{e.110}). One can see evidently that if (\ref{e.109})-(\ref{e.110}) are satisfied then the sign conditions (\ref{e.111})-(\ref{e.112}) are also verified. The problem considered in \cite{Sun:21bis} can then be viewed as a particular case of the problem addressed in this paper. As a matter of fact, by specializing the result given in Theorem 3.15 one can obtain necessary and sufficient conditions which guarantee the global existence of the whole interval $[0,\; T]$ of the solutions $X(\cdot)$, $\hat X(\cdot)$ of the RDEs (\ref{e.9}) and (\ref{e.10}) satisfying the sign conditions (\ref{e.109})-(\ref{e.110}). To this end, let us first introduce the following condition:\\

\noindent \textit{\noindent \textbf{C}3) There exist continuous matrix valued functions $\left(K(\cdot),\hat K(\cdot)\right):[0,\; T]\rightarrow \mathbb{R}^{m_2\times n}\times \mathbb{R}^{m_2\times n}$ with the property that the mapping $u_1(\cdot)\rightarrow \mathcal{J}_{K}^{\hat K}(0,T,0;u_1(\cdot)):L_w^2\left\{[0,\; T];\mathbb{R}^{m_1}\right\}\rightarrow \mathbb{R}$ is uniformly concave, where $\mathcal{J}_{K}^{\hat K}(0,T,0;u_1(\cdot))$ is obtained from $\mathcal{J}_{KW}^{\hat K\hat W}(0,T,0;u_1(\cdot))$ by taking $W(t)=0$ and $\hat W(t)=0$, $t\in [0,\; T]$.}\\

\noindent We can now state the following result which improves the result given in Theorem 4.2 from \cite{Sun:21bis} where only sufficient conditions were provided. 

\begin{thm}The following are equivalent:
\begin{itemize}
\item [i)] the solutions $X(\cdot)$, $\hat X(\cdot)$ of the TVPs (\ref{e.9}), (\ref{e.10}), are defined on the whole interval $[0,\; T]$ and additionally $X(t)$ satisfies the conditions (\ref{e.109})-(\ref{e.110}).
\item [ii)] the conditions \textbf{C2}) and \textbf{C3}) are fulfilled;
\end{itemize}
\end{thm}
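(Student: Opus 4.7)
The plan is to run the proof of Theorem~3.15 in parallel, replacing condition C1) by its restriction C3) (with $W\equiv 0$, $\hat W\equiv 0$) and replacing the Schur-complement sign conditions (111)--(112) by the block-diagonal sign conditions (109)--(110). The key observation is that the stronger conditions (109)--(110) make a direct (rather than Schur-complemented) factorization of $\mathbb{R}(t,X(t))$ and $\hat{\mathbb{R}}(t,X(t))$ available, and it is precisely this direct factorization that corresponds to the choice $W=\hat W=0$ in the auxiliary feedback (136)--(137). Once this is noted, all four ingredients used in the proof of Theorem~3.15 (Corollary~3.5, Corollary~3.8, the comparison Lemmas~3.6 and~3.9, and the $\hat X$-comparison Lemmas~3.11 and~3.13) transfer verbatim.

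For i)$\Rightarrow$ii): assuming $X(\cdot),\hat X(\cdot)$ are globally defined on $[0,T]$ and satisfy (109)--(110), define $F(t)$ and $\hat F(t)$ via (22)--(23). Exactly as in the proof of Theorem~3.15, set
\[
\Phi(t)=(I_{m_1}\ 0)F(t),\qquad \hat\Phi(t)=(I_{m_1}\ 0)\hat F(t),
\]
and invoke Lemma~5.1.1 of \cite{carte2013} to rewrite the TVPs (9)--(10) in the equivalent form (157)--(158). This identifies $\Upsilon_\Phi(\cdot)=X(\cdot)$, $\Upsilon_\Phi^{\hat\Phi}(\cdot)=\hat X(\cdot)$, and the appropriate part of (109)--(110) then delivers the sign conditions (160)--(161); Corollary~3.8 gives C2). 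For C3) the new ingredient is to set
\[
K(t)=(0\ I_{m_2})F(t),\qquad \hat K(t)=(0\ I_{m_2})\hat F(t),\qquad W=\hat W=0.
\]
Because the sign conditions (109)--(110) guarantee that $\mathbb{R}(t,X(t))$ and $\hat{\mathbb{R}}(t,X(t))$ admit a purely block-diagonal completion of squares, one checks by the same algebraic identity (completing the square in $u_1$ rather than in $u_2$) that the TVPs (9)--(10) rewrite equivalently as (145)--(146) evaluated at these $K,\hat K$ with $W=\hat W=0$. Consequently $Y_{K,0}(\cdot)=X(\cdot)$, $Y_{K,0}^{\hat K,0}(\cdot)=\hat X(\cdot)$, and the pertinent part of (109)--(110) becomes the sign condition (148)--(149) with $W=\hat W=0$; Corollary~3.5 then gives C3).

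For ii)$\Rightarrow$i): let $\Upsilon_\Phi,\Upsilon_\Phi^{\hat\Phi}$ and $Y_K,Y_K^{\hat K}$ be the globally defined solutions provided by C2) and C3) respectively. Evaluation at $t=T$, together with the terminal conditions $G_T,\hat G_T$, yields the sign conditions (109)--(110) at the endpoint; local ODE theory therefore supplies a maximal interval $(\tau,T]\subseteq \mathcal I_X$ on which $X(\cdot)$ exists and satisfies (109)--(110). On $(\tau,T]$, Lemma~3.6(b) gives $X(s)\leq Y_K(s)$ and Lemma~3.9(b) gives $X(s)\geq \Upsilon_\Phi(s)$, hence the sandwich
\[
\Upsilon_\Phi(s)\leq X(s)\leq Y_K(s),\qquad s\in(\tau,T].
\]
Monotonicity of $\mathbb{R}_{11}(t,\cdot)$ and $\mathbb{R}_{22}(t,\cdot)$ in the matrix argument, together with the sign properties already enjoyed by $\Upsilon_\Phi$ and $Y_K$, propagates (109)--(110) to the whole interval $(\tau,T]$ and provides a uniform a priori bound on $X$; the standard maximality argument (as in the proof of Theorem~3.15) then closes $\tau=0$. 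An entirely analogous argument, invoking Lemmas~3.11 and~3.13 specialized to $W=\hat W=0$, produces
\[
\Upsilon_\Phi^{\hat\Phi}(t)\leq \hat X(t)\leq Y_K^{\hat K}(t),\qquad t\in\mathcal I_{\hat X},
\]
which keeps $\hat X$ uniformly bounded on its maximal interval of existence and forces $\mathcal I_{\hat X}=[0,T]$. The main obstacle is the $W=\hat W=0$ specialization itself: one must verify that the comparison arguments of Lemmas~3.6--3.9 and 3.11--3.13, which in the proof of Theorem~3.15 rely on the square-completion supplied by the Schur-complement factorization (15)--(18), remain valid when the off-diagonal mixing $W,\hat W$ is dropped. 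This reduces to checking that the auxiliary processes $(\check u_1,\check u_2)$ and $(\hat u_1,\hat u_2)$ constructed in those lemmas can still be realized with $W=\hat W=0$, which is exactly the content of the direct block-diagonal factorization of $\mathbb{R}(t,X(t))$ guaranteed by (109)--(110).
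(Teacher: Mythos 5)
Your proposal is correct and is precisely the specialization of the proof of Theorem 3.15 that the paper asserts but never writes out (the paper states this theorem without proof, remarking only that it follows ``by specializing the result given in Theorem 3.15''). The two nontrivial points both check out: in i)$\Rightarrow$ii), with $W=\hat W=0$ and $K(t)=\left(0\;\;I_{m_2}\right)F(t)$ the solution $X$ of (\ref{e.9}) does satisfy the reduced Riccati equation (\ref{e.145}) --- this follows from the block decomposition of $\mathbb{R}(t,X(t))F(t)=-\big(B_0^{\top}X+\sum_jB_j^{\top}XA_j+L^{\top}\big)$, which gives $F_{K,0}=F_1$ and makes the extra quadratic terms cancel, so that only the $(1,1)$ block of $\mathbb{R}(t,X(t))$ survives the completion of squares when $u_2-F_2x=0$; and in ii)$\Rightarrow$i) the propagation of the sign conditions along $(\tau,T]$ is, as you note, even simpler than in Theorem 3.15, since $\mathbb{R}_{11}(t,\cdot)$ and $\mathbb{R}_{22}(t,\cdot)$ are monotone in the matrix argument and the sandwich $\Upsilon_{\Phi}\leq X\leq Y_{K}$ controls them directly with no Schur-complement step. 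One caveat you should make explicit: the argument only closes if (\ref{e.109})--(\ref{e.110}) are read as $\mathbb{R}_{11}(t,X(t))\llcurly0$ and $\mathbb{R}_{22}(t,X(t))\ggcurly0$ (consistently with the leader--follower convention of this paper and with the claim that (\ref{e.109})--(\ref{e.110}) imply (\ref{e.111})--(\ref{e.112})); as literally printed, (\ref{e.109b}) contradicts (\ref{e.111a}), and the identifications $\Upsilon_{\Phi}=X$ with (\ref{e.160}) and $Y_{K,0}=X$ with (\ref{e.148}) would then fail, so the ``pertinent parts'' of (\ref{e.109})--(\ref{e.110}) that you invoke must be named with the corrected orientation.
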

\subsection{Numerical example}
The following example is taken from \cite{Sun:SIAM} where the author illustrated that there proposed global existence conditions for the Riccati equations are only sufficient. Here we will illustrate the necessity of our conditions.\\   
\noindent Let us consider:
\begin{equation}\label{ex.1}
\begin{cases}
dx(t)=(u_1(t)+u_2(t))dt,\; t\ [0,\; 1]\\
x(0)=x\in \mathbb{R}
\end{cases}
\end{equation}
The performance criterion is:
\begin{equation}\label{ex.2}
J(x;u_1(\cdot),u_2(\cdot))=\mathbb{E}\left[-2x^2(1)+\int_0^1\left(u_1^2(t)-\frac{2}{3}u_2^2(t)\right)dt\right]
\end{equation}
The corresponding Riccati equation is:
\begin{equation}\label{ex.3}
\begin{cases}
\dot X(t)-X^2(t)+\frac{3}{2}X^2(t)=0\\
X(1)=-2
\end{cases}
\end{equation}
Its solution is $X(t)=\frac{2}{t-2}$ which is defined on the whole interval $[0,\; 1]$. In (\ref{ex.1}) and (\ref{ex.2}) the mean field terms are zero. Hence, the other Riccati equation is:
\begin{equation}\label{ex.3bis}
\begin{cases}
\dot {\hat X}(t)-\hat X(t)\left(\begin{array}{cc}1 & 1\end{array}\right)\tilde R^{-1}(t)\left(\begin{array}{c}1 \\1\end{array}\right)\hat X(t)=0\\
\hat X(1)=0
\end{cases}
\end{equation}
where $\tilde R(t)\triangleq \left(\begin{array}{cc}1 & 0 \\0 & -\frac{2}{3}\end{array}\right)$. Its solution is $\hat X(t)=0$, $\forall t\in [0,\; 1]$. We will show that the conditions from Theorem 3.17 are necessary conditions for the global existence of the solutions of the two Riccati equations (\ref{ex.3}) and (\ref{ex.3bis}). More precisely, we shall show that there exist two pairs of continuous functions $\left(\Phi(\cdot),\; \hat \Phi(\cdot)\right)$ and $\left(K(\cdot),\; \hat K(\cdot)\right)$ such that:
\begin{equation}\label{ex.4}
\mathcal J_\Phi^{\hat \Phi}(0;u_1(\cdot))\geq \delta \mathbb E\left[\int_0^1u_1^2(t)dt\right]
\end{equation}
for all $u_1(\cdot)$ from the set of admissible controls of player 1 and:
\begin{equation}\label{ex.5}
\mathcal J_K^{\hat K}(0;u_2(\cdot))\leq -\hat\delta \mathbb E\left[\int_0^1u_2^2(t)dt\right]
\end{equation} 
for all $u_2(\cdot)$ from the set of admissible controls of player 2, where $\delta>0$, $\hat \delta >0$ are constants.\\
We will not go through all the computational details, note however that by using Lemma 2.3 from \cite{Sun:16} we show that the following choice of pairs of continuous functions:
\begin{equation}
\begin{cases}
\Phi(t)=\frac{3}{2}X(t)\\
\hat \Phi(t)=0\\
K(t)=X(t)\\
\hat K(t)=0,\; t\in [0,\; 1]
\end{cases}
\end{equation}
where $X(t)$ is the solution of the Riccati equation (\ref{ex.3}), leads to the desired result.   
\section{Conclusion}
In this paper, we proposed a Riccati-type approach in order to solve an LQ mean-field game problem with a leader-follower structure for a class of SDEs with McKean--Vlasov type. We have obtained a state-feedback representation of the pairs of strategies which achieve an open-loop Stackelberg equilibrium. Our solution relies on the solvability of a coupled Riccati-type equations with indefinite sign of there quadratic terms. In the second part of this paper, we have then obtained necessary and sufficient conditions for the existence of solutions of the involved coupled generalized Riccati equations verifying specific sign conditions. Our ongoing effort are devoted to the infinite horizon counterpart of the problem treated in this paper. We believe that the stabilizing solutions of adequately defined generalized Riccati equations will play a key role in the solution process. One of the main challenges here is the definition of an adequate stability concept.
\section{Appendix}

\begin{lem}
Assume that the Riccati equations (\ref{e.9}) and (\ref{e.10}) are solvable on $\mathcal{I}_{{X}}$ and $\mathcal{I}_{\hat{{X}}}$, respectively. Then:
\begin{itemize}
\item [i)]  for every $s\in \mathcal{I}_{{X}}\bigcap \mathcal{I}_{\hat{{X}}}$, $u=\{u(t)\}_{t\geq s}\in L_{w}^2([s,T];{\mathbb{R}}^m)$ and $x_s\in \mathbb{R}^n$ we have:
\begin{eqnarray}\label{e.620}
\mathcal{J}(s,T,x_s,u_1,u_2)=\frac{1}{2}\langle \hat{X}(s)x_s, x_s\rangle+\mathcal{J}(s,T,0,u_1-\tilde{u}_1,u_2-\tilde{u}_2)
\end{eqnarray}
\item[ii)] if for every $s\in \mathcal{I}_{{X}}\bigcap \mathcal{I}_{\hat{{X}}}$ and $u_2=\{u_2(t)\}_{t\geq s}\in L_{w}^2([s,T];{\mathbb{R}}^{m_2})$, the condition:
\begin{equation}\label{Assump_cvx1}
\mathcal{J}(s,T,0,0,u_2-\tilde{u}_2)\geq 0
\end{equation}
is fulfilled then:
\begin{eqnarray}\label{e.621}
\frac{1}{2}\langle \hat X(s)x_s, x_s\rangle\leq {\cal J}(s,T,x_s;\tilde u_1(\cdot), u_2(\cdot))
\end{eqnarray}
for all $u_2(\cdot)\in L_w^2([s,T]; {\mathbb R}^{m_2})$ and $x_s\in {\mathbb R}^n$.
\item[iii)] for every $s\in \mathcal{I}_{{X}}\bigcap \mathcal{I}_{\hat{{X}}}$ and $u_1=\{u_1(t)\}_{t\geq s}\in L_{w}^2([s,T];{\mathbb{R}}^{m_1})$ the condition:
\begin{equation}\label{Assump_ccv1}
\mathcal{J}(s,T,0,u_1-\tilde{u}_1,0)\leq 0
\end{equation}
is fulfilled then:
\begin{eqnarray}\label{e.622}
\frac{1}{2}\langle \hat X(s)x_s, x_s\rangle\geq {\cal J}(s,T,x_s; u_1(\cdot), \tilde u_2(\cdot))
\end{eqnarray}
for all $u_1(\cdot)\in L_w^2([s,T]; {\mathbb R}^{m_1})$ and $x_s\in {\mathbb R}^n$.
\end{itemize}
\end{lem}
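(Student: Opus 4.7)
The plan is to prove part i) first --- the decomposition identity (\ref{e.620}) --- and then derive parts ii) and iii) by simply specializing the controls.

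For i), I would apply the representation formula (\ref{e.48}) of Lemma \ref{L2.1} to both sides of (\ref{e.620}). Applied on the left with data $(s,x_s)$ and controls $(u_1,u_2)$, it yields
\[
\mathcal{J}(s,T,x_s;u_1,u_2)=\tfrac{1}{2}\langle\hat X(s)x_s,x_s\rangle+\tfrac{1}{2}\mathbb{E}\int_s^T \mathcal{Q}(t)\, dt,
\]
where $\mathcal{Q}(t)$ is the sum of the two quadratic forms built from $u_k^1-F_k x^1_u(t)$ and $u_k^2-\hat F_k x^2_{u^2}(t)$. Applied on the right with data $(s,0)$ and controls $(u_1-\tilde u_1,u_2-\tilde u_2)$, the boundary term vanishes and one obtains an analogous integrand $\hat{\mathcal{Q}}(t)$ in which the states are $\hat x^1_{u-\tilde u}$ and $\hat x^2_{u^2-\tilde u^2}$, governed by (\ref{e.30})--(\ref{e.31}). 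The whole identity (\ref{e.620}) thus reduces to showing $\mathcal{Q}(t)=\hat{\mathcal{Q}}(t)$ almost surely.

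The pivot is the linearity of the state dynamics and uniqueness of solutions of the IVPs (\ref{e.26})--(\ref{e.29}), which give $\hat x^1_{u-\tilde u}=x^1_u-\tilde x^1$ and $\hat x^2_{u^2-\tilde u^2}=x^2_{u^2}-\tilde x^2$. Inserting the feedback forms $\tilde u^1_k=F_k\tilde x^1$ and $\tilde u^2_k=\hat F_k\tilde x^2$ from (\ref{e.604})--(\ref{e.605}), a short algebraic manipulation produces
\[
(u_k-\tilde u_k)^1-F_k\hat x^1_{u-\tilde u}=u_k^1-F_k x^1_u,\qquad (u_k-\tilde u_k)^2-\hat F_k\hat x^2_{u^2-\tilde u^2}=u_k^2-\hat F_k x^2_{u^2},
\]
so the arguments of the two quadratic forms coincide. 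Hence $\mathcal{Q}(t)=\hat{\mathcal{Q}}(t)$ and (\ref{e.620}) follows.

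Parts ii) and iii) are then obtained by direct substitution in (\ref{e.620}). Taking $u_1=\tilde u_1$, the remainder term on the right of (\ref{e.620}) collapses to $\mathcal{J}(s,T,0;0,u_2-\tilde u_2)$, which is $\geq 0$ by hypothesis (\ref{Assump_cvx1}) applied to $u_2-\tilde u_2\in L_w^2([s,T];\mathbb{R}^{m_2})$; this gives (\ref{e.621}). Taking $u_2=\tilde u_2$ and invoking the nonpositivity hypothesis (\ref{Assump_ccv1}) yields (\ref{e.622}) symmetrically. The only nontrivial step is the matching of $\mathcal{Q}$ with $\hat{\mathcal{Q}}$ in part i); the main obstacle there is the bookkeeping needed to track the fluctuation/mean decomposition simultaneously through the state equations and through the feedback expressions for $\tilde u^1,\tilde u^2$. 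Once this matching is verified, ii) and iii) are immediate.
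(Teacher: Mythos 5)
Your proof is correct, but it takes a genuinely different route from the paper's. You derive the decomposition (\ref{e.620}) as a corollary of the completion-of-squares representation (\ref{e.48}) of Lemma \ref{L2.1}: you apply (\ref{e.48}) once with data $(s,x_s)$ and controls $(u_1,u_2)$ and once with data $(s,0)$ and controls $(u_1-\tilde u_1,u_2-\tilde u_2)$, then match the two integrands using linearity and uniqueness for the IVPs (\ref{e.26})--(\ref{e.31}) (which give $\hat x^1_{u-\tilde u}=x^1_u-\tilde x^1$, $\hat x^2_{u^2-\tilde u^2}=x^2_{u^2}-\tilde x^2$) together with the feedback identities $\tilde u^1=F\tilde x^1$, $\tilde u^2=\hat F\tilde x^2$ from (\ref{e.604})--(\ref{e.605}); the cancellation $(u-\tilde u)^1-F\hat x^1_{u-\tilde u}=u^1-Fx^1_u$ you invoke does check out, as does the implicit fact that $\mathbb{E}[\tilde u]=\tilde u^2$ (since $\mathbb{E}[\tilde x^1]=0$ and $\tilde x^2$ is deterministic), so the fluctuation/mean split of $u-\tilde u$ is indeed $(u^1-\tilde u^1,\,u^2-\tilde u^2)$. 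The paper instead proves (\ref{e.620}) directly and self-containedly: it splits $\mathcal{J}=\mathcal{J}^1+\mathcal{J}^2$ along the fluctuation/mean decomposition, expands each piece quadratically around $(\tilde u,\tilde x)$, identifies the pure $\tilde u$-term with $\frac{1}{2}\langle\hat X(s)x_s,x_s\rangle$ via It\^o's formula applied to $\langle X(t)\tilde x^1,\tilde x^1\rangle+\langle\hat X(t)\tilde x^2,\tilde x^2\rangle$, and shows the cross terms $\Gamma^1_T+\Gamma^2_T$ cancel against the boundary cross terms via It\^o's formula applied to the bilinear form $\langle X(t)\tilde x^1,\hat x^1\rangle+\langle\hat X(t)\tilde x^2,\hat x^2\rangle$ and the Riccati equations (\ref{e.9})--(\ref{e.10}). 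Your version is shorter and avoids the cross-term bookkeeping, at the price of outsourcing all the It\^o work to Lemma \ref{L2.1}, which the paper only sketches; the paper's version is longer but makes the bilinear cancellation explicit and keeps the appendix self-contained. Your handling of parts ii) and iii) by substituting $u_1=\tilde u_1$ (resp. $u_2=\tilde u_2$) into (\ref{e.620}) and invoking the sign hypotheses is exactly the paper's argument.
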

\begin{proof}
(i)  \; First note that $\mathcal{J}(s,T,x_s,u_1,u_2)=\mathcal{J}^1(s,T,0,u_1^1,u_2^1)+\mathcal{J}^2(s,T,x_s,u_1^2,u_2^2)$ where:
\begin{align}\label{J1}
\mathcal{J}^1(s,T,0,u_1^1,u_2^1)=\frac{1}{2}\mathbb{E}\left\{\left\langle G_Tx_u^1(T), x_u^1(T)\right\rangle+\int_s^T\left(\begin{array}{c}x_u^1(t) \\u_1^1(t) \\u_2^1(t)\end{array}\right)^\top\left(\begin{array}{ccc}M(t) & L_1(t) & L_2(t) \\\star & R_{11}(t) & R_{12}(t) \\\star & \star & R_{22}(t)\end{array}\right)\star dt\right\}
\end{align}
and
\begin{align}\label{J2}
\mathcal{J}^2(s,T,x_s,u_1^2,u_2^2)&=\frac{1}{2}\left\{\left\langle \hat{G}_Tx_u^2(T), x_u^2(T)\right\rangle+\int_s^T\left(\begin{array}{c}x_u^2(t) \\u_1^2(t) \\u_2^2(t)\end{array}\right)^\top\left(\begin{array}{ccc}\hat{M}(t) & \hat{L}_1(t) & \hat{L}_2(t) \\\star & \hat{R}_{11}(t) & \hat{R}_{12}(t) \\\star & \star & \hat{R}_{22}(t)\end{array}\right)\star dt\right\}
\end{align}
Let us first consider $\mathcal{J}^1(s,T,0,u_1^1,u_2^1)$. Since $x_u^1(t)=\tilde{x}_u^1(t)+\hat{x}_u^1(t)$, one obtains by direct computation that:
\begin{align}
&\mathcal{J}^1(s,T,0,u_1^1,u_2^1)=\mathcal{J}^1(s,T,0,\tilde{u}_1^1,\tilde{u}_2^1)+\mathcal{J}^1(s,T,0,(u_1^1-\tilde{u}_1^1),(u_2^1-\tilde{u}_2^1))+\mathbb{E}\left\{\left\langle G_T\hat{x}^1(T), \tilde{x}^1(T)\right\rangle\right\}\notag\\
&+\mathbb{E}\int_s^T\left\{\left\langle M(t)\hat{x}^1(t),\tilde{x}^1(t)\right\rangle+\left\langle L(t)\hat{x}^1(t),\tilde{u}^1(t)\right\rangle+\left\langle L(t)\tilde{x}^1(t),(u^1(t)-\tilde{u}^1(t))\right\rangle\right.\notag\\
&\left.+\left\langle R(t)\tilde{u}^1(t),(u^1(t)-\tilde{u}^1(t))\right\rangle\right\}dt
\end{align}
In a similar way, one can show that:
\begin{align}
&\mathcal{J}^2(s,T,x_s,u_1^2,u_2^2)=\mathcal{J}^2(s,T,x_s,\tilde{u}_1^2,\tilde{u}_2^2)+\mathcal{J}^2(s,T,0,(u_1^2-\tilde{u}_1^2),(u_2^2-\tilde{u}_2^2))+\left\langle \hat{G}_T\hat{x}^2(T), \tilde{x}^2(T)\right\rangle\notag\\
&+\int_s^T\left\{\left\langle \hat{M}(t)\hat{x}^2(t),\tilde{x}^2(t)\right\rangle+\left\langle \hat{L}(t)\hat{x}^2(t),\tilde{u}^2(t)\right\rangle+\left\langle \hat{L}(t)\tilde{x}^2(t),(u^2(t)-\tilde{u}^2(t))\right\rangle\right.\notag\\
&\left.+\left\langle \hat{R}(t)\tilde{u}^2(t),(u^2(t)-\tilde{u}^2(t))\right\rangle\right\}dt
\end{align}
Using the It\^o formula to systems (\ref{e.28}), (\ref{e.29}) and to the function\\ $\tilde{v}(t,\tilde{x}^1,\tilde{x}^2)=\frac{1}{2}\left(\left\langle X(t)\tilde{x}^1,\tilde{x}^1\right\rangle+\left\langle \hat{X}(t)\tilde{x}^2,\tilde{x}^2\right\rangle\right)$, $t\in [s,\; T]$, $\tilde{x}^1\in \mathbb{R}^n$, $\tilde{x}^2\in \mathbb{R}^n$ and taking into account the Riccati equations (\ref{e.9}) and (\ref{e.10}) we get:
\begin{equation}
\mathcal{J}^1(s,T,0,\tilde{u}_1^1,\tilde{u}_2^1)+\mathcal{J}^2(s,T,x_s,\tilde{u}_1^2,\tilde{u}_2^2)=\frac{1}{2}\left\langle \hat{X}(s) x_s, x_s\right\rangle
\end{equation}
Let:
\begin{align}
\Gamma_T^1&=\mathbb{E}\int_s^T\left\{\left\langle M(t)\hat{x}^1(t),\tilde{x}^1(t)\right\rangle+\left\langle L(t)\hat{x}^1(t),\tilde{u}^1(t)\right\rangle+\left\langle L(t)\tilde{x}^1(t),(u^1(t)-\tilde{u}^1(t))\right\rangle\right.\notag\\
&\left.+\left\langle R(t)\tilde{u}^1(t),(u^1(t)-\tilde{u}^1(t))\right\rangle\right\}dt
\end{align}
\begin{align}
\Gamma_T^2&=\int_s^T\left\{\left\langle \hat{M}(t)\hat{x}^2(t),\tilde{x}^2(t)\right\rangle+\left\langle \hat{L}(t)\hat{x}^2(t),\tilde{u}^2(t)\right\rangle+\left\langle \hat{L}(t)\tilde{x}^2(t),(u^2(t)-\tilde{u}^2(t))\right\rangle\right.\notag\\
&\left.+\left\langle \hat{R}(t)\tilde{u}^2(t),(u^2(t)-\tilde{u}^2(t))\right\rangle\right\}dt
\end{align}
Using the It\^o formula to systems (\ref{e.28})-(\ref{e.29}), (\ref{e.30})-(\ref{e.31}) and to the function\\ $\tilde{v}(t,\tilde{x}^1,\hat{x}^1,\tilde{x}^2,\hat{x}^2)=\left\langle X(t)\tilde{x}^1,\hat{x}^1\right\rangle+\left\langle \hat{X}(t)\tilde{x}^2,\hat{x}^2\right\rangle$, $t\in [s,\; T]$, $\tilde{x}^1\in \mathbb{R}^n$, $\hat{x}^1\in \mathbb{R}^n$, $\tilde{x}^2\in \mathbb{R}^n$, $\hat{x}^2\in \mathbb{R}^n$ and taking into account the Riccati equations (\ref{e.9}) and (\ref{e.10}) we get:
\begin{equation}
\Gamma_T^1+\Gamma_T^2=-\mathbb{E}\left\{\left\langle G_T\hat{x}^1(T), \tilde{x}^1(T)\right\rangle\right\}-\left\{\left\langle \hat{G}_T\hat{x}^2(T), \tilde{x}^2(T)\right\rangle\right\}
\end{equation}
This ends the proof of this implication.\\
\noindent (ii) \; The inequality (\ref{e.621}) is obtained from (\ref{e.620}) written for $u_1(\cdot)$ replaced by $\tilde u_1(\cdot)$ and taking into account the condition (\ref{Assump_cvx1}).\\
\noindent (iii) \;The inequality (\ref{e.622}) is obtained from (\ref{e.620}) written for $u_2(\cdot)$ replaced by $\tilde u_2(\cdot)$ and taking into account the condition (\ref{Assump_ccv1}). This ends the proof.
\end{proof}

\end{document}